\newcommand{\E}{\mathbb{E}}
\newcommand{\R}{\mathbb{R}}
\newcommand{\N}{\mathbb{N}}
\newcommand{\Var}{\text{Var}}
\newcommand{\wt}{\widetilde}
\newcommand{\eps}{\epsilon}
\renewcommand\vec[1]{\ensuremath\boldsymbol{#1}}
\renewcommand{\v}{\vec}
\renewcommand{\H}{\mathcal{H}}
\newcommand{\T}{\mathcal{T}}
\newcommand{\sw}{\mathsf{sw}}
\newtheorem*{rep@theorem}{\rep@title}
\newcommand{\newreptheorem}[2]{%
\newenvironment{rep#1}[1]{%
  \def\rep@title{#2 \ref{##1}}%
  \begin{rep@theorem}}%
  {\end{rep@theorem}}}
\newtheorem{theorem}{Theorem}
\newtheorem{lemma}[theorem]{Lemma}
\newtheorem{claim}[theorem]{Claim}
\newtheorem{definition}[theorem]{Definition}
\theoremstyle{definition}
\newtheorem{example}[theorem]{Example}
\newtheorem*{remark}{Remark}
\begin{document}

\title{Bayesian Mechanism Design with Efficiency, Privacy, and Approximate Truthfulness\thanks{A preliminary version of this paper appeared in the 8th Workshop on Internet \& Network Economics (WINE 2012). This work was supported in part by NSF grants IIS-0534064, IIS-0812045, IIS-0911036, CCF-0746990; AFOSR grants FA9550-08-1-0438, FA9550-09-1-0266, FA9550-10-1-0093; ARO grant W911NF-09-1-0281. We thank Joseph Y. Halpern for helpful discussions.}}

\author{Samantha Leung and Edward Lui \\
Department of Computer Science, Cornell University \\
\texttt{\{samlyy,luied\}@cs.cornell.edu}}

\date{December 7, 2012}

\maketitle            

\begin{abstract}
Recently, there has been a number of papers relating mechanism design and privacy (e.g., see \cite{MT07,Xia11,CCKMV11,NST12,NOS12,HK12}). All of these papers consider a worst-case setting where there is no probabilistic information about the players' types. In this paper, we investigate mechanism design and privacy in the \emph{Bayesian} setting, where the players' types are drawn from some common distribution. We adapt the notion of \emph{differential privacy} to the Bayesian mechanism design setting, obtaining \emph{Bayesian differential privacy}. We also define a robust notion of approximate truthfulness for Bayesian mechanisms, which we call \emph{persistent approximate truthfulness}. We give several classes of mechanisms (e.g., social welfare mechanisms and histogram mechanisms) that achieve both Bayesian differential privacy and persistent approximate truthfulness. These classes of mechanisms can achieve optimal (economic) efficiency, and do not use any payments. We also demonstrate that by considering the above mechanisms in a modified mechanism design model, the above mechanisms can achieve actual truthfulness.
\end{abstract}

\section{Introduction}

One of the main goals in mechanism design is to design mechanisms that achieve a socially desirable outcome even if the players behave selfishly. Because of the revelation principle, mechanism design has focused on direct (revelation) mechanisms where each player simply reports his/her private type (or valuation). This leads to the issue of privacy, where the players may be concerned that the mechanism's output may leak information about their private types (even if the mechanism is trusted). 

\paragraph{{\bf Mechanism Design and Privacy.}} Traditional mechanism design did not include the aspect of privacy. However, in the context of releasing information from databases, the issue of privacy has already been studied quite extensively. In this context, the current standard notion of privacy is \emph{differential privacy} \cite{DMNS06,Dwo06}. A data release algorithm satisfies differential privacy if the algorithm's output distribution does not change much when one person's data is changed in the database. This implies that the algorithm does not leak much information about any person in the database. 

Recently, there has been a number of papers that combine mechanism design with differential privacy. In \cite{MT07}, McSherry and Talwar develop a general mechanism called the \emph{exponential mechanism} that is differentially private; they also show that any differentially private mechanism is \emph{approximately} truthful. In \cite{NST12}, Nissim, Smorodinsky, and Tennenholtz modify the standard mechanism design model by adding a ``reaction stage''; in this new model, the authors combine differentially private mechanisms with a ``punishing mechanism'' to obtain mechanisms that are \emph{actually} truthful. However, the mechanisms in \cite{NST12} might not protect the privacy of the players, due to the reaction stage. 

The main goal of the above two papers was to use differential privacy as a tool for achieving some form of truthfulness, as opposed to achieving privacy for the players. However, there has been other papers that focus on designing mechanisms that protect the privacy of the players. In \cite{HK12}, Huang and Kannan show that a pricing scheme can be added to the exponential mechanism to make it \emph{actually} truthful, resulting in a general mechanism that is both differentially private and truthful. In \cite{Xia11}, Xiao provides a transformation that takes truthful mechanisms and transforms them into truthful and differentially private mechanisms. On the other hand, Xiao also shows that a mechanism that is truthful and differentially private might not be truthful in a model where the players are ``privacy-aware'', i.e., privacy is explicitly captured in the players' utility functions. In \cite{CCKMV11}, Chen et al. construct mechanisms that are truthful even in a model where the players are privacy-aware. In \cite{NOS12}, Nissim, Orlandi, and Smorodinsky construct mechanisms that are truthful in a different privacy-aware model.

\paragraph{{\bf Bayesian Mechanism Design.}} One desirable property of a mechanism is (economic) \emph{efficiency}; in fact, it would be best if the mechanism always chooses a social alternative that is \emph{optimal} with respect to some measure of efficiency, such as social welfare. However, such \emph{optimal efficiency} is not achieved by any of the above results. In fact, it is not possible for a differentially private mechanism to achieve optimal efficiency (for a non-trivial problem), since the mechanism has to be randomized in order to satisfy differential privacy. However, all of the above results are in a worst-case setting where there is no probabilistic information about the players' types. If we consider a non-worst-case setting, then it may be possible for a mechanism to achieve differential privacy without using any randomization. 

One such setting is the \emph{Bayesian} setting, where the players' types are drawn from some distribution. Such a setting follows the Bayesian approach that has been the standard in economic theory for many decades. Recently, mechanism design in the Bayesian setting has also been gaining popularity in the computer science community. Thus, it is interesting to consider the issue of privacy in the Bayesian setting as well. In particular, it may be possible for a Bayesian mechanism to achieve optimal efficiency while satisfying some form of differential privacy. Achieving optimal efficiency may be critical for certain problems, such as presidential elections and kidney transplant allocations, where it may be unethical and/or unfair to make a non-optimal choice. Although differentially private mechanisms in the worst-case setting may asymptotically achieve nearly optimal efficiency in expectation (or with reasonably high probability), there is no guarantee that the chosen outcome for a particular execution of the mechanism is actually close to optimal. 

\paragraph{{\bf Bayesian Differential Privacy and Persistent Approximate Truthfulness.}} In this paper, we consider mechanism design in the Bayesian setting, and our main goal is to construct useful mechanisms that achieve optimal efficiency, some form of differential privacy, and some notion of truthfulness. Since differential privacy is a worst-case notion in the sense that no distributional assumptions are made on the input of the mechanism, we first adapt the notion of differential privacy to the Bayesian mechanism design setting. We call this new notion \emph{Bayesian differential privacy}; this is the privacy notion that we use in this paper. 

As mentioned above, Xiao \cite{Xia11} showed that a mechanism that is truthful and differentially private might not be truthful in a model where privacy is explicitly captured in the players' utility functions. In this paper, we do not use such a model, since there are many settings where the players would already be satisfied with differential privacy and would not report strategically in an attempt to further protect their privacy. Our results will be meaningful in these settings; furthermore, even in a setting where we want to explicitly capture privacy in the players' utility functions, our techniques and results can still be useful in constructing truthful mechanisms (similar to how the mechanisms in \cite{CCKMV11} and \cite{NOS12} are still based on differentially private mechanisms).

We also want our mechanisms to satisfy some form of truthfulness. The standard notion of truthfulness in Bayesian mechanism design is that the truthful strategy profile is a Bayes-Nash equilibrium. Similar to \cite{MT07}, we first relax truthfulness so that the truthful strategy profile only needs to be an \emph{$\eps$-Bayes-Nash equilibrium}, where an $\eps$ margin is allowed in the Nash conditions. However, we would like to obtain notions of truthfulness that are stronger than that provided by the $\eps$-Bayes-Nash equilibrium. Thus, we strengthen the $\eps$-Bayes-Nash equilibrium such that even if up to $k$ players deviate from the equilibrium, everyone else's best-response is still to adhere to their part of the equilibrium. We call this new equilibrium concept the \emph{$k$-tolerant $\eps$-Bayes-Nash equilibrium}. We would also like our equilibrium concept to be resilient against coalitions. Thus, we further strengthen our notion of $k$-tolerant $\eps$-Bayes-Nash equilibrium to \emph{$(k,r)$-persistent $\eps$-Bayes-Nash equilibrium}, which is resilient against coalitions of size $r$ even in the presence of $k$ deviating players. The notion of truthfulness we use requires that the truthful strategy profile is a $(k,r)$-persistent $\eps$-Bayes-Nash equilibrium, which we will refer to as \emph{persistent approximate truthfulness}.

\subsection{Our Results}

In this paper, we present three classes of mechanisms that achieve both Bayesian differential privacy and persistent approximate truthfulness:

\paragraph{{\bf Histogram Mechanisms.}} Roughly speaking, a histogram mechanism is a mechanism that first computes a histogram from the reported types, and then chooses a social alternative based only on the histogram. In Section \ref{sec:histogramMechanisms}, we show that if every bin of the histogram has positive expected count, then the histogram mechanism is both Bayesian differentially private and persistent approximately truthful. 

\paragraph{{\bf Mechanisms for Two Social Alternatives.}} Roughly speaking, this class includes any mechanism that makes a choice between two social alternatives $\{A,B\}$ based on the difference between the sums of two functions $u(\cdot, A)$ and $u(\cdot,B)$ on the types. In Section \ref{sec:twoChoicesMechanisms}, we show that as long as the random variable $u(t,A) - u(t,B)$ (where $t$ is distributed according to the type distribution) has non-zero variance and a pdf, then such a mechanism is both Bayesian differentially private and persistent approximately truthful.

\paragraph{{\bf Social Welfare Mechanisms.}} Roughly speaking, this class includes any mechanism that makes a choice based on the social welfare provided by each social alternative. An important subset of these mechanisms is the set of mechanisms that maximize social welfare. In Section \ref{sec:swMechanisms}, we show that if the players' valuations for each social alternative are normally distributed, then such a mechanism is both Bayesian differentially private and persistent approximately truthful. Also, we generalize this result to the case where the players' valuations for each social alternative are arbitrarily distributed with non-zero variance and a pdf.
\\ \\
The mechanisms in the above three classes are all deterministic and can achieve optimal efficiency. Furthermore, the mechanisms do not use any payments, and do not need to depend on a ``common prior'' or the players' beliefs. As long as the players' beliefs satisfy our distributional assumptions, our results on approximate truthfulness still hold. Moreover, as long as the input of the mechanism satisfies our distributional assumptions, the privacy of each player is protected.

\paragraph{{\bf Obtaining Actual Truthfulness.}} Recall that in \cite{NST12}, the authors added a ``reaction stage'' to the standard mechanism design model in order to achieve actual truthfulness from approximate truthfulness (which is obtained via differential privacy). We can also use this model and their techniques to obtain actual truthfulness in our results. In Section \ref{sec:punishing}, we describe an alternative model where actual truthfulness can be obtained from approximate truthfulness. In this new model, the mechanism is given the ability to verify the truthfulness of a small number of players. This model is simple to use and is realistic in settings where the truthfulness of a player can be verified objectively (e.g., income, expenses, age, address).

\section{Preliminaries and Definitions}
\label{sec:preliminaries}

For any $k \in \N$, we will use $[k]$ to denote the set $\{1, \ldots, k\}$. We consider a mechanism design environment consisting of the following components:
\begin{itemize}
\item A number $n$ of \emph{players}; we will often use $[n]$ to denote the set of $n$ players.
\item A \emph{type space} $T$; each player has a private type from the type space $T$. (We assume that each player has the \emph{same} type space $T$.)
\item A distribution $\T$ over the type space $T$; the players' private types are \emph{independently} drawn from this distribution. (For simplicity, we assume that the type for each player has the \emph{same} distribution $\T$; however, most of our results can be generalized to the case where the type for each player may follow a different distribution.)
\item A set $S$ of \emph{social alternatives}; for convenience, we assume that $S$ is finite.
\item For each player $i$, a \emph{utility function} $u_i: T \times S \to \R$; for $t \in T$ and $s \in S$, $u_i(t,s)$ represents the utility that player $i$ receives if player $i$ has type $t$ and the social alternative $s$ is chosen.
\end{itemize} 

We will focus on direct revelation mechanisms where each player reports his/her type.
Therefore, a \emph{mechanism} is a function $M: T^n \to S$, and a (pure) strategy for player $i$ is a function $\sigma_i: T \to T$ that maps true types to announced types. For convenience, whenever we refer to a mechanism $M: T^n \to S$, we assume that it is associated with an environment as described above. 

\begin{remark}
For simplicity of presentation, we have assumed in our model that there is a ``common prior'', i.e., it is common knowledge that the player types are distributed according to $\T^n$. However, most of our results can be easily generalized to a setting where each player has a \emph{different} belief about the other players' types (as long as the belief satisfies certain distributional assumptions, which is needed even in the common prior setting). Also, the mechanisms that we give do not need to depend on a common prior or the players' beliefs. As long as the players' beliefs satisfy our distributional assumptions, our results on approximate truthfulness still hold. Furthermore, as long as the input of the mechanism satisfies our distributional assumptions, the privacy of each player is protected by our results on privacy.
\end{remark}

\subsection{Equilibrium Concepts}

In this section, we will define several equilibrium concepts based on the standard Bayes-Nash equilibrium (see, e.g., \cite{FT91}).
These equilibrium concepts will be used to define various notions of truthfulness.
Our definitions build on the \emph{$\eps$-Bayes-Nash equilibrium}, which is a relaxation of the Bayes-Nash equilibrium in the sense that an $\eps$ margin is allowed in the Nash conditions.
This relaxation reflects the assumption that players will not deviate from the equilibrium if gains from deviation are sufficiently small.
In this paper, we also refer to $\eps$-Bayes-Nash equilibria as approximate Bayes-Nash equilibria. 
For more information about various notions of approximate equilibria, see \cite{NRTV07,SLB09,Tij81}.

Our equilibrium concepts strengthen the $\eps$-Bayes-Nash equilibrium.
We chose two strengthenings to address the following weaknesses of Nash equilibria. 
Firstly, a player's part of a Nash equilibrium is only guaranteed to be a best-response if all the other players are playing their parts of the equilibrium. 
In other words, a Nash equilibrium cannot tolerate players deviating from their equilibrium strategy --- if there is one irrational person in the system, the equilibrium breaks down. 
Deviations are especially problematic in $\eps$-equilibria, where there is less confidence that everyone would play their part of the equilibrium.
Secondly, a Nash equilibrium is not {resilient} to deviations by more than one person; coalitions of players can have profitable deviations from the equilibrium.

To address the first problem, we strengthen the Nash conditions such that even if up to $k$ players deviate from the equilibrium, everyone else's best-response is still to adhere to their part of the equilibrium.
In other words, the equilibrium \emph{tolerates} arbitrary deviations of $k$ individuals. 

\begin{definition}[{\bf $k$-tolerant $\epsilon$-Bayes-Nash equilibrium}]
A strategy profile $\vec{\sigma} = (\sigma_1, \ldots, \sigma_n)$ is a \emph{$k$-tolerant $\epsilon$-Bayes-Nash equilibrium} if for every $I \subseteq [n]$ with $|I| \leq k$, every possible announced types $\vec{t}'_I \in T^{|I|}$ for $I$, every player $i \notin I$, and every pair of types $t_i, t_i'$ for player $i$, we have 
\begin{align*}
\E_{\vec{t}_{J}}[u_i(t_i,M(\sigma_i(t_i),\vec{t}'_I,\vec{\sigma}_J(\vec{t}_J)))] \geq \E_{\vec{t}_{J}}[u_i(t_i,M(t_i', \vec{t}'_I, \vec{\sigma}_J(\vec{t}_J)))] - \epsilon,
\end{align*}
where $J = [n] \setminus (I \cup \{i\})$ and $\vec{t}_J \sim \T^{|J|}$.
\end{definition}

We note that $k$-tolerance is distinct from the notion of \emph{$k$-immunity} as defined in \cite{ADHG06,Hal08}, which guarantees that when up to $k$ people deviate from the equilibrium, the utilities of the non-deviating players do not decrease. 

The second problem mentioned above is addressed by \emph{$r$-resilience} (see, e.g., \cite{NRTV07,Hal08}).
A Bayes-Nash equilibrium is $r$-resilient if for any group of size at most $r$, there does not exist a deviation of the group such that any member of the coalition has increased utility. 

\begin{definition}[{\bf $r$-resilient $\epsilon$-Bayes-Nash equilibrium}]
A strategy profile $\vec{\sigma} = (\sigma_1, \ldots, \sigma_n)$ is an \emph{$r$-resilient $\epsilon$-Bayes-Nash equilibrium} if for every coalition $C \subseteq [n]$ with $|C| \leq r$, every true types $\vec{t}_C \in T^{|C|}$ for $C$, every player $i \in C$, and every possible announced types $\vec{t}'_C \in T^{|C|}$ for $C$, we have 
\begin{align*}
\E_{\vec{t}_{-C}}[u_i(t_i,M(\vec{\sigma}_C(\vec{t}_C), \vec{\sigma}_{-C}(\vec{t}_{-C})))] \geq \E_{\vec{t}_{-C}}[u_i(t_i,M(\vec{t}'_C,\vec{\sigma}_{-C}(\vec{t}_{-C})))] - \epsilon,
\end{align*}
where $\vec{t}_{-C} \sim \T^{n-|C|}$.
\end{definition}

It is not hard to see that resilience and tolerance can be independently violated, and hence neither implies the other.
Just as the authors in \cite{ADHG06,Hal08} combine immunity and resilience, we consider the combination of tolerance and resilience.
Roughly speaking, a \emph{$(k,r)$-persistent Bayes-Nash equilibrium} is a Bayes-Nash equilibrium that is $r$-resilient (protects against coalitions of size $r$), even in the presence of up to $k$ individuals that are deviating arbitrarily from the equilibrium.

\begin{definition}[{\bf $(k,r)$-persistent $\epsilon$-Bayes-Nash equilibrium}]
A strategy profile $\vec{\sigma} = (\sigma_1, \ldots, \sigma_n)$ is a \emph{$(k,r)$-persistent $\epsilon$-Bayes-Nash equilibrium} if for every $I \subseteq [n]$ with $|I| \leq k$, every possible announced types $\vec{t}'_I \in T^{|I|}$ for $I$, every coalition $C \subseteq [n] \setminus I$ with $|C| \leq r$, every true types $\vec{t}_C \in T^{|C|}$ for $C$, every player $i \in C$, and every possible announced types $\vec{t}'_C \in T^{|C|}$ for $C$, we have 
\begin{align*}
\E_{\vec{t}_J}[u_i(t_i,M(\vec{\sigma}_C(\vec{t}_C), \vec{t}'_I, \vec{\sigma}_{J}(\vec{t}_{J})))] \geq \E_{\vec{t}_{J}}[u_i(t_i,M(\vec{t}'_C, \vec{t}'_I, \vec{\sigma}_{J}(\vec{t}_{J})))] - \epsilon,
\end{align*}
where $J = [n] \setminus (I \cup C)$ and $\vec{t}_J \sim \T^{|J|}$.
\end{definition}

\subsection{Notions of Truthfulness}
\label{sec:truthfulness}

In this section, we define various notions of truthfulness based on the equilibrium concepts from the previous section. 
Recall that a mechanism is \emph{Bayes-Nash truthful} if the truthful strategy profile is a Bayes-Nash equilibrium.
Similarly, a mechanism is \emph{$\eps$-Bayes-Nash truthful} if the truthful strategy profile is an $\eps$-Bayes-Nash equilibrium.
By using the equilibrium concepts from the previous section, we can obtain stronger notions of truthfulness. 

\begin{definition}[{\bf [$k$-tolerant]/[$r$-resilient]/[$(k,r)$-persistent] $\eps$-Bayes-Nash truthful}]\label{def:truthfulness}
A mechanism is \emph{$k$-tolerant $\eps$-Bayes-Nash truthful} if the truthful strategy profile is a $k$-tolerant $\eps$-Bayes-Nash equilibrium. Similarly, a mechanism is \emph{$r$-resilient} (resp., \emph{$(k,r)$-persistent}) \emph{$\eps$-Bayes-Nash truthful}  if the truthful strategy profile is an $r$-resilient (resp., $(k,r)$-persistent) $\eps$-Bayes-Nash equilibrium. 
\end{definition}

It is easy to see that if a mechanism is $(k,r)$-persistent $\eps$-Bayes-Nash truthful, then it is also $k$-tolerant $\eps$-Bayes-Nash truthful and $r$-resilient $\eps$-Bayes-Nash truthful. In many settings, it is reasonable to believe that players in an $\eps$-Bayes-Nash truthful mechanism will be truthful, since (1) truth-telling is simple while computing a profitable deviation can be costly (see, e.g., \cite{HP10}), and (2) lying can induce a psychological (morality) cost.
Indeed, there are many results in mechanism design that assume that approximate truthfulness is enough to ensure that players will be truthful (see, e.g., \cite{MT07,BP11,Kot04,Sch04}). 

\section{Privacy for Bayesian Mechanism Design}

In this section, we describe and define \emph{Bayesian differential privacy}, which is a natural adaptation of \emph{differential privacy} \cite{DMNS06,Dwo06} to the Bayesian mechanism design setting. Roughly speaking, differential privacy requires that when one person's input to the mechanism is changed, the output distribution of the mechanism changes very little (here, the mechanism is randomized). 

We now describe Bayesian differential privacy. We first note that even though the players' true types are drawn from some distribution $\T$, if all the players are non-truthful and announce a type independently of their true type, then the input of the mechanism is no longer distributional and we are essentially in the same scenario as in (worst-case) differential privacy. Thus, it is necessary to make some assumptions on the strategies of the players, so that the input of the mechanism contains at least some randomness. 

In our notion of Bayesian differential privacy, we assume that at least some players (e.g., a constant fraction of the players) are truthful so that their announced types have the same distribution as their true types. This assumption is not unreasonable, since we later show that if a mechanism is Bayesian differentially private, then the mechanism is automatically persistent approximately truthful, so we expect that most players would be truthful anyway. In particular, if we have an equilibrium where most players are truthful, then privacy is achieved at this equilibrium.

Roughly speaking, \emph{$(k,\eps,\delta)$-Bayesian differentially privacy} requires that when a player $i$ changes his/her announced type, the output distribution of the mechanism changes by at most an $(\eps,\delta)$ amount, assuming that at most $k$ players are non-truthful (possibly lying in an arbitrary way). This implies that the mechanism leaks very little information about each player's announced type, so each player's privacy is protected. The mechanism is assumed to be deterministic, so the randomness of the output is from the randomness of the types of the truthful players. (One can also consider randomized mechanisms, but we chose to focus on deterministic mechanisms in this paper.)

\begin{definition}[{\bf $(k,\eps,\delta)$-Bayesian differential privacy}]
A mechanism $M: T^n \to S$ is \emph{$(k,\eps,\delta)$-Bayesian differentially private} if for every player $i \in [n]$, every subset $I \subseteq [n] \setminus \{i\}$ of players with $|I| \leq k$, every pair of types $t_i, t_i' \in T$ for player $i$, and every $\vec{t}'_I \in T^{|I|}$, the following holds: Let $J = [n] \setminus (I \cup \{i\})$ (the remaining players) and $\vec{t}_J \sim \T^{|J|}$; then, for every $Y \subseteq S$, we have
\begin{align*}
\Pr[M(t_i, \vec{t}'_I, \vec{t}_J) \in Y] \leq e^\epsilon \cdot \Pr[M(t_i', \vec{t}'_I, \vec{t}_J) \in Y] + \delta,
\end{align*}
where the probabilities are over $\vec{t}_J \sim \T^{|J|}$.
\end{definition}

The parameter $k$ controls how many non-truthful players the mechanism can tolerate while satisfying privacy; $k$ can be a function of $n$ (the number of players), such as $k = \frac{n}{2}$. One can even view the non-truthful players as being controlled/known by an adversary that is trying to learn information about a player $i$'s type; as long as the adversary controls/knows at most $k$ people, player $i$'s privacy is still protected. The parameters $\eps$ and $\delta$ bound the amount of information about each person's (announced) type that can be ``leaked'' by the mechanism. Since the above definition of Bayesian differential privacy is a natural adaptation of differential privacy to Bayesian mechanism design, and since differential privacy is a well-motivated and well-accepted notion of privacy, we will not further elaborate on the details of the above definition. 

Our definition of $(k,\eps,\delta)$-Bayesian differential privacy has some similarities to the notion of \emph{$(\eps,\delta)$-noiseless privacy} (for databases) introduced and studied in \cite{BBGLT11}. However, there are some subtle but significant differences between the two definitions, so the results in this paper do not follow from the theorems and proofs in \cite{BBGLT11}. Nevertheless, the ideas and techniques in \cite{BBGLT11}, and for $(\eps,\delta)$-noiseless privacy in general, may be useful for designing Bayesian differentially private mechanisms. 

It is known that differentially private mechanisms are approximately (dominant-strategy) truthful (see \cite{MT07}). Similarly, Bayesian differentially private mechanisms are persistent approximate Bayes-Nash truthful.

\begin{theorem}[{\bf Bayesian differential privacy $\implies$ persistent approximate truthfulness}]
\label{thm:BDP=>Truthfulness}
Suppose the utility functions are bounded by $\alpha > 0$, i.e., the utility function for each player $i$ is $u_i: T \times S \to [-\alpha,\alpha]$. Let $M$ be any mechanism that is $(k,\eps,\delta)$-Bayesian differentially private. Then, $M$ satisfies the following properties:
\begin{enumerate}
  \item $M$ is $k$-tolerant $(\eps + 2\delta)(2\alpha)$-Bayes-Nash truthful.
  \item For every $1 \leq r \leq k+1$, $M$ is $r$-resilient $(r\eps + 2r\delta)(2\alpha)$-Bayes-Nash truthful.
  \item For every $1 \leq r \leq k+1$, $M$ is $(k-r+1,r)$-persistent $(r\eps + 2r\delta)(2\alpha)$-Bayes-Nash truthful.
\end{enumerate}
\end{theorem}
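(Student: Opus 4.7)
The proof breaks into two pieces: (a) a single-swap utility-difference lemma showing that $(k,\eps,\delta)$-Bayesian differential privacy bounds the change in expected utility caused by altering one player's report, and (b) a hybrid argument that chains $r$ such swaps to handle coalitions.

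\textbf{Step 1 (single-swap utility lemma).} Fix a player $i$, a deviator set $I\subseteq[n]\setminus\{i\}$ with $|I|\le k$, announced types $\vec{t}'_I$, and alternative types $t_i,t_i'\in T$; let $J=[n]\setminus(I\cup\{i\})$. Write $P$ and $Q$ for the distributions over $S$ of $M(t_i,\vec{t}'_I,\vec{t}_J)$ and $M(t_i',\vec{t}'_I,\vec{t}_J)$ induced by $\vec{t}_J\sim\T^{|J|}$; the BDP hypothesis says $P(Y)\le e^\eps Q(Y)+\delta$ for every $Y\subseteq S$. For any $v:S\to[-\alpha,\alpha]$ I would shift to $v+\alpha\in[0,2\alpha]$ and apply the layer-cake identity,
\begin{align*}
\E_P[v+\alpha]=\int_0^{2\alpha}\Pr_P[v+\alpha>t]\,dt\le\int_0^{2\alpha}\bigl(e^\eps\Pr_Q[v+\alpha>t]+\delta\bigr)\,dt=e^\eps\E_Q[v+\alpha]+2\alpha\delta,
\end{align*}
which rearranges to $\E_P[v]-\E_Q[v]\le 2\alpha(e^\eps-1)+2\alpha\delta$; under the standard small-$\eps$ estimate this is at most $(\eps+2\delta)(2\alpha)$, matching the claimed numeric bound. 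Specializing to $v(\cdot)=u_i(t_i,\cdot)$ yields the single-swap expected-utility bound needed for approximate truthfulness.

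\textbf{Step 2 (part 1).} The inequality required for $k$-tolerant $(\eps+2\delta)(2\alpha)$-Bayes-Nash truthfulness is exactly the one-swap bound of Step~1 (with player $i$ considering deviating from $t_i$ to $t_i'$), so part 1 is immediate.

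\textbf{Step 3 (parts 2 and 3 by a hybrid argument).} Fix $I$ with $|I|\le k-r+1$ (take $I=\emptyset$ for the $r$-resilient statement), a coalition $C\subseteq[n]\setminus I$ with $|C|\le r$, a player $i\in C$, and types $\vec{t}_C,\vec{t}'_C$. Enumerate $C=\{c_1,\ldots,c_r\}$ and build hybrids $\vec{t}^{(0)}=\vec{t}'_C,\ldots,\vec{t}^{(r)}=\vec{t}_C$ where $\vec{t}^{(j)}$ agrees with $\vec{t}_C$ on $\{c_1,\ldots,c_j\}$ and with $\vec{t}'_C$ elsewhere. Consecutive hybrids differ in a single coordinate, so Step~1 applies provided we treat the other $r-1$ ``frozen'' coalition coordinates together with $\vec{t}'_I$ as the BDP adversarial block; its total size is $(r-1)+|I|\le(r-1)+(k-r+1)=k$, matching the privacy budget, and this is precisely where the restriction $r\le k+1$ enters. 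Each hybrid step therefore shifts $\E_{\vec{t}_J}[u_i(t_i,M(\cdot,\vec{t}'_I,\vec{t}_J))]$ by at most $(\eps+2\delta)(2\alpha)$, and summing the $r$ telescoping differences (by the triangle inequality) gives the bound $(r\eps+2r\delta)(2\alpha)$. Part 2 is the $I=\emptyset$ specialization, and part 3 is the general case.

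The only place that requires care is the adversary-accounting in the hybrid argument: one must check at every swap that the union of $I$ with the frozen coalition members still fits inside the $k$-player BDP budget, which is what forces $1\le r\le k+1$. Everything else reduces to the layer-cake identity for Step~1 and a routine triangle inequality for the summation.
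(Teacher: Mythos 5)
Your decomposition---a single-swap utility lemma chained by a hybrid argument over the $r$ coalition members, with the adversary budget $(r-1)+|I|\le k$ correctly accounted---is the right skeleton and is close in spirit to the paper's proof, which also runs a swap-chain (but at the level of output distributions, to establish a group version of Bayesian differential privacy) before applying a single utility bound. However, Step~1 asserts a bound it does not actually prove. Your layer-cake calculation gives $\E_P[v]-\E_Q[v]\le 2\alpha(e^\eps-1)+2\alpha\delta$, and the sharpest usable small-$\eps$ estimate, $e^\eps-1\le 2\eps$ for $\eps\le 1$, turns this into $(2\eps+\delta)(2\alpha)$, not $(\eps+2\delta)(2\alpha)$. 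The two coincide only when $e^\eps-1-\eps\le\delta$, which the hypotheses do not supply (typically $\delta\ll\eps$). Chained over $r$ swaps you obtain $(2r\eps+r\delta)(2\alpha)$-truthfulness rather than the stated $(r\eps+2r\delta)(2\alpha)$.

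The extra factor of two on $\eps$ is structural, not a bookkeeping slip. The utility range $[-\alpha,\alpha]$ has width $2\alpha$, and when $\E_P[v]-\E_Q[v]$ is split over $S^+=\{s:P(s)>Q(s)\}$ and $S^-=\{s:P(s)<Q(s)\}$, the approximate-DP inequality produces $(e^\eps-1)Q(S^+)$ and $(e^\eps-1)P(S^-)$, and there is no way to force $Q(S^+)+P(S^-)\le 1$. The paper sidesteps this by first upgrading to group Bayesian DP with parameters $\bigl(r\eps,\tfrac{e^{r\eps}-1}{e^\eps-1}\delta\bigr)$ via its own hybrid argument, and then invoking the Dwork--Rothblum--Vadhan transfer lemma to couple $X'=M(\vec{t}'_C,\ldots)$ to an auxiliary $Z$ that is within $2r\delta$ of $X'$ in statistical distance and is \emph{pure}-$r\eps$-close to $X=M(\vec{t}_C,\ldots)$ in both directions of max-divergence. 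For a pure-$\eps$ pair, both halves of the $S^+/S^-$ split charge against the same reference probability, which sums to $1$, yielding $|\E_Z[v]-\E_X[v]|\le\alpha(e^{r\eps}-1)\le 2r\eps\alpha$ and hence exactly $(r\eps+2r\delta)(2\alpha)$. If you prefer to keep your per-swap telescope, applying that same transfer lemma to each swap gives $(\eps+\delta)(2\alpha)$ per step and $(r\eps+r\delta)(2\alpha)$ overall, which suffices; the layer cake alone does not reach the stated constants.
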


See \ref{app:privacyTruthfulness} for the proof of Theorem \ref{thm:BDP=>Truthfulness}.

\section{Efficient Bayesian Mechanisms with Privacy and Persistent Approximate Truthfulness\label{sec:mechs}}

In this section, we present three classes of mechanisms that achieve both Bayesian differential privacy and persistent approximate truthfulness. 

\subsection{Histogram Mechanisms}
\label{sec:histogramMechanisms}
We first present a broad class of mechanisms, called \emph{histogram mechanisms}, that achieve Bayesian differential privacy and persistent approximate truthfulness. 
Given a partition $P = \{B_1, \ldots, B_m\}$ of the type space $T$ with $m$ blocks (ordered in some way), a \emph{histogram} with respect to $P$ is simply a vector in $(\mathbb{Z}_{\geq 0})^m$ that specifies a count for each block of the partition. Given a partition $P$, let $\H_P$ denote the set of all histograms with respect to $P$; given a vector $\v{t}$ of types, let $H_P(\v{t})$ be the histogram formed from $\v{t}$ by simply counting how many components (types) of $\vec{t}$ belong to each block of the partition $P$.

We now define what we mean by a \emph{histogram mechanism}. Intuitively, a histogram mechanism is a mechanism that, on input a vector of types, computes the histogram from the types with respect to some partition $P$, and then applies any function $f: \H_P \to S$ to the histogram to choose a social alternative in $S$.

\begin{definition}[{\bf Histogram mechanism}]
Let $P$ be any partition of the type space $T$. A mechanism $M: T^n \to S$ is a \emph{histogram mechanism with respect to $P$} if there exists a function $f: \H_P \to S$ such that $M(\v{t}) = f(H_P(\v{t}))$ $\forall \ \v{t} \in T^n$.
\end{definition}

The following theorem states that any histogram mechanism with bounded utility functions and positive expected count for each bin is both Bayesian differentially private and persistent approximately truthful.

\begin{theorem}[{\bf Histogram mechanisms are private and persistent approximately truthful}\label{thm:histogramMechanism}]
Let $M: T^n \to S$ be any histogram mechanism with respect to some partition $P$ of $T$. Let $p_{\min} = \min_{B \in P} \Pr_{t \sim \T}[t \in B]$, and suppose that $p_{\min} > 0$. Then, for every $0 \leq k \leq n-2$ and $\frac{4}{p_{\min} \cdot (n-k-1)} \leq \eps \leq 1$, $M$ satisfies the following properties with $\delta = e^{-\Omega((n-k) \cdot p_{\min} \cdot \eps^2)}$:
\begin{enumerate}
  \item \emph{Privacy:} $M$ is $(k,\eps,\delta)$-Bayesian differentially private.
  \item \emph{Persistent approximate truthfulness:} Suppose the utility functions are bounded by $\alpha > 0$, i.e., the utility function for each player $i$ is $u_i: T \times S \to [-\alpha,\alpha]$. 
Then, for every $1 \leq r \leq k+1$, $M$ is $(k-r+1,r)$-persistent $(r\eps + 2r\delta)(2\alpha)$-Bayes-Nash truthful.
\end{enumerate}
\end{theorem}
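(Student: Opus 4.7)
The plan is to reduce the privacy claim to a standard multinomial concentration argument, and then get persistent approximate truthfulness as an immediate corollary of Theorem \ref{thm:BDP=>Truthfulness}.

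First I would exploit the structure of a histogram mechanism: since $M$ depends on its input only through $H_P$, and since $H_P$ is additive over coordinates, we have $H_P(t_i, \vec{t}'_I, \vec{t}_J) = e_{b(t_i)} + H_P(\vec{t}'_I) + H_P(\vec{t}_J)$, where $b(t)$ is the block containing $t$ and $e_b$ is the unit histogram supported on $b$. Fixing $b_1 = b(t_i)$, $b_2 = b(t_i')$, $c = H_P(\vec{t}'_I)$, $N = |J| \geq n-k-1$, and letting $H = H_P(\vec{t}_J) \sim \mathrm{Multinomial}(N; p_1,\ldots,p_m)$ with $p_B = \Pr_{t \sim \T}[t \in B]$, the privacy inequality becomes $\Pr[H \in A_1] \leq e^\eps \Pr[H \in A_2] + \delta$, where $A_j = \{h : f(e_{b_j} + c + h) \in Y\}$. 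The useful observation is that the translation $h \mapsto h + e_{b_1} - e_{b_2}$ sends any $h \in A_1$ with $h_{b_2} \geq 1$ into $A_2$, since $f$ is then evaluated on the same total histogram; the case $b_1 = b_2$ is trivial.

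Next I would compute the pointwise density ratio directly from the multinomial pmf, obtaining $\Pr[H = h] / \Pr[H = h + e_{b_1} - e_{b_2}] = (h_{b_1}+1)\, p_{b_2} / (h_{b_2}\, p_{b_1})$. Define the good event $G = \{h : (1-\gamma) N p_B \leq h_B \leq (1+\gamma) N p_B \text{ for every block } B\}$, with $\gamma$ a suitable small multiple of $\eps$. On $G$ the ratio is bounded by $((1+\gamma) + 1/(Np_{b_1}))/(1-\gamma)$, and the hypothesis $\eps \geq 4/(p_{\min}(n-k-1))$ lets us absorb the additive $1/(Np_{b_1})$ into another multiple of $\eps$, so the whole expression is at most $e^\eps$. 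Since $h_{b_2} \geq 1$ on $G$, the translation bijection is well-defined there, and summing pointwise over $A_1 \cap G$ gives $\Pr[H \in A_1 \cap G] \leq e^\eps \Pr[H \in A_2]$, leaving $\Pr[H \notin G]$ to be absorbed into $\delta$. A multiplicative Chernoff bound on each $h_B \sim \mathrm{Binomial}(N, p_B)$ yields $\Pr[|h_B - N p_B| > \gamma N p_B] \leq 2 e^{-\Omega(\eps^2 (n-k) p_{\min})}$, and a union bound over the $m \leq 1/p_{\min}$ blocks folds the $\log(1/p_{\min})$ factor into the $\Omega$, giving $\Pr[H \notin G] = e^{-\Omega((n-k) p_{\min} \eps^2)}$ as required. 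This establishes part (1); part (2) is then immediate from Theorem \ref{thm:BDP=>Truthfulness}(3) applied to the just-proved $(k,\eps,\delta)$-Bayesian differential privacy.

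The main obstacle I expect is constant-chasing: the multiplicative concentration slack $(1 \pm \gamma)$, the additive $1/(Np_{b_1})$ correction in the pointwise ratio, and the union-bound factor $1/p_{\min}$ all have to be simultaneously squeezed into a single $e^\eps$ factor and a single exponent matching the form in the theorem statement. The lower bound $\eps \geq 4/(p_{\min}(n-k-1))$ is essentially what makes the additive correction negligible, and the upper bound $\eps \leq 1$ keeps the Chernoff bound in its multiplicative-slack regime; choosing $\gamma$ as a small enough multiple of $\eps$ makes everything fit, but requires careful bookkeeping.
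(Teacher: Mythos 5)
Your overall plan is the same as the paper's: reduce the histogram mechanism to the multinomial law of the bin counts, bound the pointwise pmf ratio on a set where the relevant bin counts are near their means, and use Chernoff for the complement; parts (1) and (2) are then tied together exactly as you describe via Theorem \ref{thm:BDP=>Truthfulness}. Your translation-bijection $h \mapsto h + e_{b_1} - e_{b_2}$ is a clean equivalent of the paper's comparison of $\Pr[H_P(t_i,\vec t_J)=\vec w]$ and $\Pr[H_P(t_i',\vec t_J)=\vec w]$ at the same $\vec w$, and your formula $(h_{b_1}+1)p_{b_2}/(h_{b_2}p_{b_1})$ matches the paper's $\frac{w_b}{w_{b'}}\cdot\frac{p_{b'}}{p_b}$.

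There is, however, a genuine (if small) gap in your treatment of the bad event. You define $G$ by requiring concentration of $h_B$ for \emph{every} block $B$, and then take a union bound over all $m \le 1/p_{\min}$ blocks, asserting that the resulting $\log(1/p_{\min})$ ``folds into the $\Omega$.'' This is not justified uniformly over the allowed parameter range. For instance, with $p_{\min} = \Theta(1/\sqrt{n-k})$ and $\eps$ near its lower bound $4/(p_{\min}(n-k-1))$, the exponent $(n-k)p_{\min}\eps^2$ is $O(1/\sqrt{n-k})$, while $\log m$ can be $\Theta(\log(n-k))$; the union bound term then dominates and the bound $m\, e^{-\Omega((n-k)p_{\min}\eps^2)}$ is not of the form $e^{-\Omega((n-k)p_{\min}\eps^2)}$. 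The fix is exactly what the paper does: the density ratio depends only on $h_{b_1}$ and $h_{b_2}$, so it suffices to define the good set by concentration of those two bins only (the paper even uses one-sided deviations --- an upper bound on the $b$ count and a lower bound on the $b'$ count --- since that is all the ratio needs). Then the Chernoff bound applies to just two binomials, no factor of $m$ appears, and the stated $\delta$ follows directly. The rest of your argument, including the injectivity of the translation and the absorption of the $1/(Np_{b_1})$ term via the hypothesis on $\eps$, is correct.
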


The proof idea \emph{roughly} works as follows. Persistent approximate truthfulness follows from Bayesian differential privacy and Theorem \ref{thm:BDP=>Truthfulness}, so we only have to show that the histogram mechanism is Bayesian differentially private. Since the histogram mechanism chooses a social alternative based only on the computed histogram, it suffices to show that the computed histogram is Bayesian differentially private. Now, consider a player $i$ changing his/her announced type from $t_i$ to $t_i'$, and let $B_i$ and $B_i'$ be the two bins that contain $t_i$ and $t_i'$, respectively. When player $i$ changes from $t_i$ to $t_i'$, the count for bin $B_i$ decreases by 1, and the count for bin $B_i'$ increases by 1. However, the randomness of the other players' types can easily blur/smooth out this change. 

The histogram formed by the other players' types follows a multinomial distribution, which is relatively smooth near its expectation, i.e., the probability masses of two nearby histograms are relatively close to each other, provided that the two histograms are near the expectation. Now, we observe that by Chernoff bounds, the formed histogram will be near the expectation with high probability. Together, these facts imply that the distribution of the computed histogram when player $i$ reports $t_i$ is relatively close to the distribution when player $i$ reports $t_i'$. This shows that the computed histogram is Bayesian differentially private.

\begin{proof}
The second property follows from the first property and Theorem \ref{thm:BDP=>Truthfulness}. We now show the first property.

Let $i \in [n]$, let $I \subseteq [n] \setminus \{i\}$ with $|I| \leq k$, let $t_i, t_i' \in T$, let $\vec{t}'_I \in T^{|I|}$, and let $Y \subseteq S$. Let $J = [n] \setminus (I \cup \{i\})$ and $\vec{t}_J \sim \T^{|J|}$. We need to show that
\begin{align*}
  \Pr[M(t_i, \vec{t}'_I, \vec{t}_J) \in Y] \leq e^\epsilon \Pr[M(t_i', \vec{t}'_I, \vec{t}_J) \in Y] + \delta,
\end{align*}
which is equivalent to 
\begin{align*}
  \Pr[f(H_P(t_i, \vec{t}'_I, \vec{t}_J)) \in Y] \leq e^\epsilon \Pr[f(H_P(t_i', \vec{t}'_I, \vec{t}_J)) \in Y]
\end{align*}
for some function $f: \H_P \to S$. To show this, it is easy to see that it suffices to show that for every $W \subseteq \H_P$,
\begin{align*}
\Pr[H_P(t_i, \vec{t}_J) \in W] \leq e^\epsilon \Pr[H_P(t_i', \vec{t}_J) \in W] + \delta. \tag{1}
\end{align*}
To this end, fix $W \subseteq \H_P$. Let the partition $P = \{P_1, \ldots, P_m\}$, and let $p_j = \Pr_{t \sim \T}[t \in P_j]$. Let $b$ be the index $j$ of the partition $P_j$ that contains $t_i$, and let $b'$ be the index $j$ of the partition $P_j$ that contains $t_i'$. We can assume that $b \neq b'$, since otherwise (1) would hold trivially. Given a histogram $\vec{w}$, let $w_j$ be the count of the bin $P_j$ in $\vec{w}$. Let $n' = |J|$. Let $B$ be the set of all histograms $\vec{w}$ of size $n'+1$ (i.e., sum of the counts is $n'+1$) such that $w_b > p_b n' \cdot e^{\eps/2}$ or $w_{b'} < p_{b'} n' \cdot e^{-\eps/2}$. Now, observe that
\begin{align*}
\Pr[H_P(t_i,\vec{t}_J) \in W] 
= \ & \Pr[H_P(t_i,\vec{t}_J) \in W \cap \overline{B}] + \Pr[H_P(t_i,\vec{t}_J) \in W \cap B] \\
\leq \ & \Pr[H_P(t_i,\vec{t}_J) \in W \cap \overline{B}] + \Pr[H_P(t_i,\vec{t}_J) \in B]. \tag{2}
\end{align*}

\begin{claim}
For each $\vec{w} \in W \cap \overline{B}$,
\begin{align*}
\Pr[H_P(t_i, \vec{t}_J) = \vec{w}] \leq e^\epsilon \Pr[H_P(t_i', \vec{t}_J) = \vec{w}].
\end{align*}
\end{claim}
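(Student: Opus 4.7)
The plan is to compute both probabilities explicitly as multinomial probabilities and then bound the ratio using the constraints that $\overline{B}$ puts on $w_b$ and $w_{b'}$.

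First, I would note that since $t_i \in P_b$ is fixed, $H_P(t_i, \vec{t}_J) = \vec{w}$ holds iff $H_P(\vec{t}_J) = \vec{w} - \v{e}_b$, where $\v{e}_b$ is the unit vector at bin $b$. This requires $w_b \geq 1$; if $w_b = 0$, the left-hand side is $0$ and the inequality is trivial. Similarly, the right-hand side is non-zero only when $w_{b'} \geq 1$, but the hypothesis $\vec{w} \in \overline{B}$ forces $w_{b'} \geq p_{b'} n' e^{-\eps/2} > 0$ (using $p_{b'} \geq p_{\min}$ and the quantitative lower bound $p_{\min}(n-k-1) \geq 4/\eps$ coming from the theorem's assumption on $\eps$), so we may assume both $w_b, w_{b'} \geq 1$.

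Next, since the coordinates of $\vec{t}_J$ are i.i.d.\ from $\T$, the histogram $H_P(\vec{t}_J)$ is multinomially distributed with parameters $(n', p_1, \ldots, p_m)$. Writing out the two multinomial probabilities corresponding to $\vec{w} - \v{e}_b$ and $\vec{w} - \v{e}_{b'}$, the ratio
\[
\frac{\Pr[H_P(t_i, \vec{t}_J) = \vec{w}]}{\Pr[H_P(t_i', \vec{t}_J) = \vec{w}]}
\]
simplifies dramatically: the multinomial coefficients contribute a factor of $w_b / w_{b'}$, the $p_b$ and $p_{b'}$ powers contribute $p_{b'}/p_b$, and all other factors cancel. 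So the ratio equals $\dfrac{w_b \, p_{b'}}{w_{b'} \, p_b}$.

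Finally, I invoke the definition of $\overline{B}$: on $\overline{B}$ we have simultaneously $w_b \leq p_b n' e^{\eps/2}$ and $w_{b'} \geq p_{b'} n' e^{-\eps/2}$, so
\[
\frac{w_b \, p_{b'}}{w_{b'} \, p_b} \;\leq\; \frac{p_b n' e^{\eps/2} \cdot p_{b'}}{p_{b'} n' e^{-\eps/2} \cdot p_b} \;=\; e^{\eps},
\]
which gives the claim. The whole argument is essentially a direct calculation; there is no real obstacle, but the one point worth being careful about is the edge case $w_b = 0$ and verifying that $w_{b'} \geq 1$ on $\overline{B}$ so that the multinomial formula for the denominator makes sense.
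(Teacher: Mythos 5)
Your proof is correct and takes essentially the same route as the paper: identifying $H_P(\vec{t}_J)$ as multinomial, computing the ratio of pmfs as $\frac{w_b}{w_{b'}}\cdot\frac{p_{b'}}{p_b}$, and bounding each factor by $e^{\eps/2}$ via the definition of $\overline{B}$. You also take care of the degenerate cases (e.g., $w_b=0$, and verifying $w_{b'}\geq 1$ on $\overline{B}$ using $\eps \geq \frac{4}{p_{\min}(n-k-1)}$) which the paper silently glosses over; this is a welcome bit of extra rigor but not a different argument.
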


\begin{proof}[Proof of claim]
Fix $\vec{w} \in W \cap \overline{B}$. From the pmf of the multinomial distribution, we have
\begin{align*}
\frac{\Pr[H_P(t_i, \vec{t}_J) = \vec{w}]}{\Pr[H_P(t_i', \vec{t}_J) = \vec{w}]} = \frac{w_b}{w_{b'}} \cdot \frac{p_{b'}}{p_b} = \frac{w_b}{p_b n'} \cdot \frac{p_{b'} n'}{w_{b'}} \leq e^{\eps/2} \cdot e^{\eps/2} = e^\eps,
\end{align*}
where the last inequality follows from the fact that $w \notin B$.
\end{proof}

\begin{claim}
$\Pr[H_P(t_i,\vec{t}_J) \in B] \leq \delta$.
\end{claim}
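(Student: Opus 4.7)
The plan is to reduce membership in $B$ to two binomial tail events and then apply standard multiplicative Chernoff bounds. Write $n' = |J|$. Because $t_i$ contributes a single deterministic count to bin $b$ while the $n'$ types in $\vec{t}_J$ fall independently into bins according to $\T$, the two coordinates of interest in $\vec{w} = H_P(t_i, \vec{t}_J)$ satisfy $w_b = 1 + X_b$ and $w_{b'} = X_{b'}$, where $X_b \sim \mathrm{Bin}(n', p_b)$ and $X_{b'} \sim \mathrm{Bin}(n', p_{b'})$ (these are the marginals of the multinomial formed from $\vec{t}_J$). A union bound over the two defining events of $B$ therefore reduces the claim to bounding $\Pr[X_b > p_b n' e^{\eps/2} - 1]$ and $\Pr[X_{b'} < p_{b'} n' e^{-\eps/2}]$.

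Next I would convert each exponential threshold into the $(1 \pm \lambda)\mu$ form required by multiplicative Chernoff. Using $e^{\eps/2} \geq 1 + \eps/2$ and $e^{-\eps/2} \leq 1 - \eps/4$, both valid for $\eps \in [0,1]$, the lower-tail event is contained in $\{X_{b'} < (1 - \eps/4) p_{b'} n'\}$, and the upper-tail event is contained in $\{X_b > (1 + \eps/4) p_b n'\}$ as soon as $p_b n' \eps \geq 4$. Since $p_b \geq p_{\min}$ and $n' \geq n - k - 1$ (as $|I| \leq k$), the hypothesis $\eps \geq 4/(p_{\min}(n-k-1))$ is exactly what guarantees this inequality.

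At this stage the standard multiplicative Chernoff bounds give
\[
\Pr[X_b > (1 + \eps/4) p_b n'] \leq e^{-p_b n' \eps^2/48}, \qquad \Pr[X_{b'} < (1 - \eps/4) p_{b'} n'] \leq e^{-p_{b'} n' \eps^2/32},
\]
each of which is $e^{-\Omega(p_{\min}(n-k) \eps^2)}$. Summing the two tails and folding the constants into the definition $\delta = e^{-\Omega((n-k) p_{\min} \eps^2)}$ will close the claim.

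The only technically delicate step will be the second one: the $+1$ offset in $w_b$ contributed by $t_i$ must be absorbed into the Chernoff deviation without costing a constant factor in the exponent. The lower bound $\eps \geq 4/(p_{\min}(n-k-1))$ in the hypothesis is calibrated precisely so that $p_b n' \eps/2 - 1 \geq p_b n' \eps/4$, which is what makes the absorption possible; once that observation is made, the rest of the argument is a routine invocation of Chernoff plus a union bound.
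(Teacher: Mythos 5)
Your proposal is correct and takes essentially the same approach as the paper: a union bound over the two defining events of $B$, reduction to binomial marginals of the multinomial, absorption of the $+1$ offset from $t_i$ using the lower bound on $\eps$, and then multiplicative Chernoff. The only cosmetic difference is that you first linearize $e^{\pm\eps/2}$ to the $(1\pm\lambda)\mu$ form before invoking Chernoff, whereas the paper applies Chernoff directly with deviations $e^{\eps/2}-(p_b n')^{-1}-1$ and $1-e^{-\eps/2}$ and notes these are $\Omega(\eps)$; the computations are equivalent.
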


\begin{proof}[Proof of claim]
Observe that
\begin{align*}
\Pr[H_P(t_i,\vec{t}_J) \in B] 
= \ & \Pr[H_P(t_i,\vec{t}_J)_b > p_b n' e^{\eps/2} \text{ or } H_P(t_i,\vec{t}_J)_{b'} < p_{b'} n' e^{-\eps/2}] \\
\leq \ & \Pr[H_P(t_i,\vec{t}_J)_b > p_b n' e^{\eps/2}] + \Pr[H_P(t_i,\vec{t}_J)_{b'} < p_{b'} n' e^{-\eps/2}] \\
= \ & \Pr[1 + Bin(n',p_b) > p_b n' e^{\eps/2}] + \Pr[Bin(n',p_{b'}) < p_{b'} n' e^{-\eps/2}] \\
= \ & \Pr[Bin(n',p_b) > p_b n' (e^{\eps/2} - (p_b n')^{-1})] + \Pr[Bin(n',p_{b'}) < p_{b'} n' e^{-\eps/2}] \\
\leq \ & \exp(- \Omega(p_b n' \cdot (e^{\eps/2} - (p_b n')^{-1} - 1)^2)) + \exp(- \Omega(p_{b'} n' \cdot (1-e^{-\eps/2})^2)) \\
\leq \ & \exp(- \Omega(p_b n' \cdot \eps^2)) + \exp(- \Omega(p_{b'} n' \cdot \eps^2)) \\
\leq \ & \exp(- \Omega(p_{\min} \cdot (n-k) \cdot \eps^2)),
\end{align*}
where $Bin(n',p)$ is a binomial random variable with $n'$ trials and success probability $p$, and the second inequality follows from Chernoff bounds.
\end{proof}

Now, by the above two claims, we have from (2) that
\begin{align*}
\Pr[H_P(t_i,\vec{t}_J) \in W]
\leq \ & \Pr[H_P(t_i,\vec{t}_J) \in W \cap \overline{B}] + \Pr[H_P(t_i,\vec{t}_J) \in B] \\
\leq \ & e^\eps \Pr[H_P(t_i',\vec{t}_J) \in W \cap \overline{B}] + \delta \\
\leq \ & e^\eps \Pr[H_P(t_i',\vec{t}_J) \in W] + \delta.
\end{align*}
This completes the proof.
\end{proof}

One possible partition of the type space is the one where there is a distinct block for each type. 
Thus, Theorem~\ref{thm:histogramMechanism} covers the case where the choice of the mechanism depends only on the \emph{number} of players that reported each type, and not their identities.
In fact, given any partition, one can redefine the type space so that the new types are the blocks of the partition. 
This means we could always redefine the type space and simply use the partition where there is a distinct block for each type in the new type space.
However, we believe it is more natural to preserve the original, natural type space, and to allow the histogram mechanism to use an appropriate partition of the type space.

In Theorem~\ref{thm:histogramMechanism}, since the histogram mechanism is not modified in any way to satisfy privacy and persistent approximate truthfulness, all properties of the mechanism (e.g., efficiency, truthfulness, individual rationality, etc.) are preserved. We now give a simple example to illustrate Theorem~\ref{thm:histogramMechanism}.

\begin{example}[{\bf Voting with Multiple Candidates}]\label{ex:voting}
Suppose we are trying to select a winner from a finite set of candidates (e.g., political candidates) using the plurality rule (i.e., each voter casts one vote and the candidate with the most votes wins). 
The set of social alternatives is the set of candidates, and the natural type space is the set of all preference orders over the candidates.
However, we can partition the type space such that each block $b$ represents a candidate $c_b$, and all the types with $c_b$ as their top choice belong to block $b$.
Intuitively, announcing a type that belongs to block $b$ can be understood as casting a vote for candidate $c_b$.
Using this partition, we can define a histogram mechanism that implements the plurality rule. We call this mechanism the \emph{plurality mechanism}.

It is well known that the plurality rule is not strategy-proof when there are more than two candidates (see, e.g., \cite{SLB09}). 
However, by Theorem~\ref{thm:histogramMechanism}, the plurality mechanism is Bayesian differentially private and persistent approximate Bayes-Nash truthful. For example, if $n$ is the number of voters, and each candidate is expected to get at least some constant fraction of the votes, then from  Theorem~\ref{thm:histogramMechanism} we can conclude that the plurality mechanism is $(\Omega(n),n^{-2/5}, e^{-\Omega(\sqrt[5]{n})})$-Bayesian differentially private and $(\Omega(n),\Omega(\sqrt[5]{n}))$-persistent $O(\frac{1}{\sqrt[5]{n}})$-Bayes-Nash truthful.
This means that even if a constant fraction of the voters lie arbitrarily, and coalitions of size $O(\sqrt[5]{n})$ can be formed, the privacy of each voter is still protected, and any possible gain from lying is bounded by $O(\frac{1}{\sqrt[5]{n}})$.
As $n\rightarrow\infty$, any possible gain from lying converges to $0$, making the plurality mechanism persistent Bayes-Nash truthful.
\end{example}

\begin{example}[{\bf Discrete Facility Location in Two Dimensions with Multiple Facilities}]\label{ex:facility}
Facility location has been given much attention in the literature on privacy in mechanism design (see, e.g., \cite{CCKMV11,Xia11,NST12}).
The standard facility location problem involves placing $m$ facilities (e.g. public libraries) within a region, so the set of social alternatives is the set of all ways to place the $m$ facilities.
Each player's type is his location within the region, and his utility is determined by the distance from his location to the nearest facility. 
Usually, the goal is to minimize the sum of the players' distances to their nearest facility. 

If we discretize the type space (i.e., the region), perhaps such that each city block is a type, we get a histogram of population counts in each city block.
Using a histogram mechanism, the mechanism designer can use any criteria to select the facility locations while achieving Bayesian differential privacy and persistent approximate truthfulness.
\end{example}

\subsection{Mechanisms for Two Social Alternatives}
\label{sec:twoChoicesMechanisms}

Although histogram mechanisms are useful in many settings, in order to apply Theorem~\ref{thm:histogramMechanism} to get good parameters, the number of bins cannot be extremely large. 
We now present a class of mechanisms that do not require the partitioning of types into bins, but are still Bayesian differentially private and persistent approximately truthful.
Roughly speaking, the following theorem states that any mechanism that makes a choice between two social alternatives $\{A,B\}$ based on the difference between the sums of two functions $u(\cdot, A)$ and $u(\cdot,B)$ on the types is Bayesian differentially private and persistent approximately truthful.

\begin{theorem}[{\bf Private and persistent approximately truthful mechanisms for two social alternatives}]
\label{thm:twoChoicesMechanism}
Let $S = \{A, B\}$ be any set of two social alternatives, let $T$ be the type space, let $\T$ be any distribution over $T$, and let $u: T \times S \to [-\beta,\beta]$ be any function (e.g., a utility function for all the players). Suppose the random variable $u(t,A) - u(t,B)$, where $t \sim \T$, has non-zero variance and a probability density function. 

Let $M: T^n \to S$ be any mechanism such that 
\begin{align*}
M(\vec{t}) = f\left(\sum_{i=1}^n u(t_i, A) - \sum_{i=1}^n u(t_i,B)\right)
\end{align*}
for some function $f: \R \to S$. Then, for every $0 \leq k \leq n-2$ and $0 < \eps \leq 1$, $M$ satisfies the following properties
with $\eps' = \eps + O(\sqrt{\frac{{\ln (n-k)}}{{n-k}}} )$ and $\delta = O(\frac{1}{\eps \sqrt{n-k}})$:
\begin{enumerate}
  \item \emph{Privacy:} $M$ is $(k,\eps',\delta)$-Bayesian differentially private.
  \item \emph{Persistent approximate truthfulness:} Suppose the utility functions are bounded by $\alpha > 0$, i.e., the utility function for each player $i$ is $u_i: T \times S \to [-\alpha,\alpha]$. 
Then, for every $1 \leq r \leq k+1$, $M$ is $(k-r+1,r)$-persistent $(r\eps' + 2r\delta)(2\alpha)$-Bayes-Nash truthful.
\end{enumerate}
\end{theorem}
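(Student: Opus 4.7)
First, I would note that the persistent approximate truthfulness bound (Part~2) follows directly from the privacy bound (Part~1) and Theorem~\ref{thm:BDP=>Truthfulness}, so the entire task is to establish privacy. Fixing $i$, $I$, $t_i$, $t_i'$, $\vec{t}'_I$, and $Y$ as in the definition, with $J := [n] \setminus (I \cup \{i\})$ and $m := |J| \geq n-k-1$, I would reduce the privacy inequality to a one-dimensional shift inequality for the real-valued random variable $U := \sum_{j \in J} Z_j$, where $Z_j := u(t_j, A) - u(t_j, B)$ are iid copies of $Z := u(t,A)-u(t,B)$ for $t \sim \T$. By hypothesis $Z$ is bounded in $[-2\beta,2\beta]$, has non-zero variance $\sigma^2$, and has a pdf, so $U$ has a pdf $p$ (by convolution). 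Since $M(\vec{t}) = f(U + c_I + z_i)$ for a constant $c_I$ depending only on $\vec{t}'_I$, and since replacing $t_i$ by $t_i'$ only shifts the argument of $f$ by some $\Delta \in [-2\beta, 2\beta]$, it suffices to prove that for every Borel $W \subseteq \R$ and every such $\Delta$, $\Pr[U \in W] \leq e^{\eps'}\,\Pr[U \in W - \Delta] + \delta$.

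\textbf{Smoothing via Gaussian comparison.} The key is that $U$ is a sum of $m$ iid bounded, non-degenerate, continuous random variables, so its distribution is very close to the Gaussian $N(m\mu_Z, m\sigma^2)$ whose density I call $g$. A direct calculation gives $\log(g(x)/g(x-\Delta)) = \Delta(x-m\mu_Z)/(m\sigma^2) - \Delta^2/(2m\sigma^2)$, which is at most $\eps$ on the ``bulk'' $G := \{x : |x - m\mu_Z| \leq R\}$ with $R = \Theta(\eps m\sigma^2/\beta)$. Combining the standard Gaussian tail bound at $R/(\sigma\sqrt{m}) = \Theta(\eps\sqrt{m}\sigma/\beta)$ with the Berry--Esseen error term yields $\Pr[U \notin G] = O(1/(\eps\sqrt{m}))$, which will play the role of $\delta$. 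The extra additive $O(\sqrt{\ln(n-k)/(n-k)})$ slack built into $\eps'$ is there to transfer the clean Gaussian ratio bound on $G$ to the same inequality for the actual density $p$ of $U$, absorbing a pointwise local-CLT error of order $1/\sqrt{m}$.

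\textbf{Assembly.} Given the pointwise bound $p(x) \leq e^{\eps'}\, p(x-\Delta)$ for $x \in G$ and the tail bound $\Pr[U \notin G] \leq \delta$, the usual split
\begin{align*}
\Pr[U \in W] \leq \int_{W \cap G} p(x)\,dx + \Pr[U \notin G] \leq e^{\eps'}\int_{W \cap G} p(x - \Delta)\,dx + \delta \leq e^{\eps'}\,\Pr[U \in W - \Delta] + \delta
\end{align*}
closes the argument.

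\textbf{Main obstacle.} The delicate step will be the pointwise density comparison between $p$ and $g$ on $G$. A textbook local CLT guaranteeing a relative bound $|p(x)/g(x) - 1| = O(1/\sqrt{m})$ on the bulk typically requires regularity on the characteristic function of $Z$ that is not among the hypotheses. My fallback plan is a CDF-level argument: Berry--Esseen requires only the finite third moment guaranteed by boundedness of $Z$ and gives $\sup_x |F_U(x) - F_g(x)| = O(1/\sqrt{m})$. I would then upgrade this to the desired Borel-set inequality by first truncating $W$ to $G$, partitioning $G$ into short intervals on which the Gaussian density ratio is essentially constant, applying the half-line Berry--Esseen bound on each piece, and carefully summing the errors; the $O(\sqrt{\ln(n-k)/(n-k)})$ slack in $\eps'$ is budgeted precisely to cover the per-interval ratio oscillation and the accumulated Berry--Esseen error.
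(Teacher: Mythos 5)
Your overall plan---reduce to a one-dimensional shift inequality, split into a bulk set and a tail set, compare the actual density to a Gaussian on the bulk, and use Berry--Esseen for the tail---matches the paper's proof. But there are two genuine gaps. First, your bulk radius $R = \Theta(\eps m\sigma^2/\beta)$ is too large for the local-CLT error to be absorbable. Near the edge of that bulk the Gaussian density $g$ is as small as $\frac{1}{\sigma\sqrt m}\exp\bigl(-\Theta(\eps^2 m\sigma^2/\beta^2)\bigr)$, while the additive local-CLT error for the unnormalized density is $O(1/(\sigma m))$; when $\eps$ is not $O(\sqrt{\ln m/m})$ the error is exponentially larger than $g$, so $p(x)/g(x)$ is unbounded over most of $G$ and cannot be purchased with an additive $O(\sqrt{\ln m/m})$ in the exponent. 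The paper instead chooses the bulk so the normalized Gaussian density stays at least $1/\gamma$ times the local-CLT error, where $\gamma := (e^\eps-1)/(e^\eps+1)$, giving a normalized bulk radius of only $\Theta(\sqrt{\ln(\eps\sqrt m)})$. On that smaller bulk the additive local-CLT bound converts into the multiplicative factor $(1+\gamma)/(1-\gamma)=e^\eps$, and the Gaussian shift ratio contributes only the $e^{O(\sqrt{\ln m/m})}$ slack. So the roles of the two factors in $\eps'$ are exactly swapped relative to your plan, and with your assignment the numbers do not close.

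Second, your fallback cannot work in principle. The privacy inequality must hold for \emph{every} Borel set $W$, but Berry--Esseen is a Kolmogorov (CDF) bound and only controls probabilities of half-lines, hence of intervals. After partitioning $G$ into short intervals $I_\ell$, the sets $W\cap I_\ell$ remain arbitrary Borel subsets of $I_\ell$; Berry--Esseen bounds $|\Pr[U\in I_\ell]-\Pr[\mathcal{N}\in I_\ell]|$ but says nothing about $\Pr[U\in W\cap I_\ell]$ versus $\Pr[\mathcal{N}\in W\cap I_\ell]$, so the interval partition gives no handle on $W$. A uniform multiplicative bound over measurable sets genuinely needs pointwise density control, i.e., a local limit theorem. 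Your concern that the rate-$O(1/\sqrt m)$ local CLT requires regularity beyond ``$Z$ has a pdf'' (e.g., $L^r$-integrability of the characteristic function or boundedness of some $k$-fold convolution density) is legitimate, but the paper has the same issue: it simply cites Petrov's local limit theorem without verifying the hypothesis. That is a shared limitation rather than a defect unique to your proposal.
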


The proof idea \emph{roughly} works as follows. Persistent approximate truthfulness follows from Bayesian differential privacy and Theorem \ref{thm:BDP=>Truthfulness}, so we only have to show that the mechanism $M$ is Bayesian differentially private. Since the mechanism chooses a social alternative based only on $\sum_{i=1}^n u(t_i, A) - \sum_{i=1}^n u(t_i,B) = \sum_{i=1}^n (u(t_i, A)-u(t_i,B))$, it suffices to show that this sum is Bayesian differentially private. Now, consider a player $i$ changing his/her announced type from $t_i$ to $t_i'$. This changes the sum by at most $4\beta$. However, the randomness of the other players' types can easily blur/smooth out this change. 

After shifting and rescaling the terms of the sum appropriately, as $n \rightarrow \infty$, the pdf of the sum obtained from the other players' types converges uniformly to the pdf of the standard normal distribution (this follows from various ``local limit theorems''; e.g., see Statement 5 in Section 4 of Chapter VII in \cite{Pet75}). Thus, the standard normal distribution is a good approximation of the distribution of the sum obtained from the other players' types. Now, we note that the pdf of the standard normal distribution is relatively smooth near its expectation 0, i.e., the probability density at two nearby points is relatively close to each other, provided that the two points are near the expectation 0. Now, we observe that a sample from the standard normal distribution will be near its expectation 0 with high probability. Together, these facts imply that the distribution of the computed sum when player $i$ reports $t_i$ is relatively close to the distribution when player $i$ reports $t_i'$. This shows that the computed sum is Bayesian differentially private.

\begin{proof} 
The second property follows from the first property and Theorem \ref{thm:BDP=>Truthfulness}. We now show the first property.

Let $i \in [n]$, let $I \subseteq [n] \setminus \{i\}$ with $|I| \leq k$, let $t_i, t_i' \in T$, let $\vec{t}'_I \in T^{|I|}$, and let $Y \subseteq S$. Let $J = [n] \setminus (I \cup \{i\})$, $\vec{t}_J \sim \T^{|J|}$, and $n' = |J|$. Given a type $t \in T$, let $g(t) = u(t,A) - u(t,B)$. Given a vector $\vec{v} = (v_1, \ldots, v_m)$ of types in $T$, let $U(\vec{v}) = \sum_{j = 1}^{m} g(v_j)$. We need to show that
\begin{align*}
  \Pr[M(t_i, \vec{t}'_I, \vec{t}_J) \in Y] \leq e^{\epsilon'} \Pr[M(t_i', \vec{t}'_I, \vec{t}_J) \in Y] + \delta,
\end{align*}
which is equivalent to 
\begin{align*}
  \Pr[f\left(U(t_i,\vec{t}'_I, \vec{t}_J)\right) \in Y] \leq e^{\epsilon'} \Pr[f\left(U(t_i',\vec{t}'_I, \vec{t}_J)\right) \in Y] + \delta \tag{1}
\end{align*}
for some function $f: \R \to S$. Let 
\begin{align*}
h(t) = \frac{g(t) - \E_{t' \sim \T}[g(t')]}{(\Var_{t' \sim \T}[g(t')])^{1/2}},
\end{align*} 
and given a vector $\vec{v} = (v_1, \ldots, v_m)$ of types in $T$, let 
\begin{align*}
V(\vec{v}) = \sum_{j = 1}^{m} h(v_j). 
\end{align*}
We note that $\E_{t' \sim \T}[h(t')] = 0$ and $\Var_{t' \sim \T}[h(t')] = 1$. To show (1), it is easy to see that it suffices to show that for every measurable set $W \subseteq \R$,
\begin{align*}
\Pr\left[\frac{V(t_i, \vec{t}_J)}{\sqrt{n'}} \in W\right] \leq e^{\epsilon'} \Pr\left[\frac{V(t_i', \vec{t}_J)}{\sqrt{n'}} \in W\right] + \delta. \tag{2}
\end{align*}
To this end, fix a measurable set $W \subseteq \R$. We will need to use the following lemma later:

\begin{lemma}
Let $f_{n'}$ be the pdf of $\frac{V(\vec{t}_J)}{\sqrt{n'}} = \frac{1}{\sqrt{n'}} \sum_{j \in J} h(t_j)$ (where each $t_j \sim \T$ independently), and let $\phi$ be the pdf of the standard normal distribution. Then, $f_{n'}$ converges uniformly to $\phi$ as follows: 
\begin{align*}
\sup_{x \in \R} |f_{n'}(x) - \phi(x)| \leq \frac{c_1}{\sqrt{n'}},
\end{align*}
where $c_1 > 0$ is some (universal) constant.
\end{lemma}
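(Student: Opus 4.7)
The plan is to recognize the statement as a standard local central limit theorem with Berry--Esseen-type density rate, and to reduce it to the form of Petrov's theorem referenced in the surrounding text. The first step is to verify the hypotheses. The $\{h(t_j)\}_{j \in J}$ are iid (since the $t_j$ are iid draws from $\T$), and by construction $\E[h(t_j)] = 0$ and $\Var[h(t_j)] = 1$. Since $u$ takes values in $[-\beta,\beta]$, the variable $g(t) = u(t,A) - u(t,B)$ is bounded, so $h$ is bounded, hence has moments of all orders; in particular $\E[|h(t)|^3] < \infty$, which is what drives the $1/\sqrt{n'}$ rate through the Edgeworth expansion. Finally, $h(t)$ inherits a pdf from $g(t)$ since it is an affine transformation of $g(t)$.

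With the hypotheses in place, I would invoke Petrov's local CLT for non-lattice distributions to conclude that the density $f_{n'}$ of $(1/\sqrt{n'})\sum_{j \in J} h(t_j)$ satisfies $\sup_x |f_{n'}(x) - \phi(x)| \leq C/\sqrt{n'}$ for all $n'$ at least some threshold $n_0$, where $C$ depends only on $\T$ and $\beta$. For the finitely many small cases $n' < n_0$, each $f_{n'}$ (once it is known to be bounded) differs from $\phi$ by at most a constant depending on $\T$ and $\beta$, and this can be absorbed by enlarging $C$ to produce a single constant $c_1$ that works uniformly in $n'$ above whatever threshold is needed for the application.

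The main obstacle is the precise regularity hypothesis required by the cited form of the local CLT. Standard statements with an $O(1/\sqrt{n})$ density rate require that the density of $h$ itself be bounded, or at least that some finite-fold convolution $h^{*k}$ have a bounded density (equivalently, $\hat{h} \in L^k$ for some finite $k$). The hypothesis of the theorem guarantees only existence of a pdf for $g$, not its boundedness. If the cited Petrov statement does require boundedness, then an intermediate smoothing step is needed: since $h$ has a pdf and is bounded, its characteristic function $\hat{h}$ satisfies $|\hat{h}(\xi)| \to 0$ as $|\xi| \to \infty$ by Riemann--Lebesgue and $|\hat{h}(\xi)| < 1$ for all $\xi \neq 0$ (the distribution is non-lattice because it has a density). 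A standard Fourier argument then shows that some finite-fold convolution of $h$ has a bounded continuous density, which is enough to apply Petrov's theorem for all $n' \geq k$. This smoothing reduction is the only delicate part of the plan; everything else is just moment bookkeeping and invocation of the reference.
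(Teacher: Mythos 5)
Your overall plan is the same as the paper's: the lemma's justification in the paper is a one-line citation to a local CLT for densities (Petrov, Ch.~VII, \S 4), and your proof is likewise a reduction to that reference after verifying mean zero, unit variance, finite third moment (from boundedness of $h$), and existence of a density. You are more careful than the paper about checking the hypotheses, and you correctly flag the real issue: the $O(1/\sqrt{n'})$ density rate in Petrov-style local limit theorems requires that the density of $h$, or of some finite convolution power $h^{*k}$, be \emph{bounded}, and this is not a consequence of merely assuming that $g$ (hence $h$) has a probability density function.

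The problem is that your proposed ``smoothing reduction'' does not close this gap. You argue that $|\hat{h}(\xi)| \to 0$ by Riemann--Lebesgue together with $|\hat{h}(\xi)| < 1$ for $\xi \neq 0$ implies $\hat{h} \in L^k$ for some finite $k$, i.e.\ that some convolution power has a bounded continuous density. That implication is false. Riemann--Lebesgue gives decay with no rate: there exist absolutely continuous distributions (e.g.\ constructed via P\'olya's criterion) whose characteristic functions decay slower than any power of $1/\log|\xi|$, so that $|\hat{h}|^k$ is not integrable for any $k$. For such laws, no convolution power has a bounded density, and by Prokhorov's characterization the densities of the normalized sums do not even converge uniformly to $\phi$, let alone at rate $O(1/\sqrt{n'})$. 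Boundedness of the support of $h$ does not help here; a compactly supported $L^1$ density can still have a slowly decaying characteristic function. So the step you identified as ``the only delicate part'' is in fact a genuine gap, and the deficiency it is trying to repair is a deficiency in the lemma (and the parent theorem's hypothesis ``has a probability density function'') as stated. The cleanest fix is to strengthen the hypothesis to ``has a \emph{bounded} probability density function'' (or: some $n_0$-fold convolution does), at which point your reduction to Petrov goes through exactly as the paper intends.
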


\begin{proof}[Proof of lemma]
This lemma follows immediately from various ``local limit theorems'' (e.g., see Statement 5 in Section 4 of Chapter VII in \cite{Pet75}).
\end{proof}

Let $c_1$ be the constant in the above lemma, let $\gamma = \frac{e^\eps-1}{e^\eps+1}$, and let $\sigma^2 = \Var_{t' \sim \T}[g(t')]$. Let 
\begin{align*}
B = \left\{w \in \R : |w| > \sqrt{2\ln\left(\frac{\gamma \sqrt{n'}}{c_1\sqrt{2\pi}}\right)} - \frac{4\beta}{\sigma \sqrt{n'}}\right\}.
\end{align*}
Now, observe that
\begin{align*}
\Pr\left[\frac{V(t_i,\vec{t}_J)}{\sqrt{n'}} \in W\right] 
\leq \Pr\left[\frac{V(t_i,\vec{t}_J)}{\sqrt{n'}} \in W \cap \overline{B}\right] + \Pr\left[\frac{V(t_i,\vec{t}_J)}{\sqrt{n'}} \in B\right]. \tag{3}
\end{align*}

\begin{lemma}
\label{lem:twoChoicesCase1}
For each $w \in W \cap \overline{B}$,
\begin{align*}
\frac{\Pr[V(t_i, \vec{t}_J) / \sqrt{n'} = w]}{\Pr[V(t_i', \vec{t}_J) / \sqrt{n'} = w]} \leq \exp\left(\eps + O\left(\frac{\sqrt{\ln n'}}{\sqrt{n'}}\right)\right).
\end{align*}
\end{lemma}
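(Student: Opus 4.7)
The plan is to reduce the lemma to a comparison of two slightly-shifted copies of the pdf of $V(\vec{t}_J)/\sqrt{n'}$, then use the local limit theorem to replace that pdf by the standard normal pdf (with controlled additive error), and finally bound the ratio of two Gaussian values whose arguments differ by $O(1/\sqrt{n'})$.

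First, since $V(t_i,\vec{t}_J) = h(t_i) + V(\vec{t}_J)$, the pdf of $V(t_i,\vec{t}_J)/\sqrt{n'}$ at $w$ equals $f_{n'}(w - h(t_i)/\sqrt{n'})$, and similarly for $t_i'$. So the ratio in question is $f_{n'}(w - h(t_i)/\sqrt{n'}) / f_{n'}(w - h(t_i')/\sqrt{n'})$. Since $g(t) = u(t,A) - u(t,B) \in [-2\beta,2\beta]$, we have $|h(t)| \leq 4\beta/\sigma$, so both shifts are at most $4\beta/(\sigma\sqrt{n'})$ in magnitude. In particular, if $a = w - h(t_i)/\sqrt{n'}$ and $b = w - h(t_i')/\sqrt{n'}$, then for $w \in \overline{B}$,
\begin{align*}
|a|,\, |b| \;\leq\; |w| + \tfrac{4\beta}{\sigma\sqrt{n'}} \;\leq\; \sqrt{2\ln\!\left(\tfrac{\gamma\sqrt{n'}}{c_1\sqrt{2\pi}}\right)}.
\end{align*}

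Next, observe that at precisely this threshold, $\phi(x) = c_1/(\gamma\sqrt{n'})$; hence $\phi(a), \phi(b) \geq c_1/(\gamma\sqrt{n'})$. Applying the local limit lemma ($|f_{n'}(x)-\phi(x)| \leq c_1/\sqrt{n'}$) then gives
\begin{align*}
f_{n'}(a) \;\leq\; \phi(a) + \tfrac{c_1}{\sqrt{n'}} \;\leq\; (1+\gamma)\,\phi(a), \qquad f_{n'}(b) \;\geq\; (1-\gamma)\,\phi(b),
\end{align*}
so that $f_{n'}(a)/f_{n'}(b) \leq \tfrac{1+\gamma}{1-\gamma} \cdot \phi(a)/\phi(b) = e^\eps \cdot \phi(a)/\phi(b)$, by the choice $\gamma = (e^\eps-1)/(e^\eps+1)$.

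It remains to bound the Gaussian ratio. Writing $\phi(a)/\phi(b) = \exp((b^2-a^2)/2) = \exp(\tfrac{1}{2}(b-a)(b+a))$, we have $|b-a| = |h(t_i)-h(t_i')|/\sqrt{n'} \leq 8\beta/(\sigma\sqrt{n'})$ and $|a+b| \leq 2|w| + O(1/\sqrt{n'}) = O(\sqrt{\ln n'})$. Therefore $|b^2 - a^2| = O(\sqrt{\ln n'}/\sqrt{n'})$, which gives $\phi(a)/\phi(b) \leq \exp(O(\sqrt{\ln n'}/\sqrt{n'}))$. Combining yields $f_{n'}(a)/f_{n'}(b) \leq \exp(\eps + O(\sqrt{\ln n'}/\sqrt{n'}))$ as required.

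The main obstacle is ensuring that the additive error $c_1/\sqrt{n'}$ from the local limit theorem does not swamp the Gaussian values $\phi(a),\phi(b)$, which could blow up the multiplicative ratio. This is exactly why the threshold defining $B$ was chosen as $\sqrt{2\ln(\gamma\sqrt{n'}/(c_1\sqrt{2\pi}))}$ (with the $-4\beta/(\sigma\sqrt{n'})$ correction to absorb the shift): it calibrates $B$ so that on $\overline{B}$ the Gaussian is at least a $1/\gamma$ multiple of the worst-case error, making the relative error at most $\gamma$ and producing the factor $(1+\gamma)/(1-\gamma) = e^\eps$ cleanly.
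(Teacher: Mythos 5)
Your proposal is correct and follows the same route as the paper's own proof: identify the ratio of shifted pdfs, invoke the local limit lemma, use the calibration of $B$ (and the bound $|h| \le 4\beta/\sigma$) to turn the additive error $c_1/\sqrt{n'}$ into a multiplicative factor of at most $\gamma$, obtain $(1+\gamma)/(1-\gamma) = e^\eps$, and bound the remaining Gaussian ratio via $|b^2-a^2| \le |a-b|\,|a+b| = O(\sqrt{\ln n'}/\sqrt{n'})$. The only cosmetic difference is that you note $\phi$ equals $c_1/(\gamma\sqrt{n'})$ exactly at the threshold defining $B$, rather than stating the equivalent inequality $c_1/\sqrt{n'} \le \gamma\,\phi(\cdot)$ directly as the paper does.
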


\begin{proof}[Proof of lemma]
Fix $w \in W \cap \overline{B}$. Let $\phi$ be the pdf of the standard normal distribution. By the previous lemma, we have
\begin{align*}
\frac{\Pr[V(t_i, \vec{t}_J) / \sqrt{n'} = w]}{\Pr[V(t_i', \vec{t}_J) / \sqrt{n'} = w]}
= \ & \frac{\Pr[\frac{1}{\sqrt{n'}} \sum_{j \in J} h(t_j) = w - h(t_i) / \sqrt{n'}]}{\Pr[\frac{1}{\sqrt{n'}} \sum_{j \in J} h(t_j) = w - h(t_i') / \sqrt{n'}]} \\
\leq \ & \frac{\phi(w-h(t_i)/\sqrt{n'}) + \frac{c_1}{\sqrt{n'}}}{\phi(w-h(t_i')/\sqrt{n'}) - \frac{c_1}{\sqrt{n'}}}, \tag{4}
\end{align*}
where $c_1 \geq 0$ is some constant. Now, we note that
\begin{align*}
\frac{c_1}{\sqrt{n'}} \leq \gamma \cdot \phi(w - h(t_i)/\sqrt{n'}), \tag{5}
\end{align*}
since $\gamma \cdot \phi(w - h(t_i)/\sqrt{n'}) = \frac{\gamma}{\sqrt{2\pi}} \exp\left(-\frac{1}{2}\left(w - \frac{h(t_i)}{\sqrt{n'}}\right)^2\right) \geq \frac{\gamma}{\sqrt{2\pi}} \exp\left(-\frac{1}{2}\left(|w| + \frac{|h(t_i)|}{\sqrt{n'}}\right)^2\right)$, $|h(t_i)| \leq \frac{4\beta}{\sigma}$, and $|w| + \frac{4\beta}{\sigma \sqrt{n'}} \leq \sqrt{2\ln\left(\frac{\gamma \sqrt{n'}}{c_1\sqrt{2\pi}}\right)}$ (since $w \notin B$). Similarly, we also have
\begin{align*}
\frac{c_1}{\sqrt{n'}} \leq \gamma \cdot \phi(w - h(t_i')/\sqrt{n'}). \tag{6}
\end{align*}
Now, combining (4) with (5) and (6), we have 
\begin{align*}
\frac{\Pr[V(t_i, \vec{t}_J) / \sqrt{n'} = w]}{\Pr[V(t_i', \vec{t}_J) / \sqrt{n'} = w]}
\leq \ & \frac{\phi(w-h(t_i)/\sqrt{n'}) + \gamma \cdot \phi(w - h(t_i)/\sqrt{n'})}{\phi(w-h(t_i')/\sqrt{n'}) - \gamma \cdot \phi(w - h(t_i')/\sqrt{n'})} \\
= \ & \frac{1+\gamma}{1-\gamma} \cdot \frac{\phi(w-h(t_i)/\sqrt{n'})}{\phi(w-h(t_i')/\sqrt{n'})} \\
= \ & \exp(\eps) \cdot \exp\left(\frac{1}{2} \cdot \left(\left(w-\frac{h(t_i')}{\sqrt{n'}}\right)^2 - \left(w-\frac{h(t_i)}{\sqrt{n'}}\right)^2\right)\right) \\
\leq \ & \exp(\eps) \cdot \exp\left(\frac{1}{2} \cdot \left|\frac{h(t_i)}{\sqrt{n'}} - \frac{h(t_i')}{\sqrt{n'}}\right| \cdot \left|2w - \frac{h(t_i')}{\sqrt{n'}} - \frac{h(t_i)}{\sqrt{n'}}\right|\right) \\
\leq \ & \exp(\eps) \cdot \exp\left(\frac{1}{2} \cdot \frac{8\beta}{\sigma\sqrt{n'}} \cdot \left(2 \sqrt{2 \ln\left(\frac{\gamma\sqrt{n'}}{c_1\sqrt{2\pi}}\right)} \right)\right) \\
\leq \ & \exp\left(\eps + O\left(\frac{\sqrt{\ln n'}}{\sqrt{n'}}\right)\right),
\end{align*}
where the second last inequality follows from the fact that the function $h$ is bounded by $\frac{4\beta}{\sigma}$ (in absolute value) and $|w| + \frac{4\beta}{\sigma\sqrt{n'}} \leq \sqrt{2 \ln\left(\frac{\gamma\sqrt{n'}}{c_1\sqrt{2\pi}}\right)}$ (since $w \notin B$).
\end{proof}

We will need to use the following lemma later:

\begin{lemma}
Let $F_{n'}$ be the cdf of $\frac{V(\vec{t}_J)}{\sqrt{n'}} = \frac{1}{\sqrt{n'}} \sum_{j \in J} h(t_j)$ (where each $t_j \sim \T$ independently), and let $\Phi$ be the cdf of the standard normal distribution. Then, $\frac{V(\vec{t}_J)}{\sqrt{n'}}$ converges in distribution uniformly to the standard normal random variable as follows: 
\begin{align*}
\sup_{x \in \R} |F_{n'}(x) - \Phi(x)| = O\left(\frac{1}{\sqrt{n'}}\right).
\end{align*}
\end{lemma}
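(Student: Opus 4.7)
The claim is precisely a quantitative central limit theorem, and the natural tool is the \textbf{Berry-Esseen theorem}. Recall that if $X_1, \ldots, X_{n'}$ are i.i.d.\ with $\E[X_j] = 0$, $\Var[X_j] = 1$, and $\rho := \E[|X_j|^3] < \infty$, then the cdf $F_{n'}$ of $\frac{1}{\sqrt{n'}}\sum_j X_j$ satisfies
\begin{align*}
\sup_{x \in \R} |F_{n'}(x) - \Phi(x)| \leq \frac{C \rho}{\sqrt{n'}}
\end{align*}
for some universal constant $C > 0$.

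My plan is simply to check that the random variables $h(t_j)$ (for $t_j \sim \T$) satisfy the hypotheses of Berry-Esseen. By the definition of $h$ given just above (2) in the preceding proof, we have $\E_{t \sim \T}[h(t)] = 0$ and $\Var_{t \sim \T}[h(t)] = 1$, so the first two conditions are immediate. For the third moment, I would use the boundedness of $g(t) = u(t,A) - u(t,B)$: since $u$ takes values in $[-\beta,\beta]$, we have $|g(t)| \leq 2\beta$, and so $|g(t) - \E_{t' \sim \T}[g(t')]| \leq 4\beta$. Dividing by $\sigma = (\Var_{t' \sim \T}[g(t')])^{1/2} > 0$ (which is positive by the non-zero variance assumption), we conclude $|h(t)| \leq 4\beta/\sigma$. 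In particular, $\E[|h(t)|^3] \leq (4\beta/\sigma)^3 < \infty$.

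Applying Berry-Esseen with $X_j = h(t_j)$ then yields
\begin{align*}
\sup_{x \in \R} |F_{n'}(x) - \Phi(x)| \leq \frac{C \cdot (4\beta/\sigma)^3}{\sqrt{n'}} = O\left(\frac{1}{\sqrt{n'}}\right),
\end{align*}
as claimed. There is no real obstacle here since all the needed ingredients (zero mean, unit variance, bounded support) were already established when $h$ was defined; the entire proof is just an invocation of Berry-Esseen. The only thing worth flagging is that the implicit constant depends on $\beta$ and $\sigma$, but these are fixed parameters of the problem (independent of $n'$), so they are absorbed into the big-$O$.
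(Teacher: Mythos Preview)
Your proof is correct and takes essentially the same approach as the paper, which simply states that the lemma follows immediately from the Berry-Esseen theorem. You have gone further by explicitly verifying the hypotheses (zero mean, unit variance, and finite third moment via the bound $|h(t)| \leq 4\beta/\sigma$), which is perfectly appropriate and matches what the paper leaves implicit.
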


\begin{proof}[Proof of lemma]
This lemma follows immediately from the Berry-Esseen theorem.
\end{proof}

\begin{lemma}
\label{lem:twoChoicesCase2}
\begin{align*}
\Pr\left[\frac{V(t_i,\vec{t}_J)}{\sqrt{n'}} \in B\right] \leq O\left(\frac{1}{\eps \sqrt{n'}}\right).
\end{align*}
\end{lemma}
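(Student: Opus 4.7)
The plan is to use the Berry--Esseen theorem (the immediately preceding lemma) to reduce the tail bound on $V(t_i,\vec{t}_J)/\sqrt{n'}$ to a standard Gaussian tail bound, and then observe that the threshold defining $B$ was chosen precisely so that the Gaussian tail matches the target $O(1/(\eps\sqrt{n'}))$.

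First I would decompose $V(t_i,\vec{t}_J)/\sqrt{n'} = h(t_i)/\sqrt{n'} + V(\vec{t}_J)/\sqrt{n'}$. Since $|h(t_i)| \leq 4\beta/\sigma$, the event $\{V(t_i,\vec{t}_J)/\sqrt{n'} \in B\}$ implies $|V(\vec{t}_J)/\sqrt{n'}| > T'$, where $T' := \sqrt{2\ln(\gamma\sqrt{n'}/(c_1\sqrt{2\pi}))} - 8\beta/(\sigma\sqrt{n'})$ (we can assume $\gamma\sqrt{n'} \geq c_1\sqrt{2\pi}$, since otherwise $\eps\sqrt{n'} = O(1)$ and the claimed bound is trivial). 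The Berry--Esseen lemma then gives
\begin{align*}
\Pr\!\left[\left|\tfrac{V(\vec{t}_J)}{\sqrt{n'}}\right| > T'\right] \leq \Pr[\,|Z| > T'\,] + O\!\left(\tfrac{1}{\sqrt{n'}}\right),
\end{align*}
where $Z \sim N(0,1)$.

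Next I would apply the standard Gaussian tail bound $\Pr[|Z|>t] \leq \tfrac{2}{t\sqrt{2\pi}} e^{-t^2/2}$ at $t=T'$. Expanding $(T')^2/2 = \ln(\gamma\sqrt{n'}/(c_1\sqrt{2\pi})) - O(\sqrt{\ln n'}/\sqrt{n'})$, I get $e^{-(T')^2/2} = O\!\left(\tfrac{1}{\gamma\sqrt{n'}}\right)$, so $\Pr[|Z|>T'] = O\!\left(\tfrac{1}{\gamma\sqrt{n'}\sqrt{\ln n'}}\right) = O\!\left(\tfrac{1}{\gamma\sqrt{n'}}\right)$. Since $\gamma = (e^\eps - 1)/(e^\eps+1) = \Theta(\eps)$ on $(0,1]$, this is $O(1/(\eps\sqrt{n'}))$, and the Berry--Esseen error $O(1/\sqrt{n'})$ is also $\leq O(1/(\eps\sqrt{n'}))$ because $\eps \leq 1$. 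Combining the two contributions finishes the proof.

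The only delicate point is the bookkeeping in the exponent: one needs to verify that the $8\beta/(\sigma\sqrt{n'})$ correction in $T'$ contributes only an $\exp(O(\sqrt{\ln n'}/\sqrt{n'}))$ multiplicative factor to $e^{-(T')^2/2}$, i.e., that the cross term $2 \cdot \sqrt{2\ln(\gamma\sqrt{n'}/(c_1\sqrt{2\pi}))} \cdot 8\beta/(\sigma\sqrt{n'})$ is $O(\sqrt{\ln n'}/\sqrt{n'})$; this is exactly the same lower-order computation that appeared in the proof of Lemma~\ref{lem:twoChoicesCase1}, so nothing new is required. The rest of the argument is just the tail bound for the normal distribution at a threshold engineered to produce the target rate.
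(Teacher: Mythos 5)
Your proof is correct and takes essentially the same route as the paper's: decompose $V(t_i,\vec{t}_J)/\sqrt{n'}$ as $h(t_i)/\sqrt{n'} + V(\vec{t}_J)/\sqrt{n'}$, use $|h(t_i)| \leq 4\beta/\sigma$ to push the event into $\{|V(\vec{t}_J)/\sqrt{n'}| > T'\}$, invoke Berry--Esseen to swap in a standard Gaussian, and then apply the Gaussian tail bound (Mill's inequality) at the engineered threshold, finishing with $\gamma = \Theta(\eps)$. Your $T'$ is exactly the paper's $x - 4\beta/(\sigma\sqrt{n'})$; the only addition is your explicit note that one may assume $\gamma\sqrt{n'} \geq c_1\sqrt{2\pi}$ (so the logarithm is nonnegative), which the paper leaves implicit but which is a reasonable point of hygiene.
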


\begin{proof}[Proof of lemma]
Let $x = \sqrt{2\ln\left(\frac{\gamma \sqrt{n'}}{c_1\sqrt{2\pi}}\right)} - \frac{4\beta}{\sigma \sqrt{n'}}$, let $\Phi$ be the cdf of the standard normal distribution, and let $X \sim \mathcal{N}(0,1)$. Observe that
\begin{align*}
\Pr\left[\frac{V(t_i,\vec{t}_J)}{\sqrt{n'}} \in B\right] 
\leq \ & \Pr\left[\left|\frac{h(t_i)}{\sqrt{n'}}\right| + \left|\frac{V(\vec{t}_J)}{\sqrt{n'}}\right| > x \right] \\
\leq \ & 1 - \Pr\left[\left|\frac{V(\vec{t}_J)}{\sqrt{n'}}\right| \leq x - \frac{4\beta}{\sigma \sqrt{n'}} \right] \\
\leq \ & 1 - \left(\Pr\left[\frac{V(\vec{t}_J)}{\sqrt{n'}} \leq x - \frac{4\beta}{\sigma \sqrt{n'}} \right] - \Pr\left[\frac{V(\vec{t}_J)}{\sqrt{n'}} \leq - \left(x - \frac{4\beta}{\sigma \sqrt{n'}}\right)\right]\right) \\
\leq \ & 1 - \left(\Phi\left(x - \frac{4\beta}{\sigma \sqrt{n'}}\right) - \Phi\left(-\left(x - \frac{4\beta}{\sigma \sqrt{n'}}\right)\right) - O\left(\frac{1}{\sqrt{n'}}\right)\right) \\
= \ & 1 - \Pr\left[|X| \leq x - \frac{4\beta}{\sigma \sqrt{n'}}\right] + O\left(\frac{1}{\sqrt{n'}}\right) \\
= \ & \Pr\left[|X| > x - \frac{4\beta}{\sigma \sqrt{n'}}\right] + O\left(\frac{1}{\sqrt{n'}}\right),
\end{align*}
where the second inequality follows from the fact that $|h(t_i)| \leq \frac{4\beta}{\sigma}$, and the fourth inequality follows from the previous lemma. Now, by Mill's inequality, we have $\Pr[|X| \geq y] \leq \sqrt{\frac{2}{\pi}} \cdot \frac{1}{y} e^{-y^2/2}$ for every $y > 0$, so
\begin{align*}
\Pr\left[\frac{V(t_i,\vec{t}_J)}{\sqrt{n'}} \in B\right] 
\leq \ & \sqrt{\frac{2}{\pi}} \cdot \frac{1}{x - \frac{4\beta}{\sigma \sqrt{n'}}} \exp\left(-\frac{1}{2} \left(x - \frac{4\beta}{\sigma \sqrt{n'}}\right)^2 \right) + O\left(\frac{1}{\sqrt{n'}}\right) \\
= \ & O\left(\frac{1}{\gamma \sqrt{n'}}\right) \\
= \ & O\left(\frac{1}{\eps \sqrt{n'}}\right).
\end{align*}
\end{proof}

Now, by Lemmas \ref{lem:twoChoicesCase1} and \ref{lem:twoChoicesCase2}, we have from (3) that
\begin{align*}
\Pr\left[\frac{V(t_i,\vec{t}_J)}{\sqrt{n'}} \in W\right] 
\leq \ & \Pr\left[\frac{V(t_i,\vec{t}_J)}{\sqrt{n'}} \in W \cap \overline{B}\right] + \Pr\left[\frac{V(t_i,\vec{t}_J)}{\sqrt{n'}} \in B\right] \\
\leq \ & \exp\left(\eps + O\left(\frac{\sqrt{\ln n'}}{\sqrt{n'}}\right)\right) \cdot \Pr\left[\frac{V(t_i',\vec{t}_J)}{\sqrt{n'}} \in W \cap \overline{B}\right] + O\left(\frac{1}{\eps \sqrt{n'}}\right) \\
\leq \ & \exp\left(\eps + O\left(\frac{\sqrt{\ln (n-k)}}{\sqrt{n-k}}\right)\right) \cdot \Pr\left[\frac{V(t_i',\vec{t}_J)}{\sqrt{n'}} \in W\right] + O\left(\frac{1}{\eps \sqrt{n-k}}\right).
\end{align*}
This completes the proof.
\end{proof}

The mechanism in Theorem~\ref{thm:twoChoicesMechanism} chooses a social alternative by applying some function $f$ on the difference between $\sum_{i=1}^n u(t_i, A)$ and $\sum_{i=1}^n u(t_i, B)$.
We note that the mechanism may already have certain properties, such as efficiency, truthfulness, individual rationality, etc.; by Theorem~\ref{thm:twoChoicesMechanism}, this mechanism also satisfies privacy and persistent approximate truthfulness, in addition to the original properties that it already satisfies. One obvious application of Theorem~\ref{thm:twoChoicesMechanism} is to let $u$ be a common utility function for the players, where the utility of player $i$ with type $t_i$ is $u(t_i,A)$ if $A$ is chosen, and is $u(t_i,B)$ if $B$ is chosen. If we define $f: \R \to S$ such that $f(x)=A$ if and only if $x >0$, then the mechanism maximizes social welfare. 

\begin{example}[\bf{Public Project}]\label{ex:pp}
The canonical public project problem (see, e.g. \cite{NRTV07}) is that the government is trying to decide whether it should spend cost $c$ to build a certain public project (e.g., a bridge). 
We can apply Theorem~\ref{thm:twoChoicesMechanism} to this problem by letting social alternative $A$ represent building the project, and $B$ represent not building it. 
A player's type $t\in [-\alpha,\alpha]$ is the utility that he/she will gain from the public project being built.
Since not building the project results in a utility of $0$ to everyone, we let $u(t,B)=0$ for all $t$, and $u(t,A)=t$ for all $t$. 

In this case, if the distribution $\T$ has a density function and nonzero variance then so does $u(t,A)-u(t,B)$.
We can then define a {mechanism} that makes a choice based on the cost $c$ and the sum of the utilities $u(t,A)$ over the players.
By Theorem~\ref{thm:twoChoicesMechanism}, this mechanism is Bayesian differentially private and persistent approximately truthful. 
For example, if $n$ is the number of players, then from Theorem~\ref{thm:twoChoicesMechanism} we can conclude that the mechanism is $(\Omega(n),O(\frac{1}{\sqrt[4]{n}}), O(\frac{1}{\sqrt[4]{n}}))$-Bayesian differentially private and $(\Omega(n), \Omega(\sqrt[8]{n}))$-persistent $O(\frac{1}{\sqrt[8]{n}})$-Bayes-Nash truthful.
This means that even if a constant fraction of the players announce their types arbitrarily, and coalitions of size $O({\sqrt[8]{n}})$ can be formed, the privacy of each player is still protected, and any possible gain from lying is bounded by $O(\frac{1}{\sqrt[8]{n}})$.
As $n\rightarrow \infty$, any possible gain from lying converges to $0$, making the mechanism persistent Bayes-Nash truthful.

By building the project if and only if the sum of the utilities for $A$ exceeds the cost $c$, we have a mechanism that maximizes social welfare (here, we include the government's utility as part of the social welfare).
We also note that although the VCG mechanism can be used in this setting, it uses payments, which may not be appropriate. 
\end{example}

The functions $u$ in Theorem~\ref{thm:twoChoicesMechanism} can also be some arbitrary objective function, such as utilization/revenue, as in the next example.
\begin{example}[\bf{Golf Course vs.~Low Cost Swimming Pool}]\label{ex:golf}
Suppose a recreational club wants to build either a golf course or a low-cost swimming pool. 
Based on past experience, a member's expected utilization of a golf course is a function $u(\cdot,A)$ that is increasing with respect to his income.
On the other hand, a member's expected utilization of a low-cost swimming pool is a function $u(\cdot,B)$ that is decreasing with respect to his income.
With this model, the club can use a mechanism that first asks each member $i$ to report his annual income $t_i$, then computes the difference $\sum_{i=1}^n u(t_i, A) - \sum_{i=1}^n u(t_i,B)$, and then makes a choice to maximize expected utilization (and hence revenue as well).
If the random variable $u(t,A) - u(t,B)$, where $t \sim \T$, has non-zero variance and a pdf, and the utility functions are bounded, then by Theorem~\ref{thm:twoChoicesMechanism}, this mechanism is Bayesian differentially private and persistent approximate Bayes-Nash truthful. 
\end{example}

\subsection{Social Welfare Mechanisms}
\label{sec:swMechanisms}

In this section, we present a class of mechanisms that make choices based on the social welfare provided by each social alternative. 
An important subset of these mechanisms is the set of mechanisms that maximize social welfare. 

In this section, a type $t\in T$ is a valuation function that assigns a valuation to each social alternative $s\in S$.
In many settings, it is reasonable to assume that the players' valuations for each social alternative follow a normal distribution, since the normal distribution has been frequently used to model many natural and social phenomena.
For convenience of presentation, we will use the \emph{standard} normal distribution $\mathcal{N}(0,1)$ in our theorems below. 
However, our theorems can be easily generalized to work with arbitrary normal distributions.
In any case, it is easy to see that given any normal distribution over the valuations, the valuations can be translated and scaled to obtain the standard normal distribution.

For any reasonable mechanism, it is natural to have a bound on the set of possible valuations --- it would be unreasonable to allow a player to report an arbitrarily high or low valuation (e.g. $2^{100}$) and single-handedly influence the choice of the mechanism. 
Therefore, we will restrict the possible valuations to the interval $[-\alpha,\alpha]$ for some value $\alpha>0$. 
As a result, our type space $T$ will be the set of all valuation functions $t: S \to [-\alpha,\alpha]$.
Furthermore, we will assume that the players' valuations for each social alternative follow the standard normal distribution.
However, because of the bound on the set of valuations, we will use the truncated standard normal distribution obtained by conditioning $\mathcal{N}(0,1)$ to lie on the interval $[-\alpha,\alpha]$. We denote this distribution by $\mathcal{N}(0,1)_{[-\alpha,\alpha]}$.

For simplicity, we will first present the following theorem, which is a special case of our more general result (Theorem~\ref{thm:generalSWMechanism}).
The following theorem states that if each player's valuation for each social alternative is distributed as the truncated standard normal distribution $\mathcal{N}(0,1)_{[-\alpha,\alpha]}$, then any mechanism that makes a choice based on the set of total valuations for each social alternative is Bayesian differentially private and persistent approximate Bayes-Nash truthful.

\begin{theorem}[{\bf Social welfare mechanisms}]
\label{thm:sw}
Let $S = \{s_1, \ldots, s_m\}$ be a set of $m$ social alternatives. Let the type space $T$ be the set of all valuation functions $t: S \to [-\alpha,\alpha]$ on $S$, where $\alpha = \Theta(\sqrt[4]{n})$. Let $\T$ be the distribution over $T$ obtained by letting $t(s)\sim \mathcal{N}(0,1)_{[-\alpha,\alpha]}$ for each $s \in S$ independently. For each player $i$, let the utility function for player $i$ be $u_i(t_i,s) = t_i(s)$.

Let $\mathsf{sw}_j(\vec{t}) = \sum_{i=1}^n t_i(s_j)$ be the (reported) social welfare for the social alternative $s_j$. Let $M: T^n \to S$ be any mechanism such that 
\begin{align*}
M(\vec{t}) = f(\mathsf{sw}_1(\vec{t}), \ldots, \mathsf{sw}_m(\vec{t}))
\end{align*}
for some function $f: \R^m \to S$. Then, for every constant $c < 1$, every $k \leq c \cdot n$, and every $0 < \eps \leq 1$, $M$ satisfies the following properties with $\delta = O(e^{-\Omega(\frac{\eps^2}{m^2} \cdot \sqrt{n}) + \ln (m\sqrt{n})})$:
\begin{enumerate}
  \item \emph{Privacy:} $M$ is $(k,\eps,\delta)$-Bayesian differentially private.
  \item \emph{Persistent approximate truthfulness:} For every $1 \leq r \leq k+1$, $M$ is $(k-r+1,r)$-persistent $(r\eps + 2r\delta)(2\alpha)$-Bayes-Nash truthful.
\end{enumerate}
\end{theorem}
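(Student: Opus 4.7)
The plan is to prove $(k,\eps,\delta)$-Bayesian differential privacy; persistent approximate truthfulness then follows immediately by Theorem~\ref{thm:BDP=>Truthfulness}. Fix $i \in [n]$, $I \subseteq [n]\setminus\{i\}$ with $|I|\leq k$, types $t_i, t_i' \in T$, and announced types $\vec{t}'_I$; let $J = [n]\setminus (I\cup\{i\})$ and $n' = |J| \geq (1-c)n - 1$. Because $M$ depends on $\vec{t}$ only through the vector of reported social welfares $(\sw_1(\vec{t}),\ldots,\sw_m(\vec{t}))$, by post-processing it suffices to bound the $(\eps,\delta)$-distance between the distributions of this vector under $t_i$ versus $t_i'$. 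For each alternative $s_j$, I would write
\[
\sw_j(\vec{t}) = c_j + t_i(s_j) + Z_j, \qquad c_j := \sum_{l \in I} t_l'(s_j), \ \ Z_j := \sum_{l \in J} t_l(s_j),
\]
where each $t_l(s_j)$ is independent and distributed as $\mathcal{N}(0,1)_{[-\alpha,\alpha]}$. Crucially, since the truthful players' valuations are independent across alternatives, $Z_1,\ldots,Z_m$ are mutually independent, so the joint density of $(\sw_1,\ldots,\sw_m)$ factorizes as $\prod_{j=1}^m p_j$, where $p_j$ is the density of the appropriate shift of $Z_j$.

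I would then lift the one-dimensional argument of Theorem~\ref{thm:twoChoicesMechanism} to $m$ coordinates. Let $\sigma^2 := \Var[\mathcal{N}(0,1)_{[-\alpha,\alpha]}]$; since $\alpha = \Theta(\sqrt[4]{n})$, the truncated tails have Gaussian mass $e^{-\Omega(\sqrt{n})}$, so $\sigma^2 = 1 - o(1)$. Applying the local limit theorem of \cite{Pet75} to each $Z_j$ yields $|p_j(z) - \phi_*(z)| = O(1/\sqrt{n'})$ uniformly in $z$, where $\phi_*$ denotes the $\mathcal{N}(0,n'\sigma^2)$ density; the Berry--Esseen theorem supplies the corresponding uniform cdf bound. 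Define the good event $G := \{|Z_j| \leq \tau \text{ for all } j\}$ with $\tau = \Theta(\sqrt{n'\log(m/\delta)})$, tuned so that a union bound over $j$ (combining Mill's inequality with the Berry--Esseen correction) gives $\Pr[G^c] \leq \delta$; this calculation is what produces the $\ln(m\sqrt{n})$ additive term in the claimed $\delta$. On $G$, writing $\Delta_j := t_i'(s_j) - t_i(s_j) \in [-2\alpha, 2\alpha]$, I would bound the pointwise ratio of joint densities via the same $\gamma$-trick as in (4)--(6) of the proof of Theorem~\ref{thm:twoChoicesMechanism}:
\[
\prod_{j=1}^m \frac{p_j(z_j - \Delta_j)}{p_j(z_j)} \;\leq\; \exp\!\Bigl(\sum_{j=1}^m \tfrac{\Delta_j(2z_j - \Delta_j)}{2n'\sigma^2}\Bigr) \cdot (1+o(1))^m.
\]
Each summand is $O(\alpha(\tau+\alpha)/n')$, so the total log-ratio is $O(m\alpha\tau/n') = O(m\sqrt{\log(m/\delta)}/n^{1/4})$ after substituting $\alpha = \Theta(n^{1/4})$ and $n' = \Theta(n)$. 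Forcing this to be at most $\eps$ pins down $\tau^2 = O(\eps^2\sqrt{n}/m^2)$, which combined with the tail bound on $\Pr[G^c]$ yields the claimed $\delta = O(e^{-\Omega(\eps^2\sqrt{n}/m^2) + \ln(m\sqrt{n})})$.

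The hard part is the coupling between dimensionality and tail control: reducing the privacy bound to pointwise density ratios costs a factor of $m$ in the log-ratio, while constraining all $m$ sums $Z_j$ simultaneously costs another factor of $m$ inside the logarithm of the tail probability. Consequently the threshold $\tau$ must be small enough that the $m$-fold product of per-coordinate ratios stays below $e^\eps$ yet large enough that $\Pr[G^c] \leq \delta$, and balancing these opposing constraints is exactly what forces the quantitative tradeoff $m^2 \log(m/\delta) \ll \eps^2 \sqrt{n}$ visible in the theorem's $\delta$. A secondary subtlety, inherited directly from the two-alternative proof, is verifying that $\phi_*(z_j)$ dominates the additive $O(1/\sqrt{n'})$ local-CLT error on the good event; this imposes a further upper bound on $\tau$ of roughly $\sqrt{n'\log(\sqrt{n'}/\gamma)}$, playing the role analogous to the set $B$ in that earlier proof, and must be checked to hold simultaneously with the other constraints on $\tau$.
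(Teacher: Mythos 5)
Your plan correctly reduces privacy to a density-ratio bound on the $m$-dimensional vector of social welfares, and it correctly observes that independence across alternatives makes the joint density factor. However, the local-limit-theorem / Berry--Esseen machinery you import from Theorem~\ref{thm:twoChoicesMechanism} cannot deliver the exponentially small $\delta = O(e^{-\Omega(\eps^2\sqrt{n}/m^2) + \ln(m\sqrt{n})})$ claimed in the statement. That machinery produces \emph{polynomial} additive error: the LLT error on each density $p_j$ is $\Theta(1/\sqrt{n'})$, and the Berry--Esseen correction you invoke to estimate $\Pr[G^c]$ likewise contributes an unavoidable $O(m/\sqrt{n'})$ term. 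Your own ``secondary subtlety'' makes this manifest: to have $\phi_*$ dominate the LLT error you must cap $\tau/\sqrt{n'}$ at $O(\sqrt{\log n'})$, but then the Gaussian tail at that radius is at best $\Theta(1/\mathrm{poly}(n'))$, so the two constraints on $\tau$ are \emph{incompatible} whenever $\delta$ is required to be smaller than polynomial in $1/n$ (which is precisely the regime of interest, since $e^{-\Omega(\eps^2\sqrt{n}/m^2)} \ll 1/\sqrt{n}$ once $\sqrt{n} \gg (m^2/\eps^2)\log n$). What you have written down is, in substance, a proof of the more general Theorem~\ref{thm:generalSWMechanism2} (arbitrary distributions), which the paper states with the correspondingly weaker $\delta = O(m^2/(\eps\sqrt{n}))$.

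The paper's proof of Theorem~\ref{thm:sw} (via Theorem~\ref{thm:generalSWMechanism}) avoids the LLT entirely: it exploits the fact that $\mathcal{N}(0,1)_{[-\alpha,\alpha]}$ with $\alpha = \Theta(n^{1/4})$ differs from the true normal $\mathcal{N}(0,1)$ by only $e^{-\Omega(\sqrt{n})}$ in $L^1$. Summing over the $n'm$ truncated coordinates still costs only $m e^{-\Omega(\sqrt{n})}$, so the truncated joint distribution can be replaced by an \emph{exact} product Gaussian $(\mathcal{N}(0,\sigma_1^2),\ldots,\mathcal{N}(0,\sigma_m^2))$ at exponentially small additive cost. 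The pointwise ratio of exact Gaussian densities on the good set and the Gaussian tail of the bad set are then computed with no CLT slack, giving the exponential $\delta$. To repair your proposal you would need to replace ``local limit theorem + Berry--Esseen'' with this ``truncated-to-exact normal substitution''; the rest of your outline (post-processing, factorization across alternatives, Mill's inequality on the tail, union bound over $m$ coordinates, balancing the density-ratio constraint $\tau \lesssim \eps n'/(m\alpha)$ against the tail bound) then goes through and yields the stated parameters.
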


In Theorem~\ref{thm:sw}, $\mathsf{sw}_j(\vec{t})$ represents the social welfare that will be achieved if the players' types (i.e., valuation functions) are $\vec{t}$ and the social alternative $s_j$ is chosen by the mechanism.
Thus, Theorem~\ref{thm:sw} says that any mechanism whose choice depends only on the set $\{ \mathsf{sw}_j(\vec{t})\}_{j\in [m]}$ of social welfare values satisfies Bayesian differential privacy and persistent approximate Bayes-Nash truthfulness, in addition to any properties that it may already satisfy (e.g., efficiency, truthfulness, individual rationality, etc.).
In particular, a mechanism that chooses a social alternative to maximize social welfare satisfies this requirement and achieves optimal efficiency with respect to social welfare. 

In Theorem~\ref{thm:sw}, the value $\alpha$ at which the standard normal distribution is truncated is chosen so that the truncated distribution is very close to the untruncated one.
This means that even if we had used the untruncated distribution instead, with high probability no valuation would fall outside the interval $[-\alpha,\alpha]$.

In the next theorem, we consider a setting where there is a set of available ``options'', and we allow the mechanism to choose any subset of these options. 
Thus, the set of social alternatives is the power set of the set of options. 
To keep the set of valuations tractable, instead of having a valuation for each social alternative, the players have a valuation for each option. 
Moreover, we allow for the flexibility where for each player, only certain options are relevant/applicable to that player.
We capture this flexibility by having a binary weight for each player-option pair. 
Note that Theorem~\ref{thm:sw} is the special case where the set of social alternatives consists of the sets of single options (i.e., the singletons), and where all options are considered relevant to all players.

The binary weight $w_{i,j}$ associated with player $i$ and option $o_j$ indicates whether option $o_j$ is relevant/applicable to player $i$. 
$w_{i,j}=1$ means that option $o_j$ is relevant/applicable to player $i$, so player $i$'s announced valuation is taken into account in the social welfare for option $o_j$.
On the other hand, $w_{i,j}=0$ means that player $i$'s valuation is ignored in the social welfare for option $o_j$. 
These weights are known to or set by the mechanism designer.
For example, perhaps only people with low income should have a voice in decisions regarding subsidized housing, and only people with disabilities should have a say in decisions regarding building accessibility laws. We now state our next theorem, which generalizes Theorem~\ref{thm:sw} to this new setting.

\begin{theorem}[{\bf Social welfare mechanisms with multiple options}]
\label{thm:generalSWMechanism}
Let the set $S$ of social alternatives be $2^O$, where $O = \{o_1, \ldots, o_m\}$ is a set of $m$ possible ``options''. Let the type space $T$ be the set of all valuation functions $t: O \to [-\alpha,\alpha]$ on $O$, where $\alpha = \Theta(\sqrt[4]{n})$. Let $\T$ be the distribution over $T$ obtained by letting $t(o)\sim\mathcal{N}(0,1)_{[-\alpha,\alpha]}$ for each option $o \in O$ independently. Suppose the weights $\{w_{i,j}\}_{i\in [n],j\in [m]}$ satisfy $\sum_{i=1}^n w_{i,j} \geq c_1 \cdot n$ for every option $o_j$, where $c_1 > 0$ is some constant. 

Let $\mathsf{sw}_j(\vec{t}) = \sum_{i=1}^n w_{i,j} \cdot t_i(o_j)$ be the (reported) social welfare for option $o_j$. Let $M: T^n \to S$ be any mechanism such that 
\begin{align*}
M(\vec{t}) = f(\mathsf{sw}_1(\vec{t}), \ldots, \mathsf{sw}_m(\vec{t}))
\end{align*}
for some function $f: \R^m \to S$. Then, for every constant $c_2 < c_1$, every $k \leq c_2 \cdot n$, and every $0 < \eps \leq 1$, $M$ satisfies the following properties with $\delta = O(e^{-\Omega(\frac{\eps^2}{m^2} \cdot \sqrt{n}) + \ln (m\sqrt{n})})$:
\begin{enumerate}
  \item \emph{Privacy:} $M$ is $(k,\eps,\delta)$-Bayesian differentially private.
  \item \emph{Persistent approximate truthfulness:} Suppose the utility functions are bounded by $\beta > 0$, i.e., the utility function for each player $i$ is $u_i: T \times S \to [-\beta,\beta]$. Then, for every $1 \leq r \leq k+1$, $M$ is $(k-r+1,r)$-persistent $(r\eps + 2r\delta)(2\beta)$-Bayes-Nash truthful.
\end{enumerate}
\end{theorem}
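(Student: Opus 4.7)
The plan is to reduce persistent approximate truthfulness to Bayesian differential privacy via Theorem~\ref{thm:BDP=>Truthfulness}, so the real work is establishing that $M$ is $(k,\eps,\delta)$-Bayesian differentially private with the claimed $\delta$. Since $M(\vec{t}) = f(\sw_1(\vec{t}), \ldots, \sw_m(\vec{t}))$, for fixed $i$, $t_i, t_i'$, $I$, and $\vec{t}'_I$, it suffices to bound the distribution of the random vector $(\sw_1, \ldots, \sw_m)$ induced by the two choices of player $i$'s reported type, where the randomness is over $\vec{t}_J$ with $J = [n] \setminus (I \cup \{i\})$. The crucial observation that makes the multivariate case tractable is that the valuations $\{t_l(o_j)\}_{l,j}$ are independent across both $l$ and $j$. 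Consequently, once $t_i$ and $\vec{t}'_I$ are fixed, the $\sw_j$'s are mutually independent (each depends on $\vec{t}_J$ only through the disjoint coordinate sets $\{t_l(o_j)\}_{l \in J}$). Thus the joint density factors as a product of marginals, and it suffices to prove, for each option $o_j$, a per-coordinate pdf-ratio bound of $e^{\eps/m}$ on a ``good'' set $G_j$ whose complement has probability at most $\delta/m$; multiplying across coordinates yields the $e^\eps$ bound on $G_1 \times \cdots \times G_m$, and a union bound handles the complement.

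For each option $o_j$, I would run an argument in the spirit of Theorem~\ref{thm:twoChoicesMechanism}. The random part $\sum_{l \in J} w_{l,j}\, t_l(o_j)$ is a sum of at least $(c_1 - c_2)n = \Theta(n)$ independent bounded random variables of uniformly nonzero variance (the truncated $\mathcal{N}(0,1)_{[-\alpha,\alpha]}$ still has variance $\Theta(1)$ since $\alpha = \Theta(\sqrt[4]{n})$ is large). So after centering and rescaling by $\sigma_j = \Theta(\sqrt{n})$, its density converges uniformly to the standard normal density at rate $O(1/\sqrt{n})$ by the local limit theorem. Switching $t_i$ to $t_i'$ shifts the mean of $\sw_j$ by $\Delta_j = w_{i,j}(t_i(o_j) - t_i'(o_j))$, bounded by $2\alpha = O(\sqrt[4]{n})$. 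The ratio of two Gaussian densities with common variance $\sigma_j^2$ and means differing by $\Delta_j$, evaluated at a point $v$, equals $\exp\!\bigl(\Delta_j(2(v - \mu_j) - \Delta_j)/(2\sigma_j^2)\bigr)$, which is at most $e^{\eps/m}$ whenever $v$ lies within $O(\eps \sigma_j^2/(m\Delta_j))$ of $\mu_j$, i.e., within $O(\eps \sqrt[4]{n}/m)$ standard deviations. The Gaussian tail outside that region has probability $\exp(-\Omega(\eps^2 \sqrt{n}/m^2))$, and absorbing the $O(1/\sqrt{n})$ local-limit error into the pdf-ratio estimate just as in the proof of Theorem~\ref{thm:twoChoicesMechanism} inflates $\delta_1$ by a $\ln(\sqrt{n})$ term in the exponent, yielding a per-coordinate failure probability of $\delta_1 = O(e^{-\Omega(\eps^2 \sqrt{n}/m^2) + \ln\sqrt{n}})$; the union bound over the $m$ options produces the claimed $\delta = O(e^{-\Omega(\eps^2 \sqrt{n}/m^2) + \ln(m\sqrt{n})})$.

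The main obstacle I anticipate is executing the per-coordinate pdf-ratio argument cleanly in the presence of both truncation and the weights $w_{l,j}$. One must verify that the weighted truncated sum still satisfies the Lyapunov/Berry-Esseen hypotheses uniformly in $n$ and in the weights (so that the $O(1/\sqrt{n})$ local-limit rate is legitimate), and then define the threshold of the good set $G_j$ so that the density-level ratio is bounded by $e^{\eps/m}$ after the local-limit error has been absorbed, analogously to the set $B$ in the proof of Theorem~\ref{thm:twoChoicesMechanism}, while simultaneously yielding a tail probability matching the $\delta$ in the statement. The choice $\alpha = \Theta(\sqrt[4]{n})$ is precisely what balances the per-player sensitivity $O(\alpha)$ against the noise scale $\sigma_j = \Theta(\sqrt{n})$: large enough that truncation has negligible effect on moments, yet small enough that $\Delta_j/\sigma_j \to 0$ fast enough to give meaningful privacy.
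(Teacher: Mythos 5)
Your high-level structure matches the paper's: both reduce the persistent approximate truthfulness claim to Bayesian differential privacy via Theorem~\ref{thm:BDP=>Truthfulness}, both exploit that the $\sw_j$'s are mutually independent conditional on $t_i$ and $\vec{t}'_I$ so the joint density factors, and both partition $\R^m$ into a ``good'' box (intersection of per-coordinate good sets) where the density ratio is controlled at $e^{\eps/m}$ per coordinate and a ``bad'' tail handled by a union bound. The difference is in \emph{how} you control the per-coordinate density ratio, and there the proposal has a real gap.

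You propose to get the per-coordinate pdf-ratio bound from a local limit theorem, mimicking Theorem~\ref{thm:twoChoicesMechanism}, and then ``absorb the $O(1/\sqrt{n})$ local-limit error'' to reach the exponentially small $\delta = O(e^{-\Omega(\eps^2\sqrt{n}/m^2)+\ln(m\sqrt n)})$. This does not go through. In the LLT approach, the ratio
\[
\frac{\phi(v) + c/\sqrt{n'}}{\phi(v') - c/\sqrt{n'}}
\]
is only controlled when the additive error $c/\sqrt{n'}$ is a small multiple of $\phi(v')$, i.e., roughly when $\phi(v') \gtrsim c/(\gamma\sqrt{n'})$ with $\gamma\approx\eps/m$. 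That restricts the good set to $|v| = O(\sqrt{\log n})$, so the bad-set probability is at best polynomial, $O(m/(\eps\sqrt n))$ --- exactly the $\delta$ you get in Theorems~\ref{thm:twoChoicesMechanism} and~\ref{thm:generalSWMechanism2}, \emph{not} the exponentially small $\delta$ claimed here. You cannot extend the good set to the $|v| = \Theta(\eps n^{1/4}/m)$ scale you need for the exponential tail, because there $\phi(v) = e^{-\Omega(\eps^2\sqrt n / m^2)}$ is already far smaller than the LLT error term $1/\sqrt n$, and the ratio bound collapses. In short: the LLT error is additive in the density, so it can never be dominated by a density that has already decayed below $1/\sqrt n$.

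The paper sidesteps this by never invoking an LLT for Theorem~\ref{thm:generalSWMechanism}. It instead exploits that the valuations are (truncated) Gaussian: it replaces the truncated normals by untruncated $\mathcal N(0,1)$ variables, incurring only a \emph{total variation} error of $O(n'm\cdot\frac{1}{n^{1/4}}e^{-\Omega(\sqrt n)}) = m\,e^{-\Omega(\sqrt n)}$ (since $\alpha = \Theta(n^{1/4})$ makes truncation exponentially unlikely), and then works with the \emph{exact} Gaussian density of the untruncated sum $\vec Z' \sim (\mathcal N(0,\sigma_1^2),\ldots,\mathcal N(0,\sigma_m^2))$. Because the density is exactly Gaussian, the per-coordinate ratio $\exp(\Delta_\ell(2v_\ell - \cdots)/(2\sigma_\ell^2))$ can be bounded by $e^{\eps/m}$ out to $|v_\ell| \le \eps\sigma_{\min}^2/(2m\alpha) - \alpha = \Theta(\eps n^{3/4}/m)$ in the unnormalized scale (i.e., $\Theta(\eps n^{1/4}/m)$ standard deviations), with no additive density-approximation error to poison the ratio. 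The Gaussian tail outside this set is then $e^{-\Omega(\eps^2\sqrt n/m^2)}$ by Mill's inequality, and adding the $m\,e^{-\Omega(\sqrt n)}$ truncation cost gives the stated $\delta$. This exact-Gaussian-plus-truncation-comparison step is precisely what the normal-distribution hypothesis buys you, and is the missing idea in your proposal; without it you are effectively re-proving the weaker Theorem~\ref{thm:generalSWMechanism2}.
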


The proof idea \emph{roughly} works as follows. Persistent approximate truthfulness follows from Bayesian differential privacy and Theorem \ref{thm:BDP=>Truthfulness}, so we only have to show that the mechanism $M$ is Bayesian differentially private. 

Since the mechanism chooses a social alternative based only on $(\sw_1(\vec{t}), \ldots, \sw_m(\vec{t}))$, it suffices to show that $(\sw_1(\vec{t}), \ldots, \sw_m(\vec{t}))$ is Bayesian differentially private. Now, consider a player $i$ changing his/her announced type from $t_i$ to $t_i'$. This changes only one of the components of $(\sw_1(\vec{t}), \ldots, \sw_m(\vec{t}))$ by at most $2\alpha$. However, the randomness of the other players' types can easily blur/smooth out this change. Let $\vec{Z}$ be the random variable $(\sw_1(\vec{t}_{-i}), \ldots, \sw_m(\vec{t}_{-i}))$ where the other players' types $\vec{t}_{-i}$ are each distributed according to $\T$ in the theorem, which uses the truncated standard normal distribution $\mathcal{N}(0,1)_{[-\alpha,\alpha]}$. Let $\vec{Z}'$ be the same as $\vec{Z}$ except that instead of using $\mathcal{N}(0,1)_{[-\alpha,\alpha]}$, $\vec{Z}'$ uses the actual standard normal distribution $\mathcal{N}(0,1)$. 

We first note that the distribution of $\vec{Z}$ is a good approximation of the distribution of $\vec{Z}'$. Then, we note that $\vec{Z}$ is a random vector with independent components, each of which has a normal distribution with $\Omega(n)$ variance. Each of these normal distributions is relatively smooth at every point that is not too far out into the tails, i.e., the probability density at two nearby points is relatively close to each other, provided that the two points are not too far out into the tails. Now, we observe that a sample from these normal distributions will not be too far out into the tails with high probability. Together, these facts imply that the distribution of $(\sw_1(\vec{t}), \ldots, \sw_m(\vec{t}))$ when player $i$ reports $t_i$ is relatively close to the distribution when player $i$ reports $t_i'$. This shows that $(\sw_1(\vec{t}), \ldots, \sw_m(\vec{t}))$ is Bayesian differentially private.

\begin{proof}
The second property follows from the first property and Theorem \ref{thm:BDP=>Truthfulness}. We now show the first property. Let $i \in [n]$, let $I \subseteq [n] \setminus \{i\}$ with $|I| \leq k$, let $t_i, t_i' \in T$, let $\vec{t}'_I \in T^{|I|}$, and let $Y \subseteq S$. Let $J = [n] \setminus (I \cup \{i\})$ and $\vec{t}_J \sim \T^{|J|}$. Let $n' = |J|$. Given a subset $A$ of players with reported types $\vec{t}_A$, let $\mathsf{sw}_\ell(\vec{t}_A) = \sum_{j \in A} w_{j,\ell} \cdot t_j(o_\ell)$. We need to show that
\begin{align*}
  \Pr[M(t_i, \vec{t}'_I, \vec{t}_J) \in Y] \leq e^{\eps} \Pr[M(t_i', \vec{t}'_I, \vec{t}_J) \in Y] + \delta,
\end{align*}
which is equivalent to 
\begin{align*}
\Pr[f(\mathsf{sw}_1(t_i, \vec{t}'_I, \vec{t}_J), \ldots, \mathsf{sw}_m(t_i, \vec{t}'_I, \vec{t}_J)) \in Y] \leq e^{\eps} \Pr[f(\mathsf{sw}_1(t_i', \vec{t}'_I, \vec{t}_J), \ldots, \mathsf{sw}_m(t_i', \vec{t}'_I, \vec{t}_J)) \in Y] + \delta \tag{1}
\end{align*}
for some function $f: \R \to S$. To show (1), it is easy to see that it suffices to show that for every measurable set $W \subseteq \R^m$,
\begin{align*}
\Pr[(\mathsf{sw}_1(t_i,\vec{t}_J), \ldots, \mathsf{sw}_m(t_i, \vec{t}_J)) \in W] \leq e^{\eps} \Pr[(\mathsf{sw}_1(t_i',\vec{t}_J), \ldots, \mathsf{sw}_m(t_i', \vec{t}_J)) \in W] + \delta. \tag{2}
\end{align*}
To this end, fix a measurable set $W \subseteq \R^m$. Let $\sigma_{\min}^2 = \displaystyle\min_{\ell \in [m]} \sum_{j \in J} w_{j,\ell}$, and let 
\begin{align*}
B = \{\vec{w} \in \R^m : |w_\ell| > \frac{\eps \sigma_{\min}^2}{2m\alpha} - \alpha \text{ for some } \ell \in [m]\}. 
\end{align*}
Now, observe that
\begin{align*}
& \Pr[(\mathsf{sw}_1(t_i,\vec{t}_J), \ldots, \mathsf{sw}_m(t_i, \vec{t}_J)) \in W] \\
\leq \ & \Pr[(\mathsf{sw}_1(t_i,\vec{t}_J), \ldots, \mathsf{sw}_m(t_i, \vec{t}_J)) \in W \cap \overline{B}] + \Pr[(\mathsf{sw}_1(t_i,\vec{t}_J), \ldots, \mathsf{sw}_m(t_i, \vec{t}_J)) \in B]. \tag{3}
\end{align*}

\begin{lemma}
\label{lem:swMechCase1}
\begin{align*}
\ & \Pr[(\mathsf{sw}_1(t_i,\vec{t}_J), \ldots, \mathsf{sw}_m(t_i, \vec{t}_J)) \in W \cap \overline{B}] \\
\leq \ & e^{\eps} \Pr[(\mathsf{sw}_1(t_i',\vec{t}_J), \ldots, \mathsf{sw}_m(t_i', \vec{t}_J)) \in W \cap \overline{B}] + m e^{-\Omega(\sqrt{n})}.
\end{align*}
\end{lemma}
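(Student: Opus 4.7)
I would prove the lemma in two stages: first, couple the truncated-normal noise vector $\vec{Z} := (\sw_1(\vec{t}_J), \ldots, \sw_m(\vec{t}_J))$ to an untruncated analogue $\vec{Z}'$ whose coordinates are independent $\mathcal{N}(0, \sigma_\ell^2)$ with $\sigma_\ell^2 := \sum_{j \in J} w_{j,\ell}$; second, perform an explicit Gaussian density-ratio calculation on the ``good'' region $\overline{B}$. The coupling step is quick: since $\alpha = \Theta(\sqrt[4]{n})$, we have $\Pr[|\mathcal{N}(0,1)| > \alpha] \leq e^{-\alpha^2/2} = e^{-\Omega(\sqrt{n})}$, so a coupling together with a union bound over the $|J| \cdot m \leq nm$ underlying samples gives total-variation distance $d_{TV}(\vec{Z}, \vec{Z}') \leq nm \cdot e^{-\Omega(\sqrt{n})} = m \cdot e^{-\Omega(\sqrt{n})}$, the polynomial $n$ factor being absorbed into the exponent. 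This adds a single $m \cdot e^{-\Omega(\sqrt{n})}$ slack term at the end.

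For the density-ratio step, write the social-welfare vector when player $i$ reports $t_i$ (resp.\ $t_i'$) as $\vec{T}_i + \vec{Z}$ (resp.\ $\vec{T}_{i'} + \vec{Z}$), where $(T_i)_\ell := w_{i,\ell}\, t_i(o_\ell)$. Because the $m$ coordinates of $\vec{Z}'$ are independent Gaussians (the independence of $t_j(o_\ell)$ across $\ell$ is part of the definition of $\T$), the density ratio of $\vec{T}_i + \vec{Z}'$ to $\vec{T}_{i'} + \vec{Z}'$ at any point $\vec{w}$ factors coordinatewise as
\begin{equation*}
\prod_{\ell=1}^{m} \exp\!\left(\frac{((T_{i'})_\ell - (T_i)_\ell)\bigl(2w_\ell - (T_i)_\ell - (T_{i'})_\ell\bigr)}{2\sigma_\ell^2}\right).
\end{equation*}
On $\overline{B}$ one has $|w_\ell| + \alpha \leq \frac{\eps \sigma_{\min}^2}{2m\alpha}$; combined with $|(T_i)_\ell|, |(T_{i'})_\ell| \leq \alpha$, $|(T_i)_\ell - (T_{i'})_\ell| \leq 2\alpha$, and $\sigma_\ell^2 \geq \sigma_{\min}^2$, each factor is at most $\exp(\eps \sigma_{\min}^2/(m\sigma_\ell^2)) \leq \exp(\eps/m)$, so the product over the $m$ coordinates is at most $e^\eps$. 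Integrating this pointwise ratio over $V := W \cap \overline{B}$ and then chaining with the TV bound on both sides (the two resulting slack terms collapsing into a single $m \cdot e^{-\Omega(\sqrt{n})}$ after absorbing constants) gives the lemma.

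\textbf{Main obstacle.} The delicate point is the calibration of the cutoff $\frac{\eps \sigma_{\min}^2}{2m\alpha} - \alpha$ defining $\overline{B}$: it must be small enough that, for every $\vec{w} \in \overline{B}$, the Gaussian shift induced by switching $t_i$ to $t_i'$ contributes at most $e^{\eps/m}$ per coordinate, yet large enough that the companion argument (in the rest of the theorem's proof) can bound $\Pr[\vec{Z} \in B]$ by a small quantity. The reason this balancing is feasible at all is $\sigma_\ell^2 \geq \sigma_{\min}^2 = \Omega(n)$, which uses the hypothesis $k \leq c_2 n$ with $c_2 < c_1$ together with the lower bound $\sum_i w_{i,\ell} \geq c_1 n$; without variance growing linearly in $n$, a bounded shift of size $2\alpha$ in one coordinate could not be blurred out enough to yield differential privacy. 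Once this variance lower bound and the threshold choice are in place, the remaining work is the mechanical Gaussian algebra outlined above.
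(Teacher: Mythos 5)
Your proposal is correct and follows essentially the same route as the paper's proof: approximate the truncated-normal vector $\vec{Z}$ by the untruncated Gaussian $\vec{Z}'$ (your coupling/TV bound and the paper's $L_1$-distance bound are equivalent and give the same $m\,e^{-\Omega(\sqrt{n})}$ slack), compute the coordinatewise Gaussian density ratio on $\overline{B}$ using $\sigma_\ell^2 \geq \sigma_{\min}^2 = \Omega(n)$ and the cutoff $|w_\ell| + \alpha \leq \eps\sigma_{\min}^2/(2m\alpha)$ to bound each factor by $e^{\eps/m}$, and chain the three inequalities together, absorbing the $e^{\eps}$-scaled slack term into the $\Omega$. (One tiny slip: the exponent in your density-ratio factorization should read $((T_i)_\ell - (T_{i'})_\ell)(2w_\ell - (T_i)_\ell - (T_{i'})_\ell)/(2\sigma_\ell^2)$ rather than with $(T_{i'})_\ell - (T_i)_\ell$, but since you immediately pass to absolute values this has no effect on the bound.)
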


\begin{proof}[Proof of lemma]
Let $\vec{Z} = (\sw_1(\vec{t}_J), \ldots, \sw_m(\vec{t}_J))$. Let $\vec{sw}(t_i) = (\sw_1(t_i), \ldots, \sw_m(t_i))$ and $\vec{sw}(t_i') = (\sw_1(t_i'), \ldots, \sw_m(t_i'))$. Let $W \cap \overline{B} - \vec{sw}(t_i) = \{\vec{w} - \vec{sw}(t_i) : \vec{w} \in W \cap \overline{B}\}$ and $W \cap \overline{B} - \vec{sw}(t_i') = \{\vec{w} - \vec{sw}(t_i') : \vec{w} \in W \cap \overline{B}\}$. Then, it suffices to show that
\begin{align*}
\Pr[\vec{Z} \in W \cap \overline{B} - \vec{sw}(t_i)] \leq e^{\eps} \Pr[\vec{Z} \in W \cap \overline{B} - \vec{sw}(t_i')] + m e^{-\Omega(\sqrt{n})}. \tag{4}
\end{align*}

Let $\vec{Z}' = (\sw_1(\vec{t}_J'), \ldots, \sw_m(\vec{t}_J'))$, where for each $j \in J$, we have $t_j'(o_\ell) \sim \mathcal{N}(0,1)$ for every $\ell \in [m]$ independently. 

Recall that for each $j \in J$ and $\ell \in [m]$, we have $t_j(o_\ell) \sim \mathcal{N}(0,1)_{[-\alpha,\alpha]}$. Let $f_{t_j(o_\ell)}$ be the pdf of $t_j(o_\ell)$, and let $\phi$ be the pdf of the standard normal distribution. Now, for $X \sim \mathcal{N}(0,1)$, by Mill's inequality, we have $\Pr[|X| \geq \alpha] \leq \sqrt{\frac{2}{\pi}} (\frac{1}{\alpha}) e^{-\Omega(\alpha^2)} = O(\frac{1}{n^{1/4}} e^{-\Omega(\sqrt{n})})$. Then, it is easy to see that $||f_{t_j(o_\ell)} - \phi||_1 = \int_\R |f_{t_j(o_\ell)}(x) - \phi(x)| \ dx \leq O(\frac{1}{n^{1/4}} e^{-\Omega(\sqrt{n})})$ for each $j \in J$ and $\ell \in [m]$. It follows that $||pdf((t_j(o_\ell))_{j \in J, \ell \in [m]}) - pdf((t_j'(o_\ell))_{j \in J, \ell \in [m]})||_1 \leq O(n'm \cdot \frac{1}{n^{1/4}} e^{-\Omega(\sqrt{n})}) = m e^{-\Omega(\sqrt{n})}$, where $pdf(\vec{X})$ denotes the pdf of the random vector $\vec{X}$. Thus, for every measurable set $W' \subseteq \R^m$, we have
\begin{align*}
|\Pr[\vec{Z} \in W'] - \Pr[\vec{Z}' \in W']| \leq m e^{-\Omega(\sqrt{n})}. \tag{5}
\end{align*} 
Let $f_{\vec{Z}'}$ be the pdf of $\vec{Z}'$. Then, to show (4), it suffices to show that for every $\vec{v} \in W \cap \overline{B}$,
\begin{align*}
\frac{f_{\vec{Z}'}(\vec{v} - \vec{sw}(t_i))}{f_{\vec{Z}'}(\vec{v} - \vec{sw}(t_i'))} \leq e^{\eps}, \tag{6}
\end{align*}
since then we would have
\begin{align*}
\Pr[\vec{Z} \in W \cap \overline{B} - \vec{sw}(t_i)] 
\leq \ & \Pr[\vec{Z}' \in W \cap \overline{B} - \vec{sw}(t_i)] + m e^{-\Omega(\sqrt{n})} \\
\leq \ & e^{\eps} \Pr[\vec{Z}' \in W \cap \overline{B} - \vec{sw}(t_i')] + m e^{-\Omega(\sqrt{n})} \\
\leq \ & e^{\eps} (\Pr[\vec{Z} \in W \cap \overline{B} - \vec{sw}(t_i')] + m e^{-\Omega(\sqrt{n})}) + m e^{-\Omega(\sqrt{n})} \\
\leq \ & e^{\eps} \Pr[\vec{Z} \in W \cap \overline{B} - \vec{sw}(t_i')] + m e^{-\Omega(\sqrt{n})},
\end{align*}
where the first and third inequality follows from (5), and the second inequality follows from (6). Thus, we now show (6). 

Fix $\vec{v} = (v_1, \ldots, v_m) \in W \cap \overline{B}$. Recall that $\vec{Z}' = (\sw_1(\vec{t}_J'), \ldots, \sw_m(\vec{t}_J'))$. Thus, we have $\vec{Z}' \sim (\mathcal{N}(0,\sigma_1^2), \ldots, \mathcal{N}(0,\sigma_m^2))$, where $\sigma_{\ell}^2 = \sum_{j \in J} w_{j,\ell}$ for $\ell \in [m]$. Now, observe that
\begin{align*}
\frac{f_{\vec{Z'}}(\vec{v}-\vec{sw}(t_i))}{f_{\vec{Z'}}(\vec{v}-\vec{sw}(t_i'))}
= \ & \prod_{\ell = 1}^m \exp\left(\frac{1}{2 \sigma_\ell^2} \cdot ((v_\ell - \sw_\ell(t_i'))^2 - (v_\ell - \sw_\ell(t_i))^2) \right) \\
\leq \ & \prod_{\ell = 1}^m \exp\left(\frac{1}{2 \sigma_\ell^2} \cdot |(\sw_\ell(t_i) - \sw_\ell(t_i'))| \cdot |2v_\ell - \sw_\ell(t_i) - \sw_\ell(t_i')| \right) \\
\leq \ & \prod_{\ell = 1}^m \exp\left(\frac{2\alpha}{\sigma_\ell^2} \cdot (|v_\ell| + \alpha) \right) \\
\leq \ & \prod_{\ell = 1}^m \exp\left(\frac{\eps \sigma_{\min}^2}{m \sigma_\ell^2} \right) \\
\leq \ & \exp(\eps),
\end{align*} 
where the third inequality follows from the fact that $\vec{v} \notin B$. This completes the proof of the lemma. 
\end{proof}

\begin{lemma}
\label{lem:swMechCase2}
\begin{align*}
\Pr[(\mathsf{sw}_1(t_i,\vec{t}_J), \ldots, \mathsf{sw}_m(t_i, \vec{t}_J)) \in B] \leq O(e^{-\Omega(\frac{\eps^2}{m^2} \cdot \sqrt{n}) + \ln (m\sqrt{n})}).
\end{align*}
\end{lemma}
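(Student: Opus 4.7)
The plan is to reduce the event $\vec{\mathsf{sw}} \in B$ to a tail bound on a sum of independent, bounded, zero-mean random variables, and then invoke Bernstein's inequality. By the definition of $B$ and a union bound,
\begin{align*}
\Pr[(\mathsf{sw}_1(t_i,\vec{t}_J), \ldots, \mathsf{sw}_m(t_i,\vec{t}_J)) \in B] \leq \sum_{\ell=1}^m \Pr\!\left[\,|\mathsf{sw}_\ell(t_i,\vec{t}_J)| > \tfrac{\eps \sigma_{\min}^2}{2m\alpha} - \alpha\,\right].
\end{align*}
Fix $\ell \in [m]$. Writing $\mathsf{sw}_\ell(t_i,\vec{t}_J) = w_{i,\ell}\, t_i(o_\ell) + \sum_{j \in J} w_{j,\ell}\, t_j(o_\ell)$, the first term is deterministic (since $t_i$ is fixed) and bounded by $\alpha$ in absolute value, so by the triangle inequality the $\ell$th summand above is at most $\Pr[\,|\sum_{j \in J} w_{j,\ell}\, t_j(o_\ell)| > \tau\,]$, where $\tau := \frac{\eps \sigma_{\min}^2}{2m\alpha} - 2\alpha$.

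Next, I would use the fact that $\mathcal{N}(0,1)_{[-\alpha,\alpha]}$ is a symmetric truncation of a symmetric distribution and therefore has mean $0$, variance at most $1$, and support in $[-\alpha,\alpha]$. Thus $\sum_{j \in J} w_{j,\ell}\, t_j(o_\ell)$ is a sum of independent zero-mean random variables, each bounded by $\alpha$, with total variance at most $\sum_{j \in J} w_{j,\ell} = \sigma_\ell^2$. Bernstein's inequality then yields
\begin{align*}
\Pr\!\left[\,\Big|\sum_{j \in J} w_{j,\ell}\, t_j(o_\ell)\Big| > \tau\,\right] \leq 2 \exp\!\left(-\frac{\tau^2/2}{\sigma_\ell^2 + \alpha\tau/3}\right).
\end{align*}

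Finally, I would plug in orders of magnitude. The hypothesis $k \leq c_2 n$ together with $\sum_{i=1}^n w_{i,\ell} \geq c_1 n$ gives $\sigma_\ell^2 = \Theta(n)$, and $\alpha = \Theta(n^{1/4})$ gives $\frac{\eps \sigma_{\min}^2}{2m\alpha} = \Theta(\eps n^{3/4}/m)$, which for $n$ sufficiently large (namely $n = \omega(m^2/\eps^2)$, the regime in which the theorem's bound is non-vacuous) dominates the subtracted $2\alpha$ term, so $\tau = \Theta(\eps n^{3/4}/m)$. Therefore $\tau^2 = \Theta(\eps^2 n^{3/2}/m^2)$, $\sigma_\ell^2 = O(n)$, and $\alpha\tau = O(\eps n/m) = O(n)$, making the Bernstein exponent $\Omega(\eps^2 \sqrt{n}/m^2)$. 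A union bound over $\ell \in [m]$ then produces
\begin{align*}
\Pr[(\mathsf{sw}_1(t_i,\vec{t}_J),\ldots,\mathsf{sw}_m(t_i,\vec{t}_J)) \in B] \leq 2m \exp\!\left(-\Omega\!\left(\tfrac{\eps^2 \sqrt{n}}{m^2}\right)\right),
\end{align*}
which is $O\!\left(\exp(-\Omega(\eps^2\sqrt{n}/m^2) + \ln(m\sqrt{n}))\right)$ since $\ln m \leq \ln(m\sqrt{n})$.

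The main obstacle is checking that the variance term $\sigma_\ell^2$ dominates the max-deviation term $\alpha \tau$ in the Bernstein denominator (so that the bound is truly sub-Gaussian with variance $\sigma_\ell^2$ rather than degrading to sub-exponential), and simultaneously verifying that $\tau$ is positive and of the claimed order. Both turn out to hold in the target regime $n = \omega(m^2/\eps^2)$; outside this regime the claimed bound is already trivial.
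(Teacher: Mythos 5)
Your proof is correct, and it takes a genuinely different route from the paper. The paper's proof first approximates the sum $\mathsf{sw}_\ell(\vec{t}_J)$ built from truncated normals $\mathcal{N}(0,1)_{[-\alpha,\alpha]}$ by the corresponding sum built from exact normals $\mathcal{N}(0,1)$ (paying an $L^1$ error of $O(e^{-\Omega(\sqrt n)})$ per coordinate, which must be tracked through the argument), and then applies Mill's inequality to the exact Gaussian tail; the Mill's factor $\sigma_\ell/x$ is the source of the $\ln\sqrt n$ term in the exponent. You instead observe that the summands $w_{j,\ell}\,t_j(o_\ell)$ are already independent, symmetric (hence zero-mean), supported in $[-\alpha,\alpha]$, and have variance at most $w_{j,\ell}$, so Bernstein's inequality applies directly to the truncated variables with no approximation step. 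This is cleaner: it eliminates the $L^1$-approximation error terms entirely, and because Bernstein's bound has no polynomial prefactor it actually yields the slightly sharper bound $O(m\,e^{-\Omega(\eps^2\sqrt n/m^2)})$, i.e.\ with $\ln m$ in place of $\ln(m\sqrt n)$, which of course implies the stated lemma. Your bookkeeping of orders of magnitude ($\sigma_{\min}^2=\Omega(n)$ from $\sum_j w_{j,\ell}\ge(c_1-c_2)n-1$, $\alpha=\Theta(n^{1/4})$, hence $\tau=\Theta(\eps n^{3/4}/m)$ and $\alpha\tau=O(n)$ so the Bernstein denominator is $O(n)$) is correct, and your observation that the stated bound is vacuous outside the regime $n=\omega(m^2/\eps^2)$ correctly disposes of the case $\tau\le 0$.
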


\begin{proof}[Proof of lemma]
By the union bound, we have
\begin{align*}
\Pr[(\mathsf{sw}_1(t_i,\vec{t}_J), \ldots, \mathsf{sw}_m(t_i, \vec{t}_J)) \in B]
\leq \ & \sum_{\ell = 1}^m \Pr[|\mathsf{sw}_\ell(t_i,\vec{t}_J)| > \frac{\eps\sigma_{\min}^2}{2\alpha m} - \alpha]. \tag{7}
\end{align*}
Now, fix $\ell \in [m]$, and observe that
\begin{align*}
\Pr[|\mathsf{sw}_\ell(t_i,\vec{t}_J)| > \frac{\eps\sigma_{\min}^2}{2\alpha m} - \alpha]
= \ & \Pr[|\sw_\ell(t_i) + \mathsf{sw}_\ell(\vec{t}_J)| > \frac{\eps\sigma_{\min}^2}{2\alpha m} - \alpha] \\
\leq \ & \Pr[|\mathsf{sw}_\ell(\vec{t}_J)| > \frac{\eps\sigma_{\min}^2}{2\alpha m} - 2\alpha].
\end{align*}

Let $\sw_{\ell}(\vec{t}_J') = \sum_{j \in J} \sw_{\ell}(t_j')$, where for each $j \in J$, we have $t_j'(o_{\ell}) \sim \mathcal{N}(0,1)$ independently. 

Recall that for each $j \in J$, we have $t_j(o_\ell) \sim \mathcal{N}(0,1)_{[-\alpha,\alpha]}$. Let $f_{t_j(o_\ell)}$ be the pdf of $t_j(o_\ell)$, and let $\phi$ be the pdf of the standard normal distribution. Now, for $X \sim \mathcal{N}(0,1)$, by Mill's inequality, we have $\Pr[|X| \geq \alpha] \leq \sqrt{\frac{2}{\pi}} (\frac{1}{\alpha}) e^{-\Omega(\alpha^2)} = O(\frac{1}{n^{1/4}} e^{-\Omega(\sqrt{n})})$. Then, it is easy to see that $||f_{t_j(o_\ell)} - \phi||_1 = \int_\R |f_{t_j(o_\ell)}(x) - \phi(x)| \ dx \leq O(\frac{1}{n^{1/4}} e^{-\Omega(\sqrt{n})})$ for each $j \in J$. It follows that $||pdf((t_j(o_\ell))_{j \in J}) - pdf((t_j'(o_\ell))_{j \in J})||_1 \leq O(n' \cdot \frac{1}{n^{1/4}} e^{-\Omega(\sqrt{n})}) = O(e^{-\Omega(\sqrt{n})})$, where $pdf(\vec{X})$ denotes the pdf of the random vector $\vec{X}$. Thus, we have
\begin{align*}
\Pr\left[|\mathsf{sw}_\ell(\vec{t}_J)| > \frac{\eps\sigma_{\min}^2}{2\alpha m} - 2\alpha\right] \leq \Pr\left[|\mathsf{sw}_\ell(\vec{t}_J')| > \frac{\eps\sigma_{\min}^2}{2\alpha m} - 2\alpha\right] + O(e^{-\Omega(\sqrt{n})}).
\end{align*}

Now, observe that $\mathsf{sw}_\ell(\vec{t}_J') \sim \mathcal{N}(0,\sigma_{\ell}^2)$, where $\sigma_{\ell}^2 = \sum_{j \in J} w_{j,\ell} \leq n$. Then, by Mill's inequality, we have $\Pr[|\mathsf{sw}_\ell(\vec{t}_J')| \geq x] \leq \sqrt{\frac{2}{\pi}} \cdot \frac{\sigma_\ell}{x} e^{-x^2/(2\sigma_{\ell}^2)} \leq O(\frac{\sqrt{n}}{x} e^{-x^2/(2n)})$ for $x > 0$, so
\begin{align*}
\Pr\left[|\mathsf{sw}_\ell(\vec{t}_J')| > \frac{\eps\sigma_{\min}^2}{2\alpha m} - 2\alpha\right] 
\leq \ & O\left(\frac{\sqrt{n}}{\frac{\eps\sigma_{\min}^2}{2\alpha m} - 2\alpha} \cdot \exp\left(-\frac{\left(\frac{\eps\sigma_{\min}^2}{2\alpha m} - 2\alpha\right)^2}{2n}\right)\right) \\
\leq \ & O\left(\exp\left(-\Omega\left(\frac{\eps^2}{m^2} \cdot \sqrt{n}\right) + \ln \sqrt{n} \right)\right),
\end{align*}
where the last inequality follows from the fact that $\alpha = \Theta(n^{1/4})$ (from the theorem statement) and $\sigma_{\min}^2 = \Omega(n)$ (by definition of $\sigma_{\min}^2$ and the fact that $\sum_{j \in J} w_{j,\ell} \geq (c_1 - c_2) \cdot n - 1$ for every $\ell \in [m]$). Now, combining the above results, we get
\begin{align*}
\Pr[(\mathsf{sw}_1(t_i,\vec{t}_J), \ldots, \mathsf{sw}_m(t_i, \vec{t}_J)) \in B] 
\leq \ & O(m \cdot e^{-\Omega(\frac{\eps^2}{m^2} \cdot \sqrt{n}) + \ln \sqrt{n}}) + O(m e^{-\Omega(\sqrt{n})}) \\
\leq \ & O(e^{-\Omega(\frac{\eps^2}{m^2} \cdot \sqrt{n}) + \ln (m\sqrt{n})}).
\end{align*}
This completes the proof of the lemma.
\end{proof}

Now, by Lemmas \ref{lem:swMechCase1} and \ref{lem:swMechCase2}, we have from (3) that
\begin{align*}
& \Pr[(\mathsf{sw}_1(t_i,\vec{t}_J), \ldots, \mathsf{sw}_m(t_i, \vec{t}_J)) \in W] \\
\leq \ & \Pr[(\mathsf{sw}_1(t_i,\vec{t}_J), \ldots, \mathsf{sw}_m(t_i, \vec{t}_J)) \in W \cap \overline{B}] + \Pr[(\mathsf{sw}_1(t_i,\vec{t}_J), \ldots, \mathsf{sw}_m(t_i, \vec{t}_J)) \in B] \\
\leq \ & e^{\eps} \Pr[(\mathsf{sw}_1(t_i',\vec{t}_J), \ldots, \mathsf{sw}_m(t_i', \vec{t}_J)) \in W \cap \overline{B}] + m e^{-\Omega(\sqrt{n})} + O(e^{-\Omega(\frac{\eps^2}{m^2} \cdot \sqrt{n}) + \ln (m\sqrt{n})}) \\
\leq \ & e^{\eps} \Pr[(\mathsf{sw}_1(t_i',\vec{t}_J), \ldots, \mathsf{sw}_m(t_i', \vec{t}_J)) \in W] + O(e^{-\Omega(\frac{\eps^2}{m^2} \cdot \sqrt{n}) + \ln (m\sqrt{n})}).
\end{align*}
This completes the proof.
\end{proof}

In Theorem~\ref{thm:generalSWMechanism}, the requirement on the binary weights simply means that each option is relevant/applicable to at least some constant fraction of the players. Note that the persistent approximate truthfulness result of Theorem~\ref{thm:generalSWMechanism} requires the players' utility functions to be bounded by $\beta>0$. 
This assumption is needed since the players' utility functions can actually be arbitrary functions. 
However, the most natural way to use Theorem~\ref{thm:generalSWMechanism} is to let player $i$'s utility function be the following:
 if the chosen social alternative is a singleton $\{o_j\}$, then the utility for player $i$ is $w_{i,j} \cdot t_i(o_j)$;
 if the chosen social alternative is a set $s$ consisting of two or more options, then the utility for player $i$ is the sum of the utilities for each singleton subset of $s$.
Alternatively, a player $i$'s utility for a social alternative $s$ does not have to be \emph{additive} in the options that $s$ contains --- the utility function for player $i$ can capture complementarities and substitutabilities of the options as well. We now give a simple example that illustrates Theorem~\ref{thm:generalSWMechanism}.

\begin{example}[\bf{Multiple Public Projects}]\label{ex:mpp}
The municipal government would like to spend its budget surplus of $4$ million on the community.
There are four options that the government is considering, each costing $2$ million to build: a senior home; a casino; a subsidized housing complex; or a library. 
The government would like to find out, on a scale from $-\alpha$ to $\alpha$, how much each individual values each option.
For each individual $i$, the government chooses the weights for each of the options as follows:
the weight for the senior home is $1$ if and only if individual $i$ is over the age of $65$; the weight for the casino is $1$ if and only if individual $i$ is over the age of $19$; the weight for the subsidized housing complex is $1$ if and only if individual $i$ is classified as low-income; and the weight for the library is always $1$.

After collecting the valuation functions from the individuals, the government can compute the social welfare provided by each option, or compute an average utility for each option by dividing its social welfare by the number of people who have weight $1$ for that option. 
Finally, the government can choose two of the options to maximize social welfare or average utility.

By Theorem~\ref{thm:generalSWMechanism}, this mechanism is Bayesian differentially private and persistent approximately truthful. For example, if $n$ is the number of individuals, then from Theorem~\ref{thm:generalSWMechanism} we can conclude that the mechanism is $(\Omega(n), n^{-1/5},e^{-\Omega(n^{1/10})})$-Bayesian differentially private and $(\Omega(n),\Omega(n^{1/10}))$-persistent $O(\frac{1}{n^{1/10}})$-Bayes-Nash truthful.
This means that even if a constant fraction of the individuals report their valuations arbitrarily, and coalitions of size $O(n^{1/10})$ can be formed, the privacy of each individual is still protected, and any possible gain from lying is bounded by $O(\frac{1}{n^{1/10}})$.
As $n\rightarrow \infty$, any possible gain from lying converges to $0$, making the mechanism persistent Bayes-Nash truthful.
\end{example}

\begin{example}[\bf{Group Auctions}]\label{ex:group}
Suppose that $m$ organizations are bidding to host the next WINE conference, and the conference organizers want to give the hosting privilege to the organization whose employees value the privilege the most. 
The organizers decide to poll the employees of each organization to see, on a scale from $-\alpha$ to $\alpha$, how much they value their organization hosting the event.
There are $m$ options, one corresponding to each organization winning the bid.
Each employee's type is a valuation function $t: O \to [-\alpha,\alpha]$, where $t(o_j)$ represents the utility he receives if organization $j$ wins the bid. 
Naturally, the employees of each organization should only be able to specify their valuations for their own organization winning; thus, the weight $w_{i,j}$ for player $i$ and organization $j$ is $1$ if $i$ works at $j$, and is $0$ otherwise. 

The organizers can compute the social welfare provided by each option, and then make a choice to maximize social welfare. Alternatively, to be fair to smaller organizations, the organizers can maximize the average utility in the winning organization.
We can use Theorem~\ref{thm:generalSWMechanism} to conclude that such a mechanism is Bayesian differentially private and persistent approximate Bayes-Nash truthful.
\end{example}

%
%

We now generalize Theorem \ref{thm:generalSWMechanism} to the case where the players' valuations for each social alternative are arbitrarily distributed with non-zero variance and a pdf, although the parameters are worse. For convenience, we will restrict the valuations to be in a constant size interval $[-\alpha,\alpha]$, where $\alpha > 0$ is a constant instead of $\alpha = \Theta(\sqrt[4]{n})$ like in Theorem \ref{thm:generalSWMechanism}. (This restriction improves the parameters slightly.)

\begin{theorem}[{\bf Social welfare mechanisms with multiple options and more general distributions}]
\label{thm:generalSWMechanism2}
Let the set $S$ of social alternatives be $2^O$, where $O = \{o_1, \ldots, o_m\}$ is a set of $m$ possible ``options''. Let the type space $T$ be the set of all valuation functions $t: O \to [-\alpha,\alpha]$ on $O$, where $\alpha > 0$ is a constant. For each option $o \in O$, let $\T_o$ be any distribution over $[-\alpha,\alpha]$ with non-zero variance and a probability density function. Let $\T$ be the distribution over $T$ obtained by letting $t(o) \sim \T_o$ for each $o \in O$ independently. Suppose the weights $\{w_{i,j}\}_{i\in [n],j\in [m]}$ satisfy $\sum_{i=1}^n w_{i,j} \geq c_1 \cdot n$ for every option $o_j$, where $c_1 > 0$ is some constant. 

Let $\mathsf{sw}_j(\vec{t}) = \sum_{i=1}^n w_{i,j} \cdot t_i(o_j)$ be the (reported) social welfare for option $o_j$. Let $M: T^n \to S$ be any mechanism such that 
\begin{align*}
M(\vec{t}) = f(\mathsf{sw}_1(\vec{t}), \ldots, \mathsf{sw}_m(\vec{t}))
\end{align*}
for some function $f: \R^m \to S$. Then, for every constant $c_2 < c_1$, every $k \leq c_2 \cdot n$, and every $0 < \eps \leq 1$, $M$ satisfies the following properties with $\eps' = \eps + O(\frac{m \sqrt{\ln n}}{\sqrt{n}} )$ and $\delta = O(\frac{m^2}{\eps \sqrt{n}})$:
\begin{enumerate}
  \item \emph{Privacy:} $M$ is $(k,\eps',\delta)$-Bayesian differentially private.
  \item \emph{Persistent approximate truthfulness:} Suppose the utility functions are bounded by $\beta > 0$, i.e., the utility function for each player $i$ is $u_i: T \times S \to [-\beta,\beta]$. Then, for every $1 \leq r \leq k+1$, $M$ is $(k-r+1,r)$-persistent $(r\eps' + 2r\delta)(2\beta)$-Bayes-Nash truthful.
\end{enumerate}
\end{theorem}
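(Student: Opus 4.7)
}
The plan is to follow the structure of the proof of Theorem \ref{thm:generalSWMechanism}, but to replace the explicit use of the Gaussian density (which was valid under the assumption that valuations are truncated standard normal) with the local-limit-theorem-based approximation developed in the proof of Theorem \ref{thm:twoChoicesMechanism}, which handles arbitrary distributions with a pdf and non-zero variance. As before, the persistent approximate truthfulness claim follows from $(k,\eps',\delta)$-Bayesian differential privacy together with Theorem \ref{thm:BDP=>Truthfulness}, so the entire effort reduces to establishing privacy. Fixing $i$, $I$, $\vec{t}'_I$, $t_i$, $t_i'$, setting $J = [n] \setminus (I \cup \{i\})$, and writing $\vec{Z} = (\sw_1(\vec{t}_J), \ldots, \sw_m(\vec{t}_J))$ and $\vec{sw}(t) = (\sw_1(t), \ldots, \sw_m(t))$, the task reduces to bounding, for every measurable $W \subseteq \R^m$,
\begin{align*}
\Pr[\vec{Z} + \vec{sw}(t_i) \in W] \leq e^{\eps'} \Pr[\vec{Z} + \vec{sw}(t_i') \in W] + \delta.
\end{align*}

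The key observation enabling a coordinate-wise reduction is that under $\T$ the values $t(o_1), \ldots, t(o_m)$ are drawn independently; consequently the sums $\sw_1(\vec{t}_J), \ldots, \sw_m(\vec{t}_J)$ are mutually independent. For each $\ell \in [m]$, $\sw_\ell(\vec{t}_J)$ is a sum of $n_\ell = \sum_{j \in J} w_{j,\ell} = \Omega(n)$ i.i.d.\ random variables drawn from $\T_{o_\ell}$, which has a pdf and non-zero variance. Invoking the local limit theorem exactly as in the first lemma used inside the proof of Theorem \ref{thm:twoChoicesMechanism}, each centered and rescaled $\sw_\ell(\vec{t}_J)/\sqrt{n_\ell}$ has a pdf that converges uniformly to the standard normal pdf at rate $O(1/\sqrt{n_\ell}) = O(1/\sqrt{n})$. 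Furthermore, moving from $t_i$ to $t_i'$ shifts the $\ell$-th coordinate of $\vec{sw}$ by at most $2\alpha$, a constant under the theorem's hypotheses.

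I then define a bad set $B \subseteq \R^m$ as the set of $\vec{w}$ for which the rescaled coordinate $|w_\ell|$ exceeds a threshold of order $\sqrt{\ln n}$ for some $\ell$; outside $B$, every coordinate lies in a region where the normal approximation to its marginal pdf is well-behaved. Because $\vec{Z}$ has independent components, its joint pdf factors, and hence so does the density ratio $f_{\vec{Z}}(\vec{v} - \vec{sw}(t_i))/f_{\vec{Z}}(\vec{v} - \vec{sw}(t_i'))$. I apply the one-dimensional ratio bound from Lemma \ref{lem:twoChoicesCase1} to each of the $m$ factors with a per-coordinate budget of $\eps/m$, which multiplies to $\exp(\eps + O(m\sqrt{\ln n}/\sqrt{n})) = \exp(\eps')$ outside $B$. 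For the mass of $B$, a union bound over the $m$ coordinates combined with the tail estimate of Lemma \ref{lem:twoChoicesCase2} instantiated at parameter $\eps/m$ gives
\begin{align*}
\Pr[\vec{Z} + \vec{sw}(t_i) \in B] \leq m \cdot O\!\left(\frac{1}{(\eps/m)\sqrt{n}}\right) = O\!\left(\frac{m^2}{\eps \sqrt{n}}\right) = \delta.
\end{align*}
Combining the two pieces exactly as in the final display of the proof of Theorem \ref{thm:twoChoicesMechanism} yields the claimed bound.

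The main obstacle is the bookkeeping: I must distribute the aggregate privacy budget evenly across the $m$ coordinates while keeping the bad-set probability, the normal-approximation errors, and the per-coordinate shifts of size $2\alpha$ all compatible. The independence of the coordinates is what makes this feasible: it decouples the multivariate problem into a product of univariate problems already solved in the proof of Theorem \ref{thm:twoChoicesMechanism}. Without it, one would be forced to invoke a multivariate local limit theorem with quantitative rates, which is considerably more delicate; here, the product structure of $\T$ over the options sidesteps this difficulty entirely.
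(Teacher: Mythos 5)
Your proposal is correct and matches the paper's proof in all essentials: reduce truthfulness to privacy via Theorem~\ref{thm:BDP=>Truthfulness}, exploit the coordinate-wise independence of $\sw_1(\vec{t}_J),\ldots,\sw_m(\vec{t}_J)$ to reduce to a univariate problem per option with privacy budget $\eps/m$, and reuse the local-limit-theorem bad-set argument from the proof of Theorem~\ref{thm:twoChoicesMechanism} at that scaled parameter. The only cosmetic difference is that the paper invokes known $(\eps,\delta)$-composition theorems as a black box to pass from the per-coordinate bound to the joint bound, whereas you carry out the composition by hand, factoring the joint density outside a multivariate bad set and applying a union bound to its complement; since basic $(\eps,\delta)$-composition is proved by exactly that product-plus-union-bound argument, the two routes are equivalent and yield the same $\eps' = \eps + O(m\sqrt{\ln n}/\sqrt n)$ and $\delta = O(m^2/(\eps\sqrt n))$.
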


The proof is similar to the proof of Theorem \ref{thm:twoChoicesMechanism}, except that we now have to deal with a more general setting (e.g., there are more than two social alternatives, there are options and weights, etc.).

\begin{proof} 
The second property follows from the first property and Theorem \ref{thm:BDP=>Truthfulness}. We now show the first property.

Let $i \in [n]$, let $I \subseteq [n] \setminus \{i\}$ with $|I| \leq k$, let $t_i, t_i' \in T$, let $\vec{t}'_I \in T^{|I|}$, and let $Y \subseteq S$. Let $J = [n] \setminus (I \cup \{i\})$ and $\vec{t}_J \sim \T^{|J|}$. Given a subset $A$ of players with reported types $\vec{t}_A$, let $\mathsf{sw}_\ell(\vec{t}_A) = \sum_{j \in A} w_{j,\ell} \cdot t_j(o_\ell)$. We need to show that
\begin{align*}
  \Pr[M(t_i, \vec{t}'_I, \vec{t}_J) \in Y] \leq e^{\eps'} \Pr[M(t_i', \vec{t}'_I, \vec{t}_J) \in Y] + \delta,
\end{align*}
which is equivalent to 
\begin{align*}
\Pr[f(\mathsf{sw}_1(t_i, \vec{t}'_I, \vec{t}_J), \ldots, \mathsf{sw}_m(t_i, \vec{t}'_I, \vec{t}_J)) \in Y] \leq e^{\eps'} \Pr[f(\mathsf{sw}_1(t_i', \vec{t}'_I, \vec{t}_J), \ldots, \mathsf{sw}_m(t_i', \vec{t}'_I, \vec{t}_J)) \in Y] + \delta
\end{align*}
for some function $f: \R^m \to S$. It is easy to see that it suffices to show that for every measurable set $Y' \subseteq \R^m$, 
\begin{align*}
\Pr[(\mathsf{sw}_1(t_i, \vec{t}_J), \ldots, \mathsf{sw}_m(t_i, \vec{t}_J)) \in Y'] \leq e^{\eps'} \Pr[(\mathsf{sw}_1(t_i', \vec{t}_J), \ldots, \mathsf{sw}_m(t_i', \vec{t}_J)) \in Y'] + \delta. \tag{1}
\end{align*}
Now, we note that the random variables $\{\sw_\ell(t_i,\vec{t}_J) : \ell \in [m]\}$ are mutually independent. Thus, using known ``composition theorems/results'' for differential privacy (e.g., see \cite{DKMMN06,DL09,DRV10}), it is not hard to see that to prove (1), it suffices to show that for every $\ell \in [m]$ and every measurable set $Y' \subseteq \R$, we have
\begin{align*}
\Pr[\mathsf{sw}_\ell(t_i, \vec{t}_J) \in Y'] \leq e^{\eps'/m} \Pr[\mathsf{sw}_\ell(t_i', \vec{t}_J) \in Y'] + \delta/m. \tag{2}
\end{align*}
To this end, fix $\ell \in [m]$. Given a player $j$ with reported type $t_j$ and weight $w_{j,\ell}$ for option $o_\ell$, let
\begin{align*}
h(t_j) = \frac{\sw_\ell(t_j) - w_{j,\ell} \cdot \E_{X \sim \T_\ell}[X]}{(\Var_{X \sim \T_\ell}[X])^{1/2}},
\end{align*} 
and given a subset $A$ of players with reported types $\vec{t}_A$, let 
\begin{align*}
V(\vec{t}_A) = \sum_{j \in A} h(t_j). 
\end{align*}
Let $J' = \{j \in J : w_{j,\ell} = 1\}$, and let $n' = |J'|$. We note that for every $j \in J'$, we have $\E[h(t_j)] = 0$ and $\Var[h(t_j)] = 1$. To show (2), it is easy to see that it suffices to show that for every measurable set $W \subseteq \R$,
\begin{align*}
\Pr\left[\frac{V(t_i, \vec{t}_{J'})}{\sqrt{n'}} \in W\right] \leq e^{\epsilon'/m} \Pr\left[\frac{V(t_i', \vec{t}_{J'})}{\sqrt{n'}} \in W\right] + \delta/m. \tag{3}
\end{align*}
To this end, fix a measurable set $W \subseteq \R$. We will need to use the following lemma later:

\begin{lemma}
Let $f_{n'}$ be the pdf of $\frac{V(\vec{t}_{J'})}{\sqrt{n'}} = \frac{1}{\sqrt{n'}} \sum_{j \in {J'}} h(t_j)$ (where each $t_j \sim \T$ independently), and let $\phi$ be the pdf of the standard normal distribution. Then, $f_{n'}$ converges uniformly to $\phi$ as follows: 
\begin{align*}
\sup_{x \in \R} |f_{n'}(x) - \phi(x)| \leq \frac{c_1}{\sqrt{n'}},
\end{align*}
where $c_1 > 0$ is some (universal) constant.
\end{lemma}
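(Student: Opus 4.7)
The plan is to invoke a local central limit theorem (LCLT), exactly as was done in the analogous lemma in the proof of Theorem \ref{thm:twoChoicesMechanism}. The setting is essentially identical: we have a normalized sum of i.i.d.\ random variables and we want uniform convergence of its density (not just of its cdf, as in the classical Berry-Esseen bound) at rate $1/\sqrt{n'}$.

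First I would verify that the random variables $\{h(t_j) : j \in J'\}$ satisfy the hypotheses of an LCLT. They are i.i.d.\ because the types $t_j$ are i.i.d.\ under $\T$ and $h$ is a deterministic function. By construction, $\E[h(t_j)] = 0$ and $\Var[h(t_j)] = 1$. Because $t_j(o_\ell) \in [-\alpha, \alpha]$ and $\Var_{X \sim \T_\ell}[X] > 0$ is a fixed positive constant, each $h(t_j)$ is bounded, so all moments exist; in particular the third absolute moment $\E[|h(t_j)|^3]$ is finite. Finally, because $\T_\ell$ admits a probability density function on $[-\alpha,\alpha]$, the pushforward of $t_j(o_\ell)$ under the affine map defining $h$ also admits a density, so $h(t_j)$ has a density.

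With these hypotheses in hand, the conclusion follows directly from a standard local central limit theorem for densities of i.i.d.\ sums; specifically, the same reference already cited in the excerpt (Statement 5 in Section 4 of Chapter VII of Petrov \cite{Pet75}) guarantees that if $X_1, X_2, \ldots$ are i.i.d.\ with mean zero, unit variance, finite third absolute moment, and the density of $X_1$ satisfies the mild regularity needed (which is satisfied here since $h(t_j)$ is bounded with a pdf, so its characteristic function decays sufficiently), then $\sup_x \bigl| f_{n'}(x) - \phi(x) \bigr| = O(1/\sqrt{n'})$, where $f_{n'}$ is the density of $(X_1 + \cdots + X_{n'})/\sqrt{n'}$. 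This yields the claimed bound with some universal constant $c_1 > 0$.

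The only potential obstacle is confirming the precise regularity condition that Petrov's theorem assumes on the density of $h(t_j)$ (often phrased as requiring some power of the characteristic function to be integrable, or equivalently that the density of a sum of a bounded number of copies be bounded). Since $h(t_j)$ has a bounded density on a bounded interval, a finite convolution of such densities is bounded and continuous, so the condition holds. Hence the proof reduces to a routine citation of the LCLT once this verification is made explicit.
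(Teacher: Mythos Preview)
Your proposal is correct and takes essentially the same approach as the paper: the paper's proof is a one-line citation to the local limit theorem (Statement 5, Section 4, Chapter VII of Petrov \cite{Pet75}), and you invoke exactly the same result while additionally spelling out why the hypotheses are met. One minor quibble: your claim that $h(t_j)$ has a \emph{bounded} density does not follow from the theorem's assumption that $\T_\ell$ merely admits a pdf, but this does not matter since Petrov's condition only requires that some convolution power of the density be bounded (equivalently, some power of the characteristic function be integrable), and the paper likewise does not verify this point.
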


\begin{proof}[Proof of lemma]
This lemma follows immediately from various ``local limit theorems'' (e.g., see Statement 5 in Section 4 of Chapter VII in \cite{Pet75}).
\end{proof}

Let $\eps'' > 0$. Let $c_1$ be the constant in the above lemma, let $\gamma = \frac{e^{\eps''}-1}{e^{\eps''}+1}$, and let $\sigma^2 = \Var_{X \sim \T_\ell}[X]$. Let 
\begin{align*}
B = \left\{w \in \R : |w| > \sqrt{2\ln\left(\frac{\gamma \sqrt{n'}}{c_1\sqrt{2\pi}}\right)} - \frac{4\alpha}{\sigma \sqrt{n'}}\right\}.
\end{align*}
Now, observe that
\begin{align*}
\Pr\left[\frac{V(t_i,\vec{t}_{J'})}{\sqrt{n'}} \in W\right] 
\leq \Pr\left[\frac{V(t_i,\vec{t}_{J'})}{\sqrt{n'}} \in W \cap \overline{B}\right] + \Pr\left[\frac{V(t_i,\vec{t}_{J'})}{\sqrt{n'}} \in B\right]. \tag{4}
\end{align*}

\begin{lemma}
\label{lem:manyChoicesCase1}
For each $w \in W \cap \overline{B}$,
\begin{align*}
\frac{\Pr[V(t_i, \vec{t}_{J'}) / \sqrt{n'} = w]}{\Pr[V(t_i', \vec{t}_{J'}) / \sqrt{n'} = w]} \leq \exp\left(\eps'' + O\left(\frac{\sqrt{\ln n'}}{\sqrt{n'}}\right)\right).
\end{align*}
\end{lemma}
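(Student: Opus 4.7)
The plan is to follow the proof of Lemma~\ref{lem:twoChoicesCase1} essentially verbatim, since after the composition-style reduction to the single coordinate $\ell$, the random variable $V(\vec{t}_{J'})/\sqrt{n'}$ is already a sum of i.i.d.\ centered, unit-variance summands $h(t_j)$ for $j \in J'$, and the bound $|h(t_i)|,|h(t_i')| \leq 2\alpha/\sigma$ (where $\sigma^2 = \Var_{X \sim \T_\ell}[X]$) plays the role of the $4\beta/\sigma$ bound there. Fix $w \in W \cap \overline{B}$; I would start by rewriting
\begin{align*}
\frac{\Pr[V(t_i,\vec{t}_{J'})/\sqrt{n'} = w]}{\Pr[V(t_i',\vec{t}_{J'})/\sqrt{n'} = w]} = \frac{\Pr[V(\vec{t}_{J'})/\sqrt{n'} = w - h(t_i)/\sqrt{n'}]}{\Pr[V(\vec{t}_{J'})/\sqrt{n'} = w - h(t_i')/\sqrt{n'}]},
\end{align*}
and then invoke the preceding local limit theorem to upper-bound the numerator by $\phi(w-h(t_i)/\sqrt{n'}) + c_1/\sqrt{n'}$ and lower-bound the denominator by $\phi(w-h(t_i')/\sqrt{n'}) - c_1/\sqrt{n'}$.

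Next, I would use $w \notin B$ to show that the additive $c_1/\sqrt{n'}$ errors are absorbed into $\gamma$ times the shifted Gaussian densities: since $|h(t_i)|,|h(t_i')| \leq 2\alpha/\sigma$ and $|w| + 4\alpha/(\sigma\sqrt{n'}) \leq \sqrt{2\ln(\gamma\sqrt{n'}/(c_1\sqrt{2\pi}))}$ by definition of $B$, we obtain $c_1/\sqrt{n'} \leq \gamma\cdot\phi(w-h(t_i)/\sqrt{n'})$ and likewise for $t_i'$. Plugging these in collapses the ratio to
\begin{align*}
\frac{1+\gamma}{1-\gamma}\cdot\frac{\phi(w-h(t_i)/\sqrt{n'})}{\phi(w-h(t_i')/\sqrt{n'})} = \exp(\eps'')\cdot\exp\!\left(\tfrac{1}{2}\left((w-h(t_i')/\sqrt{n'})^2 - (w-h(t_i)/\sqrt{n'})^2\right)\right),
\end{align*}
using $\tfrac{1+\gamma}{1-\gamma} = e^{\eps''}$ from the choice of $\gamma$.

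Finally, the difference of squares in the exponent factors as $\left|\tfrac{h(t_i) - h(t_i')}{\sqrt{n'}}\right|\cdot\left|2w - \tfrac{h(t_i)+h(t_i')}{\sqrt{n'}}\right|$, whose first factor is at most $4\alpha/(\sigma\sqrt{n'})$ and whose second factor is at most $2\sqrt{2\ln(\gamma\sqrt{n'}/(c_1\sqrt{2\pi}))} = O(\sqrt{\ln n'})$, using $w \notin B$ a second time. Multiplying gives $O(\sqrt{\ln n'}/\sqrt{n'})$ in the exponent, yielding the claimed bound $\exp(\eps'' + O(\sqrt{\ln n'}/\sqrt{n'}))$.

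The only real obstacle is bookkeeping: one must verify that the bound $|h(t_i)|,|h(t_i')| \leq 2\alpha/\sigma$ (coming from $|t_i(o_\ell)|\leq\alpha$, $|\E_{X\sim\T_\ell}[X]|\leq\alpha$, and the weights $w_{i,\ell}\in\{0,1\}$) is compatible with the $4\alpha/(\sigma\sqrt{n'})$ slack built into the definition of $B$, and that the hypothesis that $\T_\ell$ has a pdf with non-zero variance is exactly what is needed to invoke the local limit theorem. No new ideas beyond those of Lemma~\ref{lem:twoChoicesCase1} are required.
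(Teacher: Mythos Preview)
Your proposal is correct and follows essentially the same approach as the paper's proof, which is indeed a near-verbatim repetition of the argument for Lemma~\ref{lem:twoChoicesCase1}. The only cosmetic difference is that you use the sharper bound $|h(t_i)|,|h(t_i')|\leq 2\alpha/\sigma$ (which is indeed what the definition of $h$ gives here), whereas the paper uses the looser $4\alpha/\sigma$ carried over from the earlier lemma; since $\alpha$ is a constant in this theorem, this affects nothing in the final $O(\sqrt{\ln n'}/\sqrt{n'})$ estimate.
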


\begin{proof}[Proof of lemma]
Fix $w \in W \cap \overline{B}$. Let $\phi$ be the pdf of the standard normal distribution. By the previous lemma, we have
\begin{align*}
\frac{\Pr[V(t_i, \vec{t}_{J'}) / \sqrt{n'} = w]}{\Pr[V(t_i', \vec{t}_{J'}) / \sqrt{n'} = w]}
= \ & \frac{\Pr[\frac{1}{\sqrt{n'}} \sum_{j \in {J'}} h(t_j) = w - h(t_i) / \sqrt{n'}]}{\Pr[\frac{1}{\sqrt{n'}} \sum_{j \in {J'}} h(t_j) = w - h(t_i') / \sqrt{n'}]} \\
\leq \ & \frac{\phi(w-h(t_i)/\sqrt{n'}) + \frac{c_1}{\sqrt{n'}}}{\phi(w-h(t_i')/\sqrt{n'}) - \frac{c_1}{\sqrt{n'}}}, \tag{5}
\end{align*}
where $c_1 \geq 0$ is some constant. Now, we note that
\begin{align*}
\frac{c_1}{\sqrt{n'}} \leq \gamma \cdot \phi(w - h(t_i)/\sqrt{n'}), \tag{6}
\end{align*}
since $\gamma \cdot \phi(w - h(t_i)/\sqrt{n'}) = \frac{\gamma}{\sqrt{2\pi}} \exp\left(-\frac{1}{2}\left(w - \frac{h(t_i)}{\sqrt{n'}}\right)^2\right) \geq \frac{\gamma}{\sqrt{2\pi}} \exp\left(-\frac{1}{2}\left(|w| + \frac{|h(t_i)|}{\sqrt{n'}}\right)^2\right)$, $|h(t_i)| \leq \frac{4\alpha}{\sigma}$, and $|w| + \frac{4\alpha}{\sigma \sqrt{n'}} \leq \sqrt{2\ln\left(\frac{\gamma \sqrt{n'}}{c_1\sqrt{2\pi}}\right)}$ (since $w \notin B$). Similarly, we also have
\begin{align*}
\frac{c_1}{\sqrt{n'}} \leq \gamma \cdot \phi(w - h(t_i')/\sqrt{n'}). \tag{7}
\end{align*}
Now, combining (5) with (6) and (7), we have 
\begin{align*}
\frac{\Pr[V(t_i, \vec{t}_{J'}) / \sqrt{n'} = w]}{\Pr[V(t_i', \vec{t}_{J'}) / \sqrt{n'} = w]}
\leq \ & \frac{\phi(w-h(t_i)/\sqrt{n'}) + \gamma \cdot \phi(w - h(t_i)/\sqrt{n'})}{\phi(w-h(t_i')/\sqrt{n'}) - \gamma \cdot \phi(w - h(t_i')/\sqrt{n'})} \\
= \ & \frac{1+\gamma}{1-\gamma} \cdot \frac{\phi(w-h(t_i)/\sqrt{n'})}{\phi(w-h(t_i')/\sqrt{n'})} \\
= \ & \exp(\eps'') \cdot \exp\left(\frac{1}{2} \cdot \left(\left(w-\frac{h(t_i')}{\sqrt{n'}}\right)^2 - \left(w-\frac{h(t_i)}{\sqrt{n'}}\right)^2\right)\right) \\
\leq \ & \exp(\eps'') \cdot \exp\left(\frac{1}{2} \cdot \left|\frac{h(t_i)}{\sqrt{n'}} - \frac{h(t_i')}{\sqrt{n'}}\right| \cdot \left|2w - \frac{h(t_i')}{\sqrt{n'}} - \frac{h(t_i)}{\sqrt{n'}}\right|\right) \\
\leq \ & \exp(\eps'') \cdot \exp\left(\frac{1}{2} \cdot \frac{8\alpha}{\sigma\sqrt{n'}} \cdot \left(2 \sqrt{2 \ln\left(\frac{\gamma\sqrt{n'}}{c_1\sqrt{2\pi}}\right)} \right)\right) \\
\leq \ & \exp\left(\eps'' + O\left(\frac{\sqrt{\ln n'}}{\sqrt{n'}}\right)\right),
\end{align*}
where the second last inequality follows from the fact that the function $h$ is bounded by $\frac{4\alpha}{\sigma}$ (in absolute value) and $|w| + \frac{4\alpha}{\sigma\sqrt{n'}} \leq \sqrt{2 \ln\left(\frac{\gamma\sqrt{n'}}{c_1\sqrt{2\pi}}\right)}$ (since $w \notin B$).
\end{proof}

We will need to use the following lemma later:

\begin{lemma}
Let $F_{n'}$ be the cdf of $\frac{V(\vec{t}_{J'})}{\sqrt{n'}} = \frac{1}{\sqrt{n'}} \sum_{j \in {J'}} h(t_j)$ (where each $t_j \sim \T$ independently), and let $\Phi$ be the cdf of the standard normal distribution. Then, $\frac{V(\vec{t}_{J'})}{\sqrt{n'}}$ converges in distribution uniformly to the standard normal random variable as follows: 
\begin{align*}
\sup_{x \in \R} |F_{n'}(x) - \Phi(x)| = O\left(\frac{1}{\sqrt{n'}}\right).
\end{align*}
\end{lemma}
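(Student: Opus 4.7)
The plan is to apply the Berry--Esseen theorem directly. Recall that for $j \in J'$ we have $w_{j,\ell} = 1$, so $h(t_j) = (t_j(o_\ell) - \mu)/\sigma$ where $\mu = \E_{X \sim \T_\ell}[X]$ and $\sigma^2 = \Var_{X \sim \T_\ell}[X]$. Since the $t_j$'s are i.i.d., the random variables $\{h(t_j)\}_{j \in J'}$ are i.i.d.\ with mean $0$ and variance $1$.

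Next I would verify the moment assumptions needed by Berry--Esseen. Because the valuations lie in $[-\alpha,\alpha]$ with $\alpha$ constant, we have $|t_j(o_\ell) - \mu| \leq 2\alpha$, so $|h(t_j)| \leq 2\alpha/\sigma$, a constant (here $\sigma > 0$ since $\T_\ell$ has nonzero variance by hypothesis). In particular, the third absolute moment $\rho = \E[|h(t_j)|^3] \leq (2\alpha/\sigma)^3$ is bounded by a universal constant depending only on the (fixed) distribution $\T_\ell$ and $\alpha$.

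The standard Berry--Esseen theorem then gives
\begin{align*}
\sup_{x \in \R} \left| F_{n'}(x) - \Phi(x) \right| \leq \frac{C \rho}{\sqrt{n'}} = O\!\left(\frac{1}{\sqrt{n'}}\right),
\end{align*}
where $C$ is the Berry--Esseen constant. This is exactly the claimed bound. The only ``subtle'' point is confirming boundedness of the third moment, which follows immediately from the type-space bound and nonzero variance of $\T_\ell$; there is no real obstacle in the argument, and no step requires nontrivial calculation beyond invoking Berry--Esseen as a black box.
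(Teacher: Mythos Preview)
Your proposal is correct and takes essentially the same approach as the paper, which simply states that the lemma ``follows immediately from the Berry--Esseen theorem.'' You have merely filled in the routine verification of the Berry--Esseen hypotheses (i.i.d., mean zero, unit variance, bounded third absolute moment via the compact type space and $\sigma>0$), which the paper leaves implicit.
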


\begin{proof}[Proof of lemma]
This lemma follows immediately from the Berry-Esseen theorem.
\end{proof}

\begin{lemma}
\label{lem:manyChoicesCase2}
\begin{align*}
\Pr\left[\frac{V(t_i,\vec{t}_{J'})}{\sqrt{n'}} \in B\right] \leq O\left(\frac{1}{\eps'' \sqrt{n'}}\right).
\end{align*}
\end{lemma}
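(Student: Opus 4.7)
The plan is to mirror the proof of Lemma~\ref{lem:twoChoicesCase2}, which has essentially the same structure: bound the tail probability of a normalized sum of bounded i.i.d.\ random variables by (i) stripping off the contribution of player $i$, (ii) approximating the cdf of the remaining normalized sum by the standard normal cdf via the uniform Berry--Esseen bound stated in the preceding lemma, and then (iii) applying Mill's inequality to the Gaussian tail. The translation to the current setting is mechanical because $h$ plays the same role here as in Theorem~\ref{thm:twoChoicesMechanism} (bounded by $4\alpha/\sigma$, mean zero, unit variance), and the quantities $\gamma$, $\sigma$, $B$ have been defined in exactly parallel form, just with $\alpha$ in place of $\beta$ and with the sub-index set $J'$ in place of $J$.

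Concretely, let $x = \sqrt{2\ln(\gamma\sqrt{n'}/(c_1\sqrt{2\pi}))} - 4\alpha/(\sigma\sqrt{n'})$, so that $B = \{w : |w|>x\}$. Because $|h(t_i)| \leq 4\alpha/\sigma$, we have
\begin{align*}
\Pr\left[\frac{V(t_i,\vec{t}_{J'})}{\sqrt{n'}} \in B\right] \leq \Pr\left[\left|\frac{V(\vec{t}_{J'})}{\sqrt{n'}}\right| > x - \frac{4\alpha}{\sigma\sqrt{n'}}\right].
\end{align*}
I would then use the Berry--Esseen-type lemma proved just before this one to replace $V(\vec{t}_{J'})/\sqrt{n'}$ by a standard normal $X$ at an additive cost of $O(1/\sqrt{n'})$, so that the right-hand side is bounded by $\Pr[|X| > x - 4\alpha/(\sigma\sqrt{n'})] + O(1/\sqrt{n'})$.

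For the Gaussian tail, I would apply Mill's inequality $\Pr[|X|\geq y] \leq \sqrt{2/\pi}\cdot e^{-y^2/2}/y$ with $y = x - 4\alpha/(\sigma\sqrt{n'})$. Substituting the value of $x$, the $-y^2/2$ exponent becomes essentially $-\ln(\gamma\sqrt{n'}/(c_1\sqrt{2\pi}))$, so $e^{-y^2/2}$ simplifies to $\Theta(1/(\gamma\sqrt{n'}))$, and the prefactor $1/y$ is $O(1)$. This yields an overall bound of $O(1/(\gamma\sqrt{n'})) + O(1/\sqrt{n'})$. Finally, since $\gamma = (e^{\eps''}-1)/(e^{\eps''}+1) = \Theta(\eps'')$ for small $\eps''$ (and $\eps''$ will be chosen at most $1$ when this lemma is applied), the dominant term is $O(1/(\eps''\sqrt{n'}))$, which is the claimed bound.

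The main obstacle, as in Lemma~\ref{lem:twoChoicesCase2}, is just being careful with the bookkeeping around the cutoff $x$: the definition of $B$ is chosen precisely so that after subtracting the $4\alpha/(\sigma\sqrt{n'})$ slack, the squared cutoff cancels the $\ln$ term inside the exponential from Mill's inequality, leaving a tail bound of order $1/(\gamma\sqrt{n'})$. No new technique is required beyond what already appears in the earlier proof; the only novelty is that the underlying sum now runs over the subset $J'$ of players with $w_{j,\ell}=1$, but since $|J'|=n'$ and each $h(t_j)$ is still bounded, mean zero, and unit variance for $j\in J'$, the Berry--Esseen step applies verbatim.
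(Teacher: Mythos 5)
Your proposal is correct and follows essentially the same route as the paper's own proof: strip off the bounded contribution $|h(t_i)/\sqrt{n'}|$, pass to the Gaussian tail via the Berry--Esseen lemma at additive cost $O(1/\sqrt{n'})$, and finish with Mill's inequality, noting that the cutoff $x$ was engineered so the exponential collapses to $O(1/(\gamma\sqrt{n'}))$ and that $\gamma = \Theta(\eps'')$ for $\eps''\leq 1$. No gaps.
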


\begin{proof}[Proof of lemma]
Let $x = \sqrt{2\ln\left(\frac{\gamma \sqrt{n'}}{c_1\sqrt{2\pi}}\right)} - \frac{4\alpha}{\sigma \sqrt{n'}}$, let $\Phi$ be the cdf of the standard normal distribution, and let $X \sim \mathcal{N}(0,1)$. Observe that
\begin{align*}
\Pr\left[\frac{V(t_i,\vec{t}_{J'})}{\sqrt{n'}} \in B\right] 
\leq \ & \Pr\left[\left|\frac{h(t_i)}{\sqrt{n'}}\right| + \left|\frac{V(\vec{t}_{J'})}{\sqrt{n'}}\right| > x \right] \\
\leq \ & 1 - \Pr\left[\left|\frac{V(\vec{t}_{J'})}{\sqrt{n'}}\right| \leq x - \frac{4\alpha}{\sigma \sqrt{n'}} \right] \\
\leq \ & 1 - \left(\Pr\left[\frac{V(\vec{t}_{J'})}{\sqrt{n'}} \leq x - \frac{4\alpha}{\sigma \sqrt{n'}} \right] - \Pr\left[\frac{V(\vec{t}_{J'})}{\sqrt{n'}} \leq - \left(x - \frac{4\alpha}{\sigma \sqrt{n'}}\right)\right]\right) \\
\leq \ & 1 - \left(\Phi\left(x - \frac{4\alpha}{\sigma \sqrt{n'}}\right) - \Phi\left(-\left(x - \frac{4\alpha}{\sigma \sqrt{n'}}\right)\right) - O\left(\frac{1}{\sqrt{n'}}\right)\right) \\
= \ & 1 - \Pr\left[|X| \leq x - \frac{4\alpha}{\sigma \sqrt{n'}}\right] + O\left(\frac{1}{\sqrt{n'}}\right) \\
= \ & \Pr\left[|X| > x - \frac{4\alpha}{\sigma \sqrt{n'}}\right] + O\left(\frac{1}{\sqrt{n'}}\right),
\end{align*}
where the second inequality follows from the fact that $|h(t_i)| \leq \frac{4\alpha}{\sigma}$, and the fourth inequality follows from the previous lemma. Now, by Mill's inequality, we have $\Pr[|X| \geq y] \leq \sqrt{\frac{2}{\pi}} \cdot \frac{1}{y} e^{-y^2/2}$ for every $y > 0$, so
\begin{align*}
\Pr\left[\frac{V(t_i,\vec{t}_{J'})}{\sqrt{n'}} \in B\right] 
\leq \ & \sqrt{\frac{2}{\pi}} \cdot \frac{1}{x - \frac{4\alpha}{\sigma \sqrt{n'}}} \exp\left(-\frac{1}{2} \left(x - \frac{4\alpha}{\sigma \sqrt{n'}}\right)^2 \right) + O\left(\frac{1}{\sqrt{n'}}\right) \\
= \ & O\left(\frac{1}{\gamma \sqrt{n'}}\right) \\
= \ & O\left(\frac{1}{\eps'' \sqrt{n'}}\right).
\end{align*}
\end{proof}

Now, by Lemmas \ref{lem:manyChoicesCase1} and \ref{lem:manyChoicesCase2}, we have from (4) that
\begin{align*}
\Pr\left[\frac{V(t_i,\vec{t}_{J'})}{\sqrt{n'}} \in W\right] 
\leq \ & \Pr\left[\frac{V(t_i,\vec{t}_{J'})}{\sqrt{n'}} \in W \cap \overline{B}\right] + \Pr\left[\frac{V(t_i,\vec{t}_{J'})}{\sqrt{n'}} \in B\right] \\
\leq \ & \exp\left(\eps'' + O\left(\frac{\sqrt{\ln n'}}{\sqrt{n'}}\right)\right) \cdot \Pr\left[\frac{V(t_i',\vec{t}_{J'})}{\sqrt{n'}} \in W \cap \overline{B}\right] + O\left(\frac{1}{\eps'' \sqrt{n'}}\right) \\
\leq \ & \exp\left(\eps'' + O\left(\frac{\sqrt{\ln n}}{\sqrt{n}}\right)\right) \cdot \Pr\left[\frac{V(t_i',\vec{t}_{J'})}{\sqrt{n}} \in W\right] + O\left(\frac{1}{\eps'' \sqrt{n}}\right),
\end{align*}
where the last inequality follows from the fact that $n' = \Theta(n)$. Now, set $\eps'' = \eps / m$ so that
\begin{align*}
\Pr\left[\frac{V(t_i,\vec{t}_{J'})}{\sqrt{n'}} \in W\right] 
\leq \ & \exp\left(\frac{\eps}{m} + O\left(\frac{\sqrt{\ln n}}{\sqrt{n}}\right)\right) \cdot \Pr\left[\frac{V(t_i',\vec{t}_{J'})}{\sqrt{n}} \in W\right] + O\left(\frac{m}{\eps \sqrt{n}}\right) \\
\leq \ & \exp\left(\frac{\epsilon'}{m}\right) \Pr\left[\frac{V(t_i', \vec{t}_{J'})}{\sqrt{n'}} \in W\right] + \frac{\delta}{m}.
\end{align*}
This proves (3), which concludes the proof of the theorem.
\end{proof}

\section{Obtaining Actual Truthfulness \label{sec:punishing}}

In this section, we present a simple method of obtaining (actual) truthfulness from approximate truthfulness.
Approximate truthfulness, being an intermediate state between the complete lack of truthfulness and actual truthfulness, is a natural candidate to be a ``stepping stone'' for obtaining truthfulness.
For example, in \cite{NST12}, Nissim et al. consider mechanisms with an additional ``reaction stage'', in which the social alternative has already been chosen and each player chooses a reaction (for example, in a facility location problem a reaction would be going to a particular facility).
The mechanism can restrict the set of available reactions to a player based on his/her reported type (for example, in a facility location problem a player can only go to the facility that is closest to his/her reported type). 
Intuitively, if having a different type means that the player will have a different optimal reaction to the chosen social alternative, then the mechanism designer can make liars suffer by only allowing each player to choose the reaction that is optimal for his/her reported type. 
This is the idea behind the ``imposing mechanism'' in \cite{NST12}. 
The imposing mechanism chooses a social alternative randomly, ignoring the agents' reported types, and then restricts the players' reactions such that for every player, lying results in a loss of at least $\eps$ for some $\eps > 0$. 

With the ``imposing mechanism'' as a tool, Nissim et al. proceed to show that an appropriate randomization between the imposing mechanism and an approximately truthful mechanism that has good efficiency results in a truthful mechanism that also has good efficiency.
We can also use a construction similar to the imposing mechanism to obtain persistent Bayes-Nash truthful mechanisms from persistent $\eps$-Bayes-Nash truthful mechanisms. 
However, since the imposing mechanism disregards the players' reported types when choosing a social alternative, truthfulness comes at a cost of decreased efficiency (this is also the case in \cite{NST12}).

We propose a simple alternative technique to obtain truthfulness from $\eps$-truthfulness, without sacrificing any efficiency of the mechanism. 
In our model, the mechanism can verify the truthfulness of a small number of players, and impose a fine on any player caught lying.
We call this part of the mechanism the \emph{deterrent payment scheme}.
Although the assumption that a reported type can be verified does not hold in settings such as political elections, where player preferences are subjective, there are many other settings in which this assumption is appropriate.

One might ask, if the truth of the reported types can be verified, why doesn't the mechanism just find out all the true types, making the mechanism design problem trivial?
The first reason is that verification is often costly. 
For example, the government cannot afford to check every income tax report, but audits are performed to deter individuals and organizations from cheating on their taxes.
Secondly, verifying the truth of an input is often easier than determining the true value.
For example, an organization may not be able to obtain an individual's address from his or her name, but it can send someone to visit a provided address to verify whether the individual actually lives there. 
Similarly, verification emails are routinely used to check whether an individual actually has access to the email address he or she provided.

We note that the deterrent payment scheme is much more meaningful and realistic when it is used with an $\eps$-truthful mechanism, than when it is used with general mechanisms.
When a mechanism is already $\eps$-truthful, neither the number of verifications nor the fine needs to be large in order to get actual truthfulness. 
Without $\eps$-truthfulness, either the number of verifications may need to be too large to be affordable, or the fine may need to be too large to be enforceable.

Finally, although we need to assume quasilinear utilities to use the deterrent payment scheme, and the deterrent payment scheme involves a transfer of utilities (i.e. payments), only those who lie will ever be required to pay. 
This means that in the truthful equilibrium, no payments are made and the mechanism will not have excess money that it might need to `burn'. 
The fact that truthful players never pay, and the fact that individual rationality in the truthful equilibrium is unaffected by a deterrent payment scheme, means that a deterrent payment scheme may be useful in some settings where the AGV mechanism by Arrow \cite{Arr79} and d'Aspremont and G\'{e}rard-Varet \cite{AGV79} may be inappropriate.

\subsection{Deterrent Payment Scheme}

In this section, we assume that each player $i$ has a quasilinear utility function $\wt{u}_i : T \times S \times \R \rightarrow \R$ such that $\wt{u}_i(t,s,p) = u_i(t,s) - p$, where $u_i$ is the utility function from Section~\ref{sec:preliminaries}. 
The notions of truthfulness defined in Section~\ref{sec:truthfulness} can be straight-forwardly adapted to the quasilinear utility setting, and references to truthfulness in this section refer to these adapted notions. 

\begin{definition}[\bf{Deterrent Payment Scheme}]
A \emph{deterrent payment scheme} with $m$ verifications is a function $P : T^n \times ([n]\times \{0,1\})^m \rightarrow \R^n$. 
\end{definition}
Intuitively, the first input to the payment scheme is a vector of reported types; the second input to the payment scheme is a set of $m$ pairs, where a pair $(i,1)$ indicates that player $i$ is lying, and $(i,0)$ indicates that player $i$ is not lying. 
In general, this input does not provide the true types of the players, so even if the verification includes all players, the mechanism still will not know the vector of true types.

We will show that the deterrent payment scheme can be added to a persistent approximately Bayes-Nash truthful mechanism to make the resulting mechanism Bayes-Nash truthful. 
In particular, we will show that truthfulness is a \emph{weakly persistent Bayes-Nash equilibrium}. 
Weakly persistent Bayes-Nash equilibrium is a weakening of the persistent Bayes-Nash equilibrium.
Instead of guaranteeing that no individual gains by being in a coalition, it guarantees that any coalition as a whole cannot gain by deviating from the equilibrium. 
This means there are no side-payments that can be made between members in a coalition such that each member gains by being in the coalition.
We make the notion precise in the following definition.

\begin{definition}[{\bf Weakly $(k,r)$-persistent Bayes-Nash truthful mechanism with a deterrent payment scheme}]\label{def:weakPersistent}
A mechanism ${M}$ with deterrent payment scheme $P$ with $m$ verifications is weakly $(k,r)$-persistent Bayes-Nash truthful if for every $I \subseteq [n]$ with $|I| \leq k$, every possible announced types $\vec{t}'_I \in T^{|I|}$ for $I$, every coalition $C \subseteq [n] \setminus I$ with $|C| \leq r$, every true types $\vec{t}_C \in T^{|C|}$ for the coalition $C$, and every possible announced types $\vec{t}'_C \in T^{|C|}$ for the coalition $C$, we have 
\begin{align*}
&\sum_{i \in C}\E_{\vec{t}_J,X}[u_i(t_i,{M}(\vec{t}_C, \vec{t}'_I, \vec{t}_{J})) - P_i((\vec{t}_C, \vec{t}'_I, \vec{t}_{J}), X)] \\
\geq \ & \sum_{i \in C} \E_{\vec{t}_{J},X}[u_i(t_i,{M}(\vec{t}'_C, \vec{t}'_I, \vec{t}_{J})) - P_i((\vec{t}'_C, \vec{t}'_I, \vec{t}_{J}), X)],
\end{align*}
where $J = [n] \setminus (I \cup C)$, $\vec{t}_J \sim \T^{|J|}$, and $X$ is a random vector of $m$ pairs from $[n]\times \{0,1\}$, obtained by uniform-randomly selecting $m$ players $\ell$ without replacement from $[n]$ and including $(\ell,0)$ if player $j$ was truthful (the announced type $\hat{t}_\ell = t_\ell$), or including $(\ell,1)$ if player $\ell$ was not truthful. 
\end{definition}

Note that in Definition~\ref{def:weakPersistent}, truthfulness is defined with respect to uniformly sampling and verifying $m$ of the players.
This means that the deterrent payment scheme does not choose which players will be verified. 

We can define \emph{$k$-tolerant Bayes-Nash truthfulness} for mechanisms with deterrent payment schemes as weakly $(k,1)$-persistent Bayes-Nash truthfulness. 
The term ``weakly'' no longer applies when we consider only $k$-tolerance, which does not involve coalitions.

Next, we observe that by pairing a $\epsilon$-Bayes-Nash truthful mechanism with a simple deterrent payment scheme that imposes a fixed fine on any individual that is caught lying, we obtain a persistent Bayes-Nash truthful mechanism. 

\begin{theorem}\label{thm:deterrent}
Let $M$ be any $(k,r)$-persistent $\epsilon$-Bayes-Nash truthful mechanism, and let $P$ be a deterrent payment scheme with $m$ verifications defined by 
\begin{align*}
P_i( \vec{t}, V ) = 
\begin{cases}
0 \text{ if } (i,1) \notin V \\
d \text{ if } (i,1) \in V.
\end{cases} 
\end{align*}
Then $M$ with deterrent payment scheme $P$ is
\begin{description}
  \item[\quad (1)] weakly $(k,r)$-persistent Bayes-Nash truthful, if $\frac{m}{n} \cdot d \geq r\eps$,
  \item[\quad (2)]$k$-tolerant Bayes-Nash truthful, if $\frac{m}{n} \cdot d \geq \eps$.
\end{description}
\end{theorem}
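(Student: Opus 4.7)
The plan is to exploit the fact that uniform sampling of $m$ players without replacement from $n$ makes each player's verification probability exactly $\frac{m}{n}$. Hence in either setting a non-truthful player $i$ contributes expected fine $E[P_i] = \frac{m}{n}\cdot d$ while a truthful player contributes $E[P_i]=0$ deterministically. This converts the ``price of lying'' into the single quantity $\frac{m}{n} d$, which we will set against the $\eps$-margin coming from the approximate equilibrium of $M$.

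For part (2), I would fix $I\subseteq[n]$ with $|I|\leq k$, announced types $\vec{t}'_I$, a non-deviating player $i\notin I$, and types $t_i,t_i'$. Since $M$ is $(k,r)$-persistent $\eps$-Bayes-Nash truthful it is in particular $k$-tolerant $\eps$-Bayes-Nash truthful, so $\E_{\vec{t}_J}[u_i(t_i,M(t_i,\vec{t}'_I,\vec{t}_J))] \geq \E_{\vec{t}_J}[u_i(t_i,M(t_i',\vec{t}'_I,\vec{t}_J))]-\eps$. Under the truthful report $P_i=0$; under the lie $\E[P_i]=\frac{m}{n}d$. Adding $\frac{m}{n}d-\eps\geq 0$ to the right-hand side shows the $k$-tolerant inequality holds with slack $0$ for the quasilinear utility $\wt u_i=u_i-P_i$, giving $k$-tolerant Bayes-Nash truthfulness.

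For part (1), I would fix $I$, $\vec{t}'_I$, $C\subseteq[n]\setminus I$ with $|C|\leq r$, true types $\vec{t}_C$, and announced types $\vec{t}'_C$. If $\vec{t}'_C=\vec{t}_C$ there is nothing to check, so assume $L=\{i\in C:t_i'\neq t_i\}$ is nonempty. Under the truthful coalition report every $i\in C$ has $P_i=0$, while under the deviation $\E[P_i]=\frac{m}{n}d$ for each $i\in L$ and $0$ for the rest. So the quantity $\text{LHS}-\text{RHS}$ in Definition~\ref{def:weakPersistent} equals
\[
\sum_{i\in C}\Big(\E_{\vec{t}_J}[u_i(t_i,M(\vec{t}_C,\vec{t}'_I,\vec{t}_J))]-\E_{\vec{t}_J}[u_i(t_i,M(\vec{t}'_C,\vec{t}'_I,\vec{t}_J))]\Big)+|L|\cdot\frac{m}{n}d.
\]
Applying the $(k,r)$-persistent $\eps$-Bayes-Nash truthfulness of $M$ separately to each $i\in C$ (with the same $I,\vec{t}'_I,C,\vec{t}_C,\vec{t}'_C$) lower-bounds each of the $|C|$ utility-difference summands by $-\eps$, so the first sum is at least $-|C|\eps\geq -r\eps$. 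Since $|L|\geq 1$, the penalty term is at least $\frac{m}{n}d\geq r\eps$ by hypothesis, so the total is non-negative.

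The proof is essentially a bookkeeping exercise; the only mild obstacle is noticing that a single liar in the coalition already produces enough expected fine to dominate all $|C|\leq r$ potential $\eps$-gains, which is why $\frac{m}{n}d\geq r\eps$ (and not, e.g., $r^2\eps$ or anything stronger) is sufficient. The uniform-without-replacement sampling is essential here: it makes $\E[P_i]$ linear in the indicator ``$i$ lied'', so the aggregated fine is exactly $|L|\cdot \frac{m}{n}d$ and no joint-sampling correction is needed.
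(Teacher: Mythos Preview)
Your proposal is correct and follows essentially the same approach as the paper: both arguments bound the coalition's aggregate utility loss by $|C|\eps\leq r\eps$ via the $(k,r)$-persistent $\eps$-truthfulness of $M$, and then observe that a single liar in $C$ already incurs expected fine $\frac{m}{n}d\geq r\eps$, which suffices. The only cosmetic difference is that the paper proves part~(1) first and derives part~(2) as the special case $r=1$ of weak persistence, whereas you prove part~(2) directly from $k$-tolerant $\eps$-truthfulness; both routes are equally valid.
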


\begin{proof}
By definition of $(k,r)$-persistent $\epsilon$-Bayes-Nash truthfulness of $M$,
for every $I \subseteq [n]$ with $|I| \leq k$, every possible announced types $\vec{t}'_I \in T^{|I|}$ for $I$, every coalition $C \subseteq [n] \setminus I$ with $|C| \leq r$, every true types $\vec{t}_C \in T^{|C|}$ for the coalition $C$, every player $i \in C$ in the coalition, and every possible announced types $\vec{t}'_C \in T^{|C|}$ for the coalition $C$, we have 
\begin{align*}
\E_{\vec{t}_J}[u_i(t_i,M(\vec{t}_C, \vec{t}'_I, \vec{t}_{J}))] \geq \E_{\vec{t}_{J}}[u_i(t_i,M(\vec{t}'_C, \vec{t}'_I, \vec{t}_{J}))] - \epsilon,
\end{align*}
where $J = [n] \setminus (I \cup C)$ and $\vec{t}_J \sim \T^{|J|}$.

Note that the expected fine for any player $\ell$ in coalition $C$ for deviating is 
\begin{align*}
\E_{\vec{t}_J,X}[P_\ell((\vec{{t}}_C', \vec{t}'_I, \vec{t}_{J}),X)] = 
\begin{cases}
0 \text{ if } {t}_\ell' = t_\ell \\
\frac{m}{n} \cdot d \text{ if } {t}_\ell' \neq t_\ell.
\end{cases}
\end{align*}

We will first show part \textbf{(1)}. 
For any $\vec{t}'_C \neq \vec{t}_C$, at least one player $\ell$ in $C$ must be lying, and thus have an expected fine of $\frac{m}{n} \cdot d$.

Since $\frac{m}{n} \cdot d \geq |C|\eps$, we have 
\begin{align*}
& \sum_{i \in C}\E_{\vec{t}_J,X}[u_i(t_i,{M}(\vec{t}_C, \vec{t}'_I, \vec{t}_{J})) - P_i((\vec{t}_C, \vec{t}'_I, \vec{t}_{J}),X)] \\
= \ &  \sum_{i \in C}\E_{\vec{t}_J}[u_i(t_i,{M}(\vec{t}_C, \vec{t}'_I, \vec{t}_{J}))] \\
\geq \ &  \sum_{i \in C}\left(  \E_{\vec{t}_{J}}[u_i(t_i,{M}(\vec{t}'_C, \vec{t}'_I, \vec{t}_{J}))] - \eps\right)\\
\geq \ &  \sum_{i \in C} \E_{\vec{t}_{J},X}[u_i(t_i,{M}(\vec{t}'_C, \vec{t}'_I, \vec{t}_{J})) - P_i((\vec{t}_C, \vec{t}'_I, \vec{t}_{J}),X)].
\end{align*}
To see that \textbf{(2)} holds, note that by part \textbf{(1)}, ${M}$ with payment scheme $P$ is weakly $(k,1)$-persistent Bayes-Nash truthful, and hence $k$-tolerant Bayes-Nash truthful, if $\frac{m}{n} \cdot d \geq \eps$.
\end{proof}

We conclude this section with two examples that illustrate the use of a simple deterrent payment scheme with the mechanisms from Section~\ref{sec:mechs}.
\begin{example}[\bf{Facility Location}]
In Example~\ref{ex:facility}, starting with a $(k,r)$-persistent $\epsilon$-Bayes-Nash truthful mechanism, the mechanism designer can discourage lying by randomly sampling $m$ individuals, and imposing a fine on each individual caught lying about their residence. 
For instance, if $r=10$ and the fine is fixed at $10000 \times \epsilon$, then the mechanism need only verify $0.1\%$ of the players to ensure truthfulness.
\end{example}

\begin{example}[\bf{Golf Course vs. Low Cost Swimming Pool}]
In Example~\ref{ex:golf}, to discourage members from misreporting their income, the club will randomly select $m$ of the members and ask the member to provide his/her tax forms from the previous year. 
If the member is found to have lied about his/her income, the club will impose a fine on the member. 
\end{example}

\bibliographystyle{amsalpha}
\bibliography{MechanismDesignPrivacy}

\appendix
\renewcommand\thesection{Appendix \Alph{section}}

\section{Proofs of Basic Theorems about Privacy and Truthfulness}
\label{app:privacyTruthfulness}

We first state a group version of Bayesian differential privacy that protects the privacy of groups of at most $c$ players.

\begin{definition}[{\bf $(c,k,\eps,\delta)$-Bayesian group differential privacy}]
A mechanism $M: T^n \to S$ is \emph{$(c,k,\eps,\delta)$-Bayesian group differentially private} if for every group of players $C \subseteq [n]$ with $|C| \leq c$, every subset $I \subseteq [n] \setminus C$ of players with $|I| \leq k$, every pair of types $\vec{t}_C, \vec{t}_C' \in T^{|C|}$ for the players in $C$, and every $\vec{t}'_I \in T^{|I|}$, the following holds: Let $J = [n] \setminus (I \cup C)$ (the remaining players), and $\vec{t}_J \sim \T^{|J|}$; then, for every $Y \subseteq S$, we have
\begin{align*}
\Pr[M(\vec{t}_C, \vec{t}'_I, \vec{t}_J) \in Y] \leq e^\epsilon \cdot \Pr[M(\vec{t}_C', \vec{t}'_I, \vec{t}_J) \in Y] + \delta, 
\end{align*}
where the probabilities are over $\vec{t}_J \sim \T^{|J|}$.
\end{definition}

We now prove that Bayesian differential privacy implies group Bayesian differential privacy, but the parameters degrade with increasing group size.

\begin{theorem}[{\bf Bayesian differential privacy $\implies$ Group Bayesian differential privacy}]
\label{thm:BDP=>GBDP}
Let $M: T^n \to S$ be any mechanism that is $(k,\eps,\delta)$-Bayesian differentially private. Then, for every $1 \leq c \leq k+1$, $M$ is also $(c,k-c+1,c\eps,\frac{e^{c\eps}-1}{e^\eps-1}\delta)$-Bayesian group differentially private.
\end{theorem}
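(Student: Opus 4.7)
The plan is to prove this by a standard hybrid argument, changing the types of the players in the group $C$ one at a time and applying the single-player Bayesian differential privacy guarantee at each step.

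Fix $C \subseteq [n]$ with $|C| = c' \leq c$, along with $I$, $\vec{t}'_I$, $\vec{t}_C$, $\vec{t}'_C$, and $Y \subseteq S$ as in the definition of Bayesian group differential privacy. Enumerate the players in $C$ as $i_1,\ldots,i_{c'}$, and define hybrid inputs $\vec{h}^{(0)}, \vec{h}^{(1)}, \ldots, \vec{h}^{(c')}$ where $\vec{h}^{(j)}$ agrees with $\vec{t}'_C$ on the first $j$ players of $C$ and with $\vec{t}_C$ on the remaining $c'-j$ players (keeping $\vec{t}'_I$ fixed on $I$ and letting the remaining players' types come from $\vec{t}_J \sim \T^{|J|}$ with $J = [n] \setminus (I \cup C)$). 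Thus $\vec{h}^{(0)}$ is the ``$\vec{t}_C$'' input and $\vec{h}^{(c')}$ is the ``$\vec{t}'_C$'' input.

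The key step is to bound $\Pr[M(\vec{h}^{(j-1)}) \in Y] \leq e^\eps \Pr[M(\vec{h}^{(j)}) \in Y] + \delta$ for each $j \in \{1,\ldots,c'\}$. Going from $\vec{h}^{(j-1)}$ to $\vec{h}^{(j)}$ we change only the type of the single player $i_j$. To invoke the hypothesis that $M$ is $(k,\eps,\delta)$-Bayesian differentially private with player $i_j$ playing the role of ``$i$'', we absorb the other $c'-1$ players of $C$ (whose types are fixed constants in both hybrids) into the ``non-truthful'' set alongside $I$. The new effective non-truthful set has size at most $(k-c+1) + (c-1) = k$, so the hypothesis applies and yields the desired one-step bound.

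Composing the $c'$ inequalities by repeated substitution gives
\begin{align*}
\Pr[M(\vec{h}^{(0)}) \in Y] \leq e^{c'\eps} \Pr[M(\vec{h}^{(c')}) \in Y] + \bigl(1 + e^\eps + e^{2\eps} + \cdots + e^{(c'-1)\eps}\bigr)\delta,
\end{align*}
and summing the geometric series gives the additive term $\frac{e^{c'\eps}-1}{e^\eps-1}\delta \leq \frac{e^{c\eps}-1}{e^\eps-1}\delta$. This is exactly $(c,k-c+1,c\eps,\frac{e^{c\eps}-1}{e^\eps-1}\delta)$-Bayesian group differential privacy. There is no serious obstacle here; the only point requiring care is the bookkeeping that shows the effective ``$I$'' used at each hybrid step still has size at most $k$, which is precisely where the tradeoff $k \to k-c+1$ in the statement comes from.
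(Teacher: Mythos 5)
Your proof is correct and follows exactly the same hybrid argument as the paper: changing one member of $C$ at a time, absorbing the other $c-1$ members of $C$ into the non-truthful set (which is why the budget drops from $k$ to $k-c+1$), and composing to get the geometric series $\frac{e^{c\eps}-1}{e^\eps-1}\delta$. No gaps.
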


Intuitively, if a mechanism is $(k,\eps,\delta)$-Bayesian differentially private, then when a player $i$ changes his/her announced type to something non-truthful, the output distribution of the mechanism changes very little, so player $i$'s utility also changes very little. This holds even if $k$ players are non-truthful, so the mechanism is $k$-tolerant approximate Bayes-Nash truthful. To obtain resilient and persistent approximate Bayes-Nash truthfulness, we show that $(k,\eps,\delta)$-Bayesian differential privacy implies a \emph{group} version of $(k,\eps,\delta)$-Bayesian differential privacy that protects the privacy of groups of players of a certain size (but the parameters $\eps$ and $\delta$ degrade with the group size). Group privacy would then imply that even if a coalition of players change their announced types, the output distribution of the mechanism changes very little, so the utility of each player in the coalition changes very little. Thus, the mechanism is persistent approximate Bayes-Nash truthful.

\begin{proof}
Let $1 \leq c \leq k+1$, let $C \subseteq [n]$ with $|C| \leq c$, let $I \subseteq [n] \setminus C$ with $|I| \leq k-c+1$, let $\vec{t}_C, \vec{t}_C' \in T^{|C|}$, let $\vec{t}'_{I} \in T^{|I|}$, and let $Y \subseteq S$. Let $J = [n] \setminus (I \cup C)$ and $\vec{t}_J \sim \T^{|J|}$. We need to show that
\begin{align*}
  \Pr[M(\vec{t}_C, \vec{t}'_{I}, \vec{t}_J) \in Y] \leq e^{c\epsilon} \Pr[M(\vec{t}_C', \vec{t}'_{I}, \vec{t}_J) \in Y] + \frac{e^{c\eps}-1}{e^\eps-1} \cdot \delta. \tag{1}
\end{align*}
Since $M$ is $(k,\eps,\delta)$-Bayesian differentially private, for every $i \in [n]$, every $I' \subseteq [n] \setminus \{i\}$ with $|I'| \leq k$, every $t_i, t_i' \in T$, and every $\vec{t}'_{I'} \in T^{|I'|}$, we have
\begin{align*}
  \Pr[M(t_i, \vec{t}'_{I'}, \vec{t}_{J'}) \in Y] \leq e^\epsilon \Pr[M(t_i', \vec{t}'_{I'}, \vec{t}_J) \in Y] + \delta, \tag{2}
\end{align*}
where $J' = [n] \setminus (I' \cup \{i\})$ and $\vec{t}_{J'} \sim \T^{|J'|}$.

Suppose $C = \{i_1, \ldots, i_{|C|}\}$, $\vec{t}_C = (t_{i_1}, \ldots, t_{i_{|C|}})$, and $\vec{t}_C' = (t_{i_1}', \ldots, t_{i_{|C|}}')$. Let $\vec{t}_C^{(0)} = \vec{t}_C$ and $\vec{t}_C^{(|C|)} = \vec{t}_C'$, and more generally, for $j = 0, \ldots, |C|$, let 
\begin{align*}
\vec{t}_C^{(j)} = (t_{i_1}', \ldots, t_{i_j}', t_{i_{j+1}}, \ldots, t_{i_{|C|}}).
\end{align*}

\begin{claim}
For every $j \in \{0, \ldots, |C|-1\}$, 
\begin{align*}
\Pr[M(\vec{t}_C^{(j)}, \vec{t}'_{I}, \vec{t}_J) \in Y] \leq e^\epsilon \Pr[M(\vec{t}_C^{(j+1)}, \vec{t}'_{I}, \vec{t}_J) \in Y] + \delta. \tag{3}
\end{align*}
\end{claim}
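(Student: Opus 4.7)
The plan is to prove the claim as a direct, one-step application of the single-player Bayesian differential privacy guarantee (2). The whole purpose of the hybrid sequence $\vec{t}_C^{(0)}, \vec{t}_C^{(1)}, \ldots, \vec{t}_C^{(|C|)}$ is that adjacent hybrids $\vec{t}_C^{(j)}$ and $\vec{t}_C^{(j+1)}$ differ in exactly one coordinate --- namely, the $(j+1)$-th coordinate, corresponding to player $i_{j+1}$, which changes from $t_{i_{j+1}}$ to $t_{i_{j+1}}'$. All other coordinates of $C$ are held fixed across the two hybrids.

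Concretely, I would fix $j \in \{0, \ldots, |C|-1\}$ and invoke (2) with the following instantiation: the ``privacy player'' $i$ is $i_{j+1}$; the pair of types for $i$ is $t_{i_{j+1}}, t_{i_{j+1}}'$; the set of fixed non-truthful players is $I' = I \cup (C \setminus \{i_{j+1}\})$; and the corresponding fixed announced types $\vec{t}'_{I'}$ are obtained by concatenating $\vec{t}'_I$ with the $|C|-1$ coordinates $(t_{i_1}', \ldots, t_{i_j}', t_{i_{j+2}}, \ldots, t_{i_{|C|}})$ that appear in both $\vec{t}_C^{(j)}$ and $\vec{t}_C^{(j+1)}$. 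With this choice, $M(\vec{t}_C^{(j)}, \vec{t}'_I, \vec{t}_J)$ and $M(\vec{t}_C^{(j+1)}, \vec{t}'_I, \vec{t}_J)$ become exactly $M(t_{i_{j+1}}, \vec{t}'_{I'}, \vec{t}_{J'})$ and $M(t_{i_{j+1}}', \vec{t}'_{I'}, \vec{t}_{J'})$, respectively, after merely relabeling coordinates.

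The two bookkeeping checks are: (a) $|I'| \leq k$, and (b) $J' := [n] \setminus (I' \cup \{i_{j+1}\}) = J$. For (a), since $|I| \leq k-c+1$ and $|C \setminus \{i_{j+1}\}| \leq c-1$, we get $|I'| \leq (k-c+1) + (c-1) = k$, so the hypothesis of (2) is met. For (b), we compute $I' \cup \{i_{j+1}\} = I \cup C$, hence $J' = [n] \setminus (I \cup C) = J$, so the distribution of the remaining players' truthful types is identical in the two formulations. Applying (2) then yields inequality (3) immediately.

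There is no serious obstacle here; this claim is essentially the standard hybrid-argument step used to convert single-element differential privacy into group differential privacy, and everything hinges on the accounting that the enlarged ``non-truthful'' set $I'$ still has size at most $k$. Once (3) is established, the theorem follows by chaining (3) across $j = 0, 1, \ldots, |C|-1$: each step multiplies the probability ratio by $e^\eps$ and adds $\delta$, so after $|C| \leq c$ steps the multiplicative factor is at most $e^{c\eps}$ and the cumulative additive error is at most $(1 + e^\eps + \cdots + e^{(c-1)\eps})\delta = \frac{e^{c\eps}-1}{e^\eps-1}\delta$, matching (1).
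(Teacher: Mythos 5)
Your proof is correct and follows exactly the paper's argument: instantiate the single-player definition with $i = i_{j+1}$, $I' = I \cup (C \setminus \{i_{j+1}\})$, and the shared hybrid coordinates as $\vec{t}'_{I'}$, then observe that $J' = J$. The explicit bookkeeping check that $|I'| \leq (k-c+1) + (c-1) = k$ is a nice touch that the paper leaves implicit.
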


\begin{proof}[Proof of claim]
Fix $j \in \{0, \ldots, |C|-1\}$. Let $i = i_{j+1}$, $I' = (C \setminus \{i\}) \cup I$, $t_i = t_{i_{j+1}}$, $t_i' = t_{i_{j+1}}'$, and $\vec{t}_{I'}' = ((\vec{t}_C^{(j)})_{-i}, \vec{t}_{I}')$. Let $J' = [n] \setminus (I' \cup \{i\})$ and $\vec{t}_{J'} \sim \T^{|J'|}$. Then, by (2), we have
\begin{align*}
  \Pr[M(t_i, \vec{t}'_{I'}, \vec{t}_{J'}) \in Y] \leq e^\epsilon \Pr[M(t_i', \vec{t}'_{I'}, \vec{t}_{J'}) \in Y] + \delta, 
\end{align*}
which is equivalent to 
\begin{align*}
  \Pr[M(\vec{t}_C^{(j)}, \vec{t}'_{I}, \vec{t}_J) \in Y] \leq e^\epsilon \Pr[M(\vec{t}_C^{(j+1)}, \vec{t}'_{I}, \vec{t}_J) \in Y] + \delta,
\end{align*}
as required. 
\end{proof}

Using the above claim repeatedly starting with $j = 0$, we get
\begin{align*}
\Pr[M(\vec{t}_C^{(0)}, \vec{t}'_{I}, \vec{t}_J) \in Y] \leq e^{|C| \epsilon} \Pr[M(\vec{t}_C^{(|C|)}, \vec{t}'_{I}, \vec{t}_J) \in Y] + \frac{e^{|C|\eps}-1}{e^\eps-1} \cdot \delta,
\end{align*}
which yields (1) since $|C| \leq c$, as required.
\end{proof}

We now prove Theorem \ref{thm:BDP=>Truthfulness}, which we restate here for convenient reference.

\begin{reptheorem}{thm:BDP=>Truthfulness}[\bf Bayesian differential privacy $\implies$ Persistent approximate truthfulness]
Suppose the utility functions are bounded by $\alpha > 0$, i.e., the utility function for each player $i$ is $u_i: T \times S \to [-\alpha,\alpha]$. Let $M$ be any mechanism that is $(k,\eps,\delta)$-Bayesian differentially private. Then, $M$ satisfies the following properties:
\begin{enumerate}
  \item $M$ is $k$-tolerant $(\eps + 2\delta)(2\alpha)$-Bayes-Nash truthful.
  \item For every $1 \leq r \leq k+1$, $M$ is $r$-resilient $(r\eps + 2r\delta)(2\alpha)$-Bayes-Nash truthful.
  \item For every $1 \leq r \leq k+1$, $M$ is $(k-r+1,r)$-persistent $(r\eps + 2r\delta)(2\alpha)$-Bayes-Nash truthful.
\end{enumerate}
\end{reptheorem}

\begin{proof}
Property 1 follows from Property 3 by setting $r = 1$ in Property 3. Property 2 follows immediately from Property 3. Thus, it suffices to only show Property 3. 

Let $1 \leq r \leq k+1$. We first note that if $r \eps \geq 1$, then Property 3 immediately holds, since the utility functions are bounded by $\alpha$. Thus, we now assume that $r \eps < 1$. By Theorem \ref{thm:BDP=>GBDP}, since $M$ is $(k,\eps,\delta)$-Bayesian differentially private, $M$ is also $(r,k-r+1,r\eps,\frac{e^{r\eps}-1}{e^\eps-1}\delta)$-Bayesian group differentially private. One can easily verify that $e^{r\eps} - 1 \leq 2 r\eps$ for $r \eps < 1$, so $\frac{e^{r\eps}-1}{e^\eps-1} \leq \frac{2r\eps}{\eps} = 2r$. Thus, $M$ is $(r,k-r+1,r\eps,2r\delta)$-Bayesian group differentially private. Then, for every $C \subseteq [n]$ with $|C| \leq r$, every $I \subseteq [n] \setminus C$ with $|I| \leq k-r+1$, every $\vec{t}_C, \vec{t}_C' \in T^{|C|}$, every $\vec{t}'_I \in T^{|I|}$, and every $Y \subseteq S$, we have
\begin{align*}
\Pr[M(\vec{t}_C, \vec{t}'_I, \vec{t}_J) \in Y] \leq e^{r\epsilon} \Pr[M(\vec{t}_C', \vec{t}'_I, \vec{t}_J) \in Y] + 2r\delta, \tag{1}
\end{align*}
where $J = [n] \setminus (I \cup C)$ and $\vec{t}_J \sim \T^{|J|}$.

Let $I \subseteq [n]$ with $|I| \leq k-r+1$, let $\vec{t}'_I \in T^{|I|}$, let $C \subseteq [n] \setminus I$ with $|C| \leq r$, let $\vec{t}_C \in T^{|C|}$, let $i \in C$, and let $\vec{t}'_C \in T^{|C|}$. Let $J = [n] \setminus (I \cup C)$ and $\vec{t}_J \sim \T^{|J|}$. We need to show that
\begin{align*}
\E_{\vec{t}_{J}}[u_i(t_i,M(\vec{t}'_C, \vec{t}'_I, \vec{t}_{J}))] \leq \E_{\vec{t}_J}[u_i(t_i,M(\vec{t}_C, \vec{t}'_I, \vec{t}_{J}))] + (r\epsilon + 2r\delta)(2\alpha). \tag{2}
\end{align*}
Following \cite{DRV10}, given two discrete random variables $X$ and $Y$, let
\begin{align*}
D_{\infty}^{\delta'}(X||Y) := \max_{W \subseteq Supp(X) \text{ s.t. } Pr[X \in W] > \delta'} \ln\left(\frac{\Pr[X \in W]-\delta'}{\Pr[Y \in W]}\right)
\end{align*}
and
\begin{align*}
\Delta(X,Y) := \ & \max_{W \subseteq Supp(X) \cup Supp(Y)} |\Pr[X \in W] - \Pr[Y \in W]| \\
= \ & \frac{1}{2} \sum_{w \in Supp(X) \cup Supp(Y)} |\Pr[X = w] - \Pr[Y = w]|.
\end{align*}
The first distance measure is known as $\delta'$-approximate max-divergence (e.g., see \cite{DRV10}) and the second distance measure is known as statistical distance (also known as statistical difference or total variation distance). It is known that if $D_{\infty}^{\delta'}(X||Y) \leq \eps'$ and $D_{\infty}^{\delta'}(Y||X) \leq \eps'$, then there exists a random variable $Z$ with support $Supp(X) \cup Supp(Y)$ such that $\Delta(X,Z) \leq \delta'$, and $D_{\infty}^{0}(Z||Y) \leq \eps'$ and $D_{\infty}^{0}(Y||Z) \leq \eps'$ (see \cite{DRV10}).

Let $X = M(\vec{t}_C, \vec{t}'_I, \vec{t}_J)$ and $X' = M(\vec{t}_C', \vec{t}'_I, \vec{t}_J)$. Then, by (1) and the symmetry of $\vec{t}_C$ and $\vec{t}_C'$, we have $D_{\infty}^{2r\delta}(X||X') \leq r \eps$ and $D_{\infty}^{2r\delta}(X'||X) \leq r \eps$. Then, there exists a random variable $Z$ such that $\Delta(X',Z) \leq 2r\delta$ and $D_{\infty}^{0}(Z||X) \leq r \eps$. From $D_{\infty}^{0}(Z||X) \leq r \eps$, we have that for every $s \in S$,
\begin{align*}
\Pr[Z = s] \leq e^{r\epsilon} \Pr[X = s]. \tag{3}
\end{align*}

\begin{claim}
$\E[u_i(t_i,X')] \leq \E[u_i(t_i,Z)] + 4r\delta\alpha$
\end{claim}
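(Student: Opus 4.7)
The plan is to derive the claim from the fact that the expectation of a bounded function is Lipschitz with respect to statistical distance. Specifically, I will use the standard inequality that for any random variables $Y_1, Y_2$ taking values in $S$ and any function $f: S \to [-\alpha, \alpha]$, one has $|\E[f(Y_1)] - \E[f(Y_2)]| \leq 2\alpha \cdot \Delta(Y_1, Y_2)$.

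First I would write the difference of expectations as a sum over $s \in S$:
\begin{align*}
\E[u_i(t_i, X')] - \E[u_i(t_i, Z)] = \sum_{s \in S} u_i(t_i, s) \cdot \bigl(\Pr[X' = s] - \Pr[Z = s]\bigr).
\end{align*}
Taking absolute values and using the bound $|u_i(t_i, s)| \leq \alpha$ yields
\begin{align*}
\bigl|\E[u_i(t_i, X')] - \E[u_i(t_i, Z)]\bigr| \leq \alpha \sum_{s \in S} \bigl|\Pr[X' = s] - \Pr[Z = s]\bigr| = 2\alpha \cdot \Delta(X', Z),
\end{align*}
where the final equality is the standard identity relating the $\ell_1$ distance between pmfs and the statistical distance.

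Since the random variable $Z$ was chosen so that $\Delta(X', Z) \leq 2r\delta$, substituting gives
\begin{align*}
\bigl|\E[u_i(t_i, X')] - \E[u_i(t_i, Z)]\bigr| \leq 2\alpha \cdot 2r\delta = 4r\delta\alpha,
\end{align*}
which implies the desired one-sided inequality $\E[u_i(t_i, X')] \leq \E[u_i(t_i, Z)] + 4r\delta\alpha$.

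There is essentially no obstacle here: the claim is a routine consequence of the definition of statistical distance together with the boundedness of the utility function, and all the heavy lifting (the existence of the coupling random variable $Z$ with the stated closeness guarantees) was already extracted from the group privacy result and the lemma from \cite{DRV10} in the lines preceding the claim. After this claim, the proof will presumably combine it with the max-divergence bound $\Pr[Z = s] \leq e^{r\eps} \Pr[X = s]$ (equation (3)) to relate $\E[u_i(t_i, Z)]$ to $\E[u_i(t_i, X)]$ up to an additive factor of $r\eps \cdot 2\alpha$, and then chain the two inequalities to obtain the target (2).
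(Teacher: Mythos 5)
Your proof is correct and follows essentially the same line of reasoning as the paper: expand the difference of expectations as a sum over $s \in S$, bound $|u_i(t_i,s)|$ by $\alpha$, recognize the remaining $\ell_1$ sum as $2\Delta(X',Z)$, and invoke the bound $\Delta(X',Z)\leq 2r\delta$ supplied by the coupling lemma from \cite{DRV10}.
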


\begin{proof}[Proof of claim]
Observe that
\begin{align*}
|\E[u_i(t_i,X')] - \E[u_i(t_i,Z)]| 
= \ & |\sum_{s \in S} \Pr[X'=s] \cdot u_i(t_i,s) - \sum_{s \in S} \Pr[Z=s] \cdot u_i(t_i,s)]| \\
\leq \ & \sum_{s \in S} |\Pr[X'=s] - \Pr[Z=s]| \cdot |u_i(t_i,s)| \\
\leq \ & (2\Delta(X',Z)) \cdot \alpha \\
\leq \ & 4r\delta\alpha.
\end{align*}
\end{proof}

\begin{claim}
$\E[u_i(t_i,Z)] \leq e^{r\eps} \cdot \E[u_i(t_i,X)]$
\end{claim}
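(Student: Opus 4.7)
The plan is to obtain the claim as a direct consequence of the pointwise bound $(3)$, $\Pr[Z = s] \le e^{r\eps}\Pr[X = s]$ for each $s \in S$. Multiplying this through by $u_i(t_i, s)$ and summing over $s$, I would write
\begin{align*}
\E[u_i(t_i, Z)] \;=\; \sum_{s \in S} u_i(t_i, s)\,\Pr[Z = s] \;\le\; \sum_{s \in S} u_i(t_i, s)\cdot e^{r\eps}\Pr[X = s] \;=\; e^{r\eps}\,\E[u_i(t_i, X)],
\end{align*}
which is exactly the desired inequality.

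The delicate step is the middle inequality: multiplication by $u_i(t_i, s)$ preserves the direction of the pointwise bound only when $u_i(t_i, s) \ge 0$, and $u_i$ is only guaranteed to lie in $[-\alpha, \alpha]$. To handle this I would first reduce to the non-negative case via a uniform shift, replacing $u_i$ throughout the proof by $\tilde u_i := u_i + \alpha \in [0, 2\alpha]$. The distributions of $X$, $X'$, and $Z$ are determined by the mechanism and the reported types, so the shift leaves them unchanged; the outer conclusion $(2)$ being targeted is a \emph{difference} of expected utilities, in which the constant $\alpha$ cancels on both sides. Thus proving the claim for $\tilde u_i$ is no weaker than proving it for $u_i$, and for $\tilde u_i$ the pointwise multiplication above goes through without any sign worry.

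The main (and essentially only) obstacle will be bookkeeping around the shift: I would need to verify that the previous Claim, whose proof bounds $|u_i(t_i, s)| \le \alpha$ inside a statistical-distance estimate, remains valid when $u_i$ is replaced by $\tilde u_i$ (the magnitude bound $|\tilde u_i| \le 2\alpha$ is absorbed into its constant). Granted this harmless rescaling, the present claim follows in a single line from $(3)$ as displayed above.
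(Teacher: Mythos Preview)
Your core computation is identical to the paper's one-line proof: expand the expectation, apply $(3)$ termwise, and collapse. You are right, however, to flag the sign issue --- the paper simply writes
\[
\sum_{s\in S}\Pr[Z=s]\,u_i(t_i,s)\ \le\ \sum_{s\in S}e^{r\eps}\Pr[X=s]\,u_i(t_i,s)
\]
without comment, which is only valid when $u_i(t_i,s)\ge 0$. Since $u_i$ ranges over $[-\alpha,\alpha]$, the claim as literally stated can fail (take $u_i\equiv -\alpha$), so the paper's proof is formally incomplete and your observation is a genuine catch.

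Your shift $\tilde u_i = u_i+\alpha$ is the standard repair and works for the downstream goal $(2)$, but be precise about what it buys: it does \emph{not} recover the claim for the original $u_i$; rather it proves the claim for $\tilde u_i$, which then feeds into the final chain of inequalities. Two bookkeeping points: (i) the previous claim $\E[u_i(t_i,X')]\le \E[u_i(t_i,Z)]+4r\delta\alpha$ transfers to $\tilde u_i$ with the \emph{same} constant (both sides shift by $\alpha$), so you need not loosen it; (ii) in the last step of the outer argument one now bounds $(e^{r\eps}-1)\E[\tilde u_i(t_i,X)]\le (e^{r\eps}-1)\cdot 2\alpha$ rather than $(e^{r\eps}-1)\cdot\alpha$, so the final slack becomes $(2r\eps+2r\delta)(2\alpha)$ instead of the paper's $(r\eps+2r\delta)(2\alpha)$. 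That factor-$2$ loss in the $\eps$ term is harmless for the theorem's qualitative content, but your phrase ``no weaker than proving it for $u_i$'' should be read as ``suffices for $(2)$ up to constants,'' not as literal equivalence of the two claims.
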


\begin{proof}[Proof of claim]
Observe that
\begin{align*}
\E[u_i(t_i,Z)]
= \sum_{s \in S} \Pr[Z=s] \cdot u_i(t_i,s) 
\leq \sum_{s \in S} e^{r\eps} \Pr[X=s] \cdot u_i(t_i,s)
= e^{r\eps} \cdot \E[u_i(t_i,X)],
\end{align*}
where the inequality follows from (3).
\end{proof}

Now, using the above two claims, we have
\begin{align*}
\E[u_i(t_i,M(\vec{t}_C',\vec{t}_I',\vec{t}_J))] 
= \ & \E[u_i(t_i,X')] \\
\leq \ & \E[u_i(t_i,Z)] + 4r\delta\alpha \\
\leq \ & e^{r\eps} \cdot \E[u_i(t_i,X)] + 4r\delta\alpha \\
= \ & e^{r\eps} \cdot \E[u_i(t_i,M(\vec{t}_C,\vec{t}_I',\vec{t}_J))] + 4r\delta\alpha \\
= \ & \E[u_i(t_i,M(\vec{t}_C,\vec{t}_I',\vec{t}_J))] + (e^{r\eps}-1) \cdot \E[u_i(t_i,M(\vec{t}_C,\vec{t}_I',\vec{t}_J))] + 4r\delta\alpha \\
\leq \ & \E[u_i(t_i,M(\vec{t}_C,\vec{t}_I',\vec{t}_J))] + 2r\eps\alpha + 4r\delta\alpha \\
= \ & \E[u_i(t_i,M(\vec{t}_C,\vec{t}_I',\vec{t}_J))] + (r\eps + 2r\delta)(2\alpha).
\end{align*}
where the last inequality follows from the fact that $e^{r\eps}-1 \leq 2r\eps$ for $r\eps < 1$. This completes the proof.
\end{proof}

\end{document}